\documentclass[12pt,draftcls,peerreview, onecolumn]{IEEEtran}
\IEEEoverridecommandlockouts  
%\overrideIEEEmargins
\usepackage{times}
%%%%%%%%%%%%%%%%%%%%%%%%%%%%%%%%%%%%%
%        macros.tex                  %
%%%%%%%%%%%%%%%%%%%%%%%%%%%%%%%%%%%%%%

%From PhD thesis
\usepackage{amsmath, amssymb, bbm, xspace}
\usepackage{epsfig}
\usepackage{longtable}
\usepackage{color}
\usepackage{mathrsfs}
\usepackage{subfigure}

\usepackage{courier}
%\usepackage{courier}
% \usepackage[T1]{fontenc}
% \usepackage[sc]{mathpazo}
% \linespread{1.05}         % Palatino needs more leading (space between lines)

% \usepackage{titlesec}
% \titlelabel{\qed \thetitle \quad}
% \titleformat*{\section}{\itshape}

% This abstract environment is a little narrower
% than the default.  Note the capital A...
% \newenvironment{Abstract}
%   {\begin{center}
%     \textbf{Abstract} \\
%     \medskip
%   \begin{minipage}{4.8in}}
%   {\end{minipage}
% \end{center}}

\newtheorem{theorem}{Theorem}[section]

\newtheorem{lemma}{Lemma}[section]
\newtheorem{claim}[theorem]{Claim}
\newtheorem{proposition}[theorem]{Proposition}

%\newcounter{theorem}[chapter]
% \renewcommand{\thetheorem}{\thechapter.\arabic{theorem}}
% 
% \renewenvironment{theorem}{\refstepcounter{theorem} {\ \\[2ex]}{\bf \large \textsf{Theorem \thetheorem}} \newline \it}{}

\def\bkE{{\rm I\kern-.17em E}}
\def\bk1{{\rm 1\kern-.17em l}}
\def\bkD{{\rm I\kern-.17em D}}
\def\bkR{{\rm I\kern-.17em R}}
\def\bkP{{\rm I\kern-.17em P}}

\def\bkZ{{\bf{Z}}}

\def\bkE{{\rm I\kern-.17em E}}
\def\bk1{{\rm 1\kern-.17em l}}
\def\bkD{{\rm I\kern-.17em D}}
\def\bkR{{\rm I\kern-.17em R}}
\def\bkP{{\rm I\kern-.17em P}}

 %%% i.e.,

\def\bkZ{{\bf{Z}}}
\def\b12{(\beta_1,\beta_2)}

\newcounter{example}
\renewcommand{\theexample}{\thesection.\arabic{example}}

\newcounter{remark}
\renewcommand{\theremark}{\thesection.\arabic{remark}}

\newenvironment{remark}{{\noindent \it Remark: }}{\hfill $\square$}

\def\Ebb{\mathbb{E}}
\newlength{\noteWidth}
\setlength{\noteWidth}{.75in}
\long\def\notes#1{\ifinner
{\tiny #1}
\else
\marginpar{\parbox[t]{\noteWidth}{\raggedright\tiny #1}}
\fi\typeout{#1}}

 \def\notes#1{\typeout{read notes: #1}} %uncomment for final version

\def\mi{^{-i}}
%\input{/home/ankur/latexfiles/algorithm2e.sty}
%  \def\proof{\penalty 25\noindent{\bf Proof.}\nobreak\hskip 1em\ignorespaces}

                 % math mode

 %%% i.e.,
 %%% e.g.,
 %%% e.g., Gill \etal (1986)

% Scientific Notation (E)

% Miscellaneous items
%\renewcommand{\vec}[1]{\ensuremath{\mathbf{#1}}}  %requires amsmath package

\newcommand{\Real}{\ensuremath{\mathbbm{R}}}

\def\Nbb{{\mathbb{N}}}

\def\diag{\mathop{\hbox{\rm diag}}}

\def\spose#1{\hbox to 0pt{#1\hss}}

\def\text #1{\hbox{\quad#1\quad}}

%%% Fixed-size glue, only for math mode

    %           thin space
    %         medium space
    %          thick space
\def\nthinsp{\mskip -2   mu}

%%% Specially lowered subscripts (see \sub above)

%%% Subscripts

%%% Superscript stars

\def\superstar{^{\raise 0.5pt\hbox{$\nthinsp *$}}}
\def\SUPERSTAR{^{\raise 0.5pt\hbox{$*$}}}
  %%% Less space after \xstar if followed by , or .

\def\lamstarT {\lambda^{\raise 0.5pt\hbox{$\nthinsp *$}T}}

%%% bars, hats, tildes  (\skew4 is the default)

%%%%%%%%%%%%%%%%%%%%%%%%%%%%%%%%%%%%%
%  Math italic and calligraphic fonts
%%%%%%%%%%%%%%%%%%%%%%%%%%%%%%%%%%%%%

\def\Jscr{{\cal J}}

\def\Mscr{{\cal M}}

\def\Pscr{{\cal P}}
\def\Qscr{{\cal Q}}

\def\Mscr{{\cal M}}
\def\Nscr{{\cal N}}

\let\forallnew\forall
\renewcommand{\forall}{\forallnew\ }
\let\forall\forallnew

		\def\bkE{{\rm I\kern-.17em E}}
		\def\bk1{{\rm 1\kern-.17em l}}
		\def\bkD{{\rm I\kern-.17em D}}
		\def\bkR{{\rm I\kern-.17em R}}
		\def\bkP{{\rm I\kern-.17em P}}
		\def\bkY{{\bf \kern-.17em Y}}
		\def\bkZ{{\bf \kern-.17em Z}}
		\def\bkC{{\bf  \kern-.17em C}}
%		\usepackage[ruled,vlined,boxed]{algorithm2e}
		%\setlength{\baselineskip}{24pt}
		%\setlength{\topmargin}     { 0.0in}    % One inch top margins
		%\setlength{\headsep}       {24.0pt}    % with Header
		%\setlength{\textheight}    { 8.5in}
		%\setlength{\textwidth}     { 6.5in}    % One inch side margins
		%\setlength{\oddsidemargin} { 0.0in}
		%\setlength{\evensidemargin}{ 0.0in}
		%\setlength{\marginparsep}  { 0.0pt}    % No Margin notes
		%\setlength{\marginparwidth}{ 0.0pt}
		%\pagestyle{headings}

%%%%%%%%%%%%%%%%%%%%%
%Colours
%%%%%%%%%%%%%%%%%%%%%
%\definecolor{violet}{rgb}{.5,0.0,.80}
%\def\violet{{ \color{violet} }}
%\definecolor{dgreen}{rgb}{0,0.5,0.0}

%
{\begin{list}{}%
         {\setlength{\leftmargin}{#1}}%
         \item[]%
}
{\end{list}}

		% -- Environments --
		\def\bsp{\begin{split}}
		\def\beq{\begin{eqnarray}}
		\def\bal{\begin{align*}}
		\def\bc{\begin{center}}
		\def\be{\begin{enumerate}}
		\def\bi{\begin{itemize}}
		\def\bs{\begin{small}}
		\def\bS{\begin{slide}}
		\def\ec{\end{center}}
		\def\ee{\end{enumerate}}
		\def\ei{\end{itemize}}
		\def\es{\end{small}}
		\def\eS{\end{slide}}
		\def\eeq{\end{eqnarray}}
		\def\eal{\end{align*}}
		\def\esp{\end{split}}
		  %width4.17pt depth0pt} 

	\def\cp2problem#1#2#3#4{\fbox
		 {\begin{tabular*}{0.9\textwidth}
			{@{}l@{\extracolsep{\fill}}l@{\extracolsep{6pt}}l@{\extracolsep{\fill}}c@{}}
				#1 & & $#4 $ 
			\end{tabular*}}}

		\def\bkE{{\rm I\kern-.17em E}}
		\def\bk1{{\rm 1\kern-.17em l}}
		\def\bkD{{\rm I\kern-.17em D}}
		\def\bkR{{\rm I\kern-.17em R}}
		\def\bkP{{\rm I\kern-.17em P}}
		
		\def\bkZ{{\bf{Z}}}

\newcommand {\beeq}[1]{\begin{equation}\label{#1}}
\newcommand {\eeeq}{\end{equation}}
\newcommand {\bea}{\begin{eqnarray}}
\newcommand {\eea}{\end{eqnarray}}

\def\texitem#1{\par\smallskip\noindent\hangindent 25pt
               \hbox to 25pt {\hss #1 ~}\ignorespaces}

%\def\example{{\bf Example :\ }}

%%% Local Variables: 
%%% mode: latex
%%% TeX-master: "report"
%%% End: 

%\input{macros.tex}
%\usepackage[affil-it]{authblk}
%\usepackage{anysize}
\usepackage{epsfig}
\usepackage{cite,url}
\usepackage{graphicx,subfigure,color}
\usepackage{amsmath,amsfonts,amssymb,mathrsfs,wasysym,stackrel}
\usepackage{float}
\usepackage{xspace}
\usepackage{multirow,tabularx}
\usepackage{booktabs}
\usepackage{tikz}

%%% user defined commands
%%%%%%%%%%%%%%%%%%%%%%%%%%%bracket

\newcommand{\remove}[1]{}

%\newtheorem{claim}{Claim}
% \newtheorem{ex}{Example}
%%%%%%%%%%%%%%%%%%%%%%%%%%%%%%%%

% Title Page

\title{ Role of  Externally Provided Randomness in Stochastic Teams and Zero-sum Team Games}
	%Coordination, Consultation: Externally Provided Randomness in Stochastic Teams and Zero-sum Team Games}

%\author{ Rahul H. Meshram\textsuperscript{1}, Ankur Kulkarni\textsuperscript{2} \\
%   {\thanks {\textsuperscript{1} Work of
%    Rahul Meshram  was done at Bharti Centre for
%    Communication at IIT Bombay. 
%    \textsuperscript{2} Work of
%    Rahul Meshram  was done at Bharti Centre for
%    Communication at IIT Bombay. } } }

\author{ Rahul Meshram
	\thanks{Rahul Meshram is  with  the Electronics and Communication Engineering Department, 
%Ankur is with the Systems and Control Engineering group 
 Indian Institute of Information Technology Allahabad.  
Email-\texttt{rhmeshram@iiita.ac.in} 
%and \texttt{kulkarni.ankur@iitb.ac.in}, respectively. 
   } 
%and Ankur A. Kulkarni
}
    
\begin{document}

\maketitle

\thispagestyle{empty}
\pagestyle{empty}

\begin{abstract}
Stochastic team decision problem is extensively studied in literature and the existence of optimal solution is obtained in recent literature. The value of information in statistical problem and decision theory is classical problem. Much of earlier does not qualitatively describe role of externally provided private and common randomness in stochastic team problem and team vs team zero sum game. 

In this paper, we study the role of extrenally provided private or common randomness in stochastic team decision. We make observation that the randomness independent of environment does not benefit either team but randomness dependent on environment benefit teams and decreases the expected cost function. We also studied LQG team game with special information structure on private or common randomness. 
We extend these study to problem team vs team zero sum game. We show that if a game admits saddle point solution, then private or common randomness independent of environment does not benefit either team. We also analyze the scenario when a team with having more information than other team which is dependent on environment and game has saddle point solution, then team with more information benefits. This is also illustrated numerically for LQG team vs team zero sum game. Finally, we show for discrete team vs team zero sum game that private randomness independent of environment benefits team when there is no saddle point condition. Role of common randomness is discussed for discrete game.      
\end{abstract}

%\begin{keywords}

%\end{keywords}

\section{Introduction}
\label{sec:intro}
A team decision problem consists of two or more of 
decision makers (DMs) or players that make decisions in a random environment where the \textit{information} of each DM is a (possibly partial) observation about the random environment. A DM takes an action as a function of the information; this function is referred to the as the \textit{decision rule}. DMs choose the decision rule to jointly minimize an expected \textit{cost}. 

If the decision makers had identical observations, then the multiple decision makers could be clumped together as a single decision maker and the problem reduces to that of a stochastic optimization or control problem. Of interest to us here is the case where information is asymmetric, whereby there is no obvious method of aggregating it. A team decision problem is in essence a \textit{decentralized} stochastic optimal control problem. Problems with structure appear in a variety of settings for example in sensor networks. The decision makers could be sensors situated at different locations. These sensors observe the environment through different, possibly imperfect, channels and under this information structure, the sensors have to act collectively to minimize a certain cost function. 

In this paper we consider the role of \textit{externally provided private and common randomness} in stochastic teams and in stochastic team v/s team zero-sum games. In the setup described above, it is conceivable that an external source provides randomness to the players.  
This randomness may or may not be correlated with their observations, and it may or may not be correlated across players. This randomness increases the set of achievable joint distributions on the joint action space of the DMs. Our goal is to understand the role of such randomness in a team problem and a team v/s team game.

Qualitatively, there are three of kinds randomness that an external source may provide. First, the source of randomness could be a \textit{coordinator} -- namely an entity that \textit{mixes} the actions of the DM by randomizing.
Mathematically, this randomness is independent of the observations of the DMs, but it may be correlated across DMs. This correlation makes this randomness distinct from the usual notion of ``randomized policies'', in which the randomization is independently performed by each DM. The second kind of randomness, may be imagined as a \textit{counsellor} -- this entity accesses the observations of each decision maker and provides a common message to all DMs. This kind of randomness is correlated with the environment. 
The sources of randomness mentioned above are relevant for team problems as well as team v/s team games. The third kind, which is relevant only in the team v/s team game, is that of a \textit{mole} or a \textit{spy}. This source of randomness provides information about observations of the opposite team.

Our interest is in qualitatively understanding the role of the kinds of randomness mentioned above and quantifying it. 
We make following contribution in  this work.
\begin{enumerate}
\item A team decision problem:
\begin{itemize}
\item We show that if a coordinator provides private or common randomness independent of the environment, then it cannot improve the cost. 
\item We show that common randomness dependent on the environment can improve the team cost. For a certain class of LQG team problems, we show that if the information of each player is replaced by a convex combination of the information of all players, then the team improves its cost. 
\end{itemize}
\item Team v/s team zero-sum game:
\begin{itemize}
\item We show randomness independent the an environment does not benefit teams if the zero-sum game admits saddle point solution. 
\item We prove that a team having more information than other team, benefit and decreases the cost function for minimizing team when randomness is dependent on environment.
\item For LQG team zero-sum games we illustrate that  common randomness dependent on the environment leads to an improvement in the optimal team cost. 
\item Finally, We give an example of a discrete team v/s team zero sum game, without a pure  strategy saddle point, we also show that private and common randomness independent of an environment benefit teams. But it may not have Nash equilibrium.

\end{itemize}
\end{enumerate}

\subsection{Related Work}
%\rahul{Need to rework on this.}

Early work on team decision problem in aspect of 
an organization theory studied in \cite{Marschak55}; where author  used the concepts from game theory and statistical decision theory. A general formulation of team decision problem  are described in \cite{Radner62} and person by person optimality condition is established to solve the distributed team decision problem.
Furthermore, the team decision problem extended to a LQG team  problem 
 in \cite{Ho78,Ho80}.  They investigated  static and dynamic  LQG team decision problem and explored its connection with information theory and economics. 
 In LQG team problem, there is a unique optimal solution, linear in information and it is obtained via solving person by person optimal condition. They also studied dynamic LQG team with partial-nested information structure. Moreover, the symmetric static  team problem studied in \cite{Schoute78} and have shown that the optimal strategy for a symmetric team problem not necessarily a pure strategy but it can have randomized strategy.

%\rahul{Dynamic team decision problem with non-classical information structure examined in \cite{Grover09, Kulkarni12}.}

Two-team zero-sum game in LQG problem studied in \cite{Ho74}, and they show that  team having extra information not necessary  ameliorate the expected  loss.  
Apart from a team decision problem, the role of common randomness in multi-agent distributed control problem is analyzed in \cite{Anantharam07}. 
Our work is inspired from \cite{Ho78,Ho74,Anantharam07}. Role of common randomness is not quantified in \cite{Ho78,Ho74}, whereas we discuss  role of the private and common randomness in team decision problem.

The value of information for  statistical  problems is first introduced in  \cite{Blackwell51,Blackwell53}. This is further extended to decision problems in \cite{Marschak68}, and author have shown that increasing information lead to increasing in utility. 
Early work on role of increasing information in two person game problem is presented in \cite{Ho73}.The surprising finding is presented in \cite{Ho73}, where author finds that increasing informativeness leads to decreasing performance.   The value of information available to players with two-person zero sum game is studied in \cite{Witsenhausen71}. As the additional information increased for a player, may lead to solution toward ideal optimality condition  when there is a saddle point condition exists. This result further motivated study on value of information in  team vs team zero sum game and similar result have shown for LQG team vs team zero sum game. 
In \cite{Basar74}, the value of information for two players non zero sum game is developed, and they have show that in LQG model with better informed player, it decreases the average Nash cost for both players but in duopoly problem, the better informed player benefits only.  

The great reference for stochastic team decision problem is \cite{Yuksel13}. In this book, authors discussed fundamental of team decision problem, sequential team decision problem, comparison of information structure, topological properties of information structure and its application to communication channel. It has motivated further research on team decision problem in recent time. 

There are flurry of research activities on static team problems and their existence of solution.  In \cite{Gupta15}, the class of sequential team problem is studied with a certain information structure and existence of optimal strategies are proved.  Further, they have shown the existence of optimal solution for team problem under weaker assumptions, i.e., assumption  on cost function to be bounded and continuous, action space of agent to be compact or not compact and observation  satisfies technical condition. The ideas from weak convergence in probability theory is used to show convergence of measure of joint probability of actions. In \cite{Saldi19}, author extended study of \cite{Gupta15}, further weaken assumptions their. They have shown the existence of optimal strategies for static teams and topology on set of policies are introduced. In \cite{Yuksel17}, authors studied convexity properties of strategy spaces and discussed redundancy of common or private information that is independent of randomness for static team. Though this result is similar to ours, their proof differs from our method.   The role of common information in dynamic stochastic game is studied in \cite{Gupta14}, where asymmetric of common information is considered among players.  In \cite{Gupta20}, the existence of optimal solution to static team problem under private and common information structure is developed using topology of information and space of measures. 
Early ideas developed in \cite{Blackwell51,Blackwell53,Basar74,Ho73,Marschak68} on role of information are derived for zero sum game under slightly weaker assumptions in \cite{Hogeboom-Burr20} and have shown existence of saddle point equilibrium.

But these paper do not provide qualitative comparison of role of externally provided private and common randomness in static team and team vs team zero sum game.
%
%
%
%
%\cite{Hogeboom-Burr20}---Add one line on this. 
%
%\cite{Saldi20}---also add one line on this.
%

%\rahul{\cite{Ho73},   increase in information lead to improvement in  payoff of that player and increase in payoff of other player.}

%\rahul{\cite{Marschak68}, write comment on this paper.}		

%\rahul{ Comment on this paper Blackwell on comparison of experiment \cite{Blackwell51,Blackwell53}. }

The rest of the paper is organized as follows. Private and common randomness in
static team decision 
problem described 
 in Section~\ref{sec:pricomrandstaticteam}. Role of private and common randomness
in static team vs team zero-sum game  developed in
 Section~\ref{sec:teamvsteamprivatecom}. Finally, concluding remarks and future direction of research presented in Section~\ref{sec:discussion-concluson}.

\def\DM{{\rm DM}}
\def\Nscr{{\cal N}}
\def\Real{{\mathbb{R}}}
\def\TO{{\rm TO}}
\def\PBP{{\rm PBP}}
\def\mi{^{-i}}
\def\Tr{{\rm Tr}}
\section{Private and common randomness in  static team problem}
\label{sec:pricomrandstaticteam}
\def\Jscr{\mathcal{J}}
\subsection{Team decision problem}
\label{subsec:team-prob}
Consider a team decision problem having $ N $ decision makers 
$\DM_1,\hdots,\DM_N$ in a team and let $ \Nscr=\{1,\hdots,N\} $. Let $\xi$ be a random vector taking values in a space $ \Xi $ denoting the state of nature or 
an environment; let its distribution be 
$\mathbf{P}(\cdot)$.
Define $y_i := \eta_i(\xi)$ for a measurable function $ \eta_i $ to be 
 the information observed by $\DM_i$ and let $Y_i  $ be the space of $ y_i. $
Let $U_i \subseteq \Real^{m_i}$, $ m_i \in \Nbb $ denote the set of actions of $\DM_i$. The strategy space of $ \DM_i $ is $\Gamma_i$, the space of measurable functions
$ \gamma_i $ mapping $ Y_i $ to $ U_i $ and 
an action $ u_i $ is given by $u_i = \gamma_i(y_i)$. Without loss of generality we take $ U_i \subseteq \mathbb{R} $ for all $ i $, since a DM with a $ \Real^{m_i} $-valued strategy can be considered as $ m_i $ separate DMs with $ \Real $-valued strategies; thus $ m_i=1 $ for all $ i\in \Nscr. $ 
Let 
\begin{align*}
u &:= (u_1,\hdots,u_N), \quad \gamma:=(\gamma_1,\hdots,\gamma_N), \\
 u\mi &:= (u_1,\hdots,u_{i-1},u_{i+1},\hdots,u_N), \quad \gamma\mi := (\gamma_1,\hdots,\gamma_{i-1},\gamma_{i+1},\hdots,\gamma_N) 
\end{align*}
The cost function is measurable function $ \kappa : U \times \Xi \rightarrow \Real$, where $ U: \prod_{i \in \Nscr} U_i $ and let \[ \Jscr(\gamma) = \mathbb{E}_{\xi}[\kappa(u_1 = \gamma_1(\eta_1(\xi)),\hdots,u_N=\gamma_N(\eta_N(\xi)),\xi)].\]

A \textit{team optimal} solution of the above problem is defined as $ \gamma^* \in \Gamma := \prod_{i\in \Nscr} \Gamma_i $ such that  
\begin{equation}
\mathcal{J}_{{\rm TO}}^{*} \triangleq \Jscr(\gamma^*)= \min_{\gamma \in \Gamma}\mathcal{J}(\gamma) = \min_{\gamma
 \in \Gamma} \mathbb{E}_{\xi}[\kappa(u_1 = \gamma_1(\eta_1(\xi)),\hdots,u_N=\gamma_N(\eta_N(\xi)),\xi)]. \label{eq:teamopt}
\end{equation}
We assume throughout that a team optimal solution exists and use `min' instead of `inf'.  
A related concept, called the \textit{person by person optimal} solution is a $ \gamma \in \Gamma $ such that
\[\Jscr_{\rm PBP}^*= \Jscr(\gamma)= \min_{\gamma_i' \in \Gamma_i}\mathcal{J}(\gamma_i' ,{\gamma}^{-i} )  \qquad \forall \ i \in \Nscr.\]

\subsection{Externally provided randomness}
\def\Pbb{\mathbb{P}}
We now introduce externally provided randomness, beginning with private randomness. Suppose $ \DM_i $ chooses $ u_i $ randomly from $ U_i $ and let $ Q $ be the joint distribution of 
all variables involved, namely, $ \xi, y, u $. We say that the DMs have externally provided \textit{private randomness}, if 
\begin{equation}
 Q(u|y)  = \prod_{i \in \Nscr} Q(u_i|y_i). \label{eq:private2}
\end{equation}
This specification corresponds to the standard notion of 
\textit{randomized policies} in stochastic control or \textit{behavioral strategies} in stochastic games, wherein the action is chosen to be a random function of the information. 

In general one has
\[ Q(\xi,y,u) = Q(u|\xi,y)Q(\xi,y), \]
where $ Q(\xi,y,u) $ is the joint distribution of $ \xi,y,u $, 
$ Q(u|\xi,y) $ the conditional distribution of $ u $ given $\xi, y  $, and $ Q(\xi,y) $ is the marginal of $ \xi,y$
(evaluated at `$ \xi=\xi,y=y,u=u $'). When the randomness provided to DMs is independent of $ \xi $, we have 
\[ u|y \amalg \xi,\]
i.e, given $ y $ the choice of $ u $ is independent of $ \xi. $ Furthermore, the joint distribution of $ (\xi,y) $ is known; denote this distribution by $ P(\xi,y) $. 
Consequently, any joint distribution of $ \xi,y,u $ is given by
\begin{equation}
 Q(\xi,y,u) = Q(u|y)P(\xi,y). \label{eq:private1}
\end{equation}

To describe externally provided \textit{common} randomness, let  $w = (w_1,\hdots, w_N)$ be a random vector, $ w \amalg \xi $,  and assume that $w_i$ is externally provided to $\DM_i$ by a coordinator. With the additional information of $ w_i $, the  strategies $\gamma_i$ of $\DM_i$ are deterministic $ y_i \times w_i \rightarrow u_i$ mappings and $ \Gamma_i $ is the space of such strategies. For a given random vector $ w $ with distribution $ \Pbb $, the team optimal solution is defined analogously to \eqref{eq:teamopt}, as follows:
\begin{equation}
\min_{\gamma \in \Gamma}\mathcal{J}(\gamma) = \min_{\gamma
 \in \Gamma} \mathbb{E}_{\xi,w}[\kappa(u_1 = \gamma_1(\eta_1(\xi),w_1),\hdots,u_N=\gamma_N(\eta_N(\xi),w_N),\xi)]. \label{eq:teamoptw}
\end{equation}
Since $ \xi $ is independent of $ w, $ the expectation with respect to $ (\xi,w) $ is well defined once the marginals of $ \xi,w $ are defined.

%We illustrate  idea of private randomization provided to each decision maker and independent of information structure. We state following theorem.
\def\Qscr{{\cal Q}}
\def\Pscr{{\cal P}}
\subsection{Randomness independent of $\xi$}
In this section, we study the case of externally provided randomness that is independent of the state of nature  $ \xi $.
Our main result is that in a team problem, such randomness provides no benefit to the team. One may interpret this to mean that a team a gains nothing by hiring a coordinator whose sole role is that of mixing the actions of the team members without the use of any knowledge of the underlying state of nature or of the observations made by team members.

Let $ \Pscr(\cdots) $ be the set of joint distributions of on the space `$ \cdots $'. Let $ \Qscr$ be the set of joint distributions of random variables $ \xi,y,u $ that admit the decomposition above. i.e.,
\[ \Qscr = \{Q \in \Pscr(\Xi \times Y \times U)\ |\ Q\ {\rm satisfies\ } \eqref{eq:private1}, \eqref{eq:private2}\}.\]
Consider the following problem:
\begin{equation}
\Jscr^*_{\TO_P} \triangleq \min_{Q \in \Qscr}\mathbb{E}_{Q}[\kappa(u,\xi)]. \label{eq:teamoptq}
\end{equation}
From the decomposition of $ Q $ provided by \eqref{eq:private1}-\eqref{eq:private2}, it follows that \eqref{eq:teamoptq} is a multilinear program with separable constraints. Classical results show that \eqref{eq:teamoptq} admits a solution that is an extreme point, namely, one where $ u_i $ is a deterministic function of $ y_i. $ Consequently, $ \Jscr^*_{\TO_P} = \Jscr^*_{\TO} $ and we have the following result.
\begin{proposition}
In a static stochastic team problem, externally provided private randomness that is independent of the state of nature cannot improve the team's cost.
\end{proposition}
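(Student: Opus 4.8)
The plan is to make rigorous the claim already sketched in the paragraph preceding the proposition: that \eqref{eq:teamoptq} is a multilinear optimization over a product of simplices (one per DM, parametrized by the observation $y_i$), that its optimal value is attained at an extreme point, and that extreme points correspond exactly to deterministic strategies, so that optimizing over randomized private policies yields nothing beyond $\Jscr^*_{\TO}$. First I would fix the conditional distributions $Q(u_i \mid y_i)$ as the decision variables; by \eqref{eq:private1}--\eqref{eq:private2}, the joint law $Q(\xi,y,u)$ is the fixed marginal $P(\xi,y)$ multiplied by $\prod_i Q(u_i\mid y_i)$, so the objective $\mathbb{E}_Q[\kappa(u,\xi)]$ expands as
\[
\Jscr(Q) = \int \kappa(u,\xi)\,\Big(\prod_{i\in\Nscr} Q(u_i\mid y_i)\Big)\, P(d\xi,dy)\,du,
\]
which is separately affine (indeed multilinear) in each factor $Q(\cdot\mid y_i)$ when the others are held fixed, and the feasible set $\Qscr$ is a product over $i$ and over $y_i$ of probability simplices on $U_i$ — in particular it is convex and (under the standing existence/compactness hypotheses) compact with extreme points being exactly the Dirac-type policies $Q(u_i\mid y_i)=\delta_{\gamma_i(y_i)}$.

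The key step is then a coordinate-wise extreme-point argument. Holding $Q(\cdot\mid y_j)$ fixed for all $j\neq i$, the residual problem in $Q(\cdot\mid y_i)$ is a linear program over a product of simplices, hence attains its minimum at an extreme point, i.e. at a deterministic rule $\gamma_i$; moreover replacing $Q(\cdot\mid y_i)$ by that extreme point does not increase $\Jscr$. Iterating this over $i=1,\dots,N$ (a finite procedure) converts any feasible $Q$ into a deterministic $\gamma\in\Gamma$ with $\Jscr(\gamma)\le \Jscr(Q)$. Taking the infimum over all feasible $Q$ on the left and noting that deterministic policies are themselves feasible for \eqref{eq:teamoptq} gives $\Jscr^*_{\TO_P} = \Jscr^*_{\TO}$, which is the assertion of the proposition. (One may alternatively invoke the classical result that a multilinear function on a product of polytopes attains its extremum at a vertex of the product polytope, cited in the excerpt, and skip the explicit iteration.)

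The main obstacle is purely measure-theoretic bookkeeping rather than conceptual: when $U_i$ is a continuum and $Y_i$ uncountable, "product of simplices" and "extreme point" must be interpreted through the appropriate space of transition kernels, and one needs a selection/measurability argument to guarantee that the pointwise-in-$y_i$ minimizer can be chosen to be a measurable function $\gamma_i$. This is exactly the kind of technical hypothesis (compact action spaces, bounded continuous cost, measurable selection) under which the existence results cited earlier are proved, so I would either restrict to the finite setting where the LP/vertex argument is elementary, or invoke a measurable-selection theorem to produce $\gamma_i$; in the latter case the standing assumption that a team optimal solution exists already supplies the regularity needed. No new assumption beyond those in Section~\ref{subsec:team-prob} is required.
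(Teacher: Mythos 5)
Your proposal is correct and follows essentially the same route as the paper, which justifies the proposition by observing that \eqref{eq:teamoptq} is a multilinear program with separable (product-of-simplices) constraints whose optimum is attained at an extreme point, i.e.\ at deterministic maps $u_i=\gamma_i(y_i)$, so that $\Jscr^*_{\TO_P}=\Jscr^*_{\TO}$; the paper simply cites ``classical results'' and defers to the argument of Proposition~\ref{lemma:staticteamproblem1}, whereas you spell out the coordinate-wise linearization, the extreme-point step, and the measurable-selection caveat. The only minor inaccuracy is your remark that the standing existence assumption for a team-optimal $\gamma^*$ ``already supplies the regularity needed''---it does not by itself yield measurable ($\epsilon$-)selectors for the intermediate one-coordinate problems, so you should rely on the compactness/continuity and measurable-selection hypotheses you list instead.
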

%\rahul{Need to add few lines on proof of this?}
Proof is along the lines of proof of Proposition~\ref{lemma:staticteamproblem1}. We skip the proof details. 
 
Consider the following cost:
\[
\mathcal{J}_{\TO_C}^{*} = \min_{\gamma
 \in \Gamma, \mathbb{P}} \mathbb{E}_{\xi,w}[\kappa(\gamma_1(\eta_1(\xi),  w_1),\hdots,\gamma_N(\eta_N(\xi),  w_N),\xi)].
\]
This is the lowest cost that can be attained via common randomness.  The common randomness is independent of environment $\xi.$
%Then an optimal solution of team problem obtain at $w^{*}$, and  $\mathbb{P}(w^{*}) = 1$,
%where
%\[
% w^{*}= \arg \min_{w = (w_1, w_2)} \kappa^{*}(\gamma_{1}^{*}(w_1 = i), \gamma_{2}^{*}(w_2 = j)).
%\]
\begin{proposition}
\begin{eqnarray}
\mathcal{J}_{\rm TO}^{*} = \mathcal{J}_{\TO_P}^{*} = \mathcal{J}_{\TO_C}^{*}.
\end{eqnarray}
\label{lemma:staticteamproblem1}
\end{proposition}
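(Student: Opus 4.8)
The plan is to prove the two equalities one at a time. One direction of each is free: embedding a deterministic $\gamma\in\Gamma$ into $\Qscr$ by taking Dirac conditionals $Q(\cdot\mid y_{i})=\delta_{\gamma_{i}(y_{i})}$ (which trivially respects \eqref{eq:private1}--\eqref{eq:private2}) gives $\mathcal{J}_{\rm TO}^{*}\ge\mathcal{J}_{\TO_P}^{*}$, and taking the common-randomness vector $w$ to be a.s.\ constant gives $\mathcal{J}_{\rm TO}^{*}\ge\mathcal{J}_{\TO_C}^{*}$. So what remains is to show $\mathcal{J}_{\rm TO}^{*}\le\mathcal{J}_{\TO_P}^{*}$, which I would do via the multilinear/extreme-point structure of \eqref{eq:teamoptq}, and $\mathcal{J}_{\rm TO}^{*}\le\mathcal{J}_{\TO_C}^{*}$, which I would do by conditioning on $w$ and using $w\amalg\xi$.

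For the private-randomness bound, fix (for an arbitrary $\varepsilon>0$) an $\varepsilon$-optimal $Q^{*}\in\Qscr$ and de-randomize one decision maker at a time. Disintegrating $\mathbb{E}_{Q^{*}}[\kappa(u,\xi)]$ over $y_{1}$, and using that $Q^{*}$ factorizes as in \eqref{eq:private2} with $u\mid y\amalg\xi$, the objective equals $\int_{Y_{1}}\big(\int_{U_{1}}g_{y_{1}}(u_{1})\,Q^{*}(du_{1}\mid y_{1})\big)P(dy_{1})$, where $g_{y_{1}}(u_{1})$ is the conditional expected cost given $\{y_{1},u_{1}\}$ when $\DM_{2},\dots,\DM_{N}$ use the kernels of $Q^{*}$. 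This is affine in the kernel $Q^{*}(\cdot\mid y_{1})$, so it is bounded below by $\int_{Y_{1}}\inf_{u_{1}\in U_{1}}g_{y_{1}}(u_{1})\,P(dy_{1})$; choosing a measurable $\varepsilon$-minimizer $\gamma_{1}(\cdot)$ and replacing $Q^{*}(\cdot\mid y_{1})$ by $\delta_{\gamma_{1}(y_{1})}$ yields a new element of $\Qscr$ in which $\DM_{1}$ is deterministic and the cost has grown by at most $\varepsilon$. Iterating over $\DM_{2},\dots,\DM_{N}$ produces a deterministic $\gamma\in\Gamma$ with $\mathcal{J}(\gamma)\le\mathbb{E}_{Q^{*}}[\kappa]+N\varepsilon\le\mathcal{J}_{\TO_P}^{*}+N\varepsilon$; letting $\varepsilon\downarrow0$ gives $\mathcal{J}_{\rm TO}^{*}\le\mathcal{J}_{\TO_P}^{*}$, hence equality. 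This is precisely the ``extreme point of a multilinear program with separable constraints'' argument alluded to before the proposition, made explicit.

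For the common-randomness bound, I would condition on $w$. Given any admissible $(\gamma,\mathbb{P})$, since $w\amalg\xi$ the conditional law of $\xi$ given $w=\bar w$ is still $\mathbf{P}$, so the conditional expected cost given $w=\bar w$ is $\mathcal{J}\big((\gamma_{i}(\cdot,\bar w_{i}))_{i\in\Nscr}\big)$, in which each $y_{i}\mapsto\gamma_{i}(y_{i},\bar w_{i})$ is a genuine element of $\Gamma_{i}$ (a section of a jointly measurable map is measurable). Hence this conditional cost is $\ge\mathcal{J}_{\rm TO}^{*}$ for $\mathbb{P}$-a.e.\ $\bar w$, and averaging over $w$ gives $\mathbb{E}_{\xi,w}[\kappa(\gamma_{1}(\eta_{1}(\xi),w_{1}),\dots,\gamma_{N}(\eta_{N}(\xi),w_{N}),\xi)]\ge\mathcal{J}_{\rm TO}^{*}$; taking the infimum over $(\gamma,\mathbb{P})$ yields $\mathcal{J}_{\TO_C}^{*}\ge\mathcal{J}_{\rm TO}^{*}$, and therefore $\mathcal{J}_{\TO_C}^{*}=\mathcal{J}_{\rm TO}^{*}$. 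Combining the two equalities proves the proposition.

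The step I expect to be the genuine obstacle is the measurable selection in the de-randomization step: passing from ``the infimum over kernels equals the infimum over point masses'' to an honest deterministic strategy requires a measurable (approximate) minimizer $y_{1}\mapsto\gamma_{1}(y_{1})$ of $g_{y_{1}}(\cdot)$, and because $U_{i}\subseteq\mathbb{R}$ need not be compact one cannot in general take $\varepsilon=0$; the $\varepsilon$-scheme above circumvents this at the price of invoking joint measurability of $(y_{1},u_{1})\mapsto g_{y_{1}}(u_{1})$ and a standard $\varepsilon$-selection theorem. As a sanity check, $\mathcal{J}_{\rm TO}^{*}=\mathcal{J}_{\TO_P}^{*}$ can alternatively be deduced from the common-randomness step: every $Q\in\Qscr$ is reproduced by handing each $\DM_{i}$ an independent auxiliary uniform $w_{i}$ and letting $\gamma_{i}(y_{i},\cdot)$ be the quantile transform of $Q(\cdot\mid y_{i})$, whence $\mathcal{J}_{\TO_C}^{*}\le\mathcal{J}_{\TO_P}^{*}\le\mathcal{J}_{\rm TO}^{*}=\mathcal{J}_{\TO_C}^{*}$ forces all three values to coincide.
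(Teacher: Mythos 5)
Your proposal is correct, and in places it is tighter than the paper's own argument, though the core mechanisms overlap. For the common-randomness equality the paper also conditions on $w$ and uses $w\amalg\xi$, but it does so by asserting an interchange $\min_{\gamma}\mathbb{E}_{w}=\mathbb{E}_{w}\min_{\gamma}$ (justified informally by ``DMs can cooperate and communicate''), then calling the resulting problem a linear program optimized at an extreme point $w^{*}$, and finally appending an LQG-style linear-decision-rule/convexity digression; you instead use only the one-sided fact that, for a.e.\ fixed $\bar w$, the sections $y_{i}\mapsto\gamma_{i}(y_{i},\bar w_{i})$ are admissible elements of $\Gamma_{i}$, so the conditional cost given $w=\bar w$ is at least $\mathcal{J}_{\rm TO}^{*}$, which together with the trivial reverse embedding closes the argument without any interchange or extreme-point claim. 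For the private-randomness equality the paper only gestures at ``classical results'' on multilinear programs with separable constraints and explicitly skips the proof; your one-decision-maker-at-a-time de-randomization with a measurable $\varepsilon$-minimizer is exactly that argument made explicit, and you correctly identify the measurable-selection step (and the non-compactness of $U_{i}$, hence the $\varepsilon$-scheme) as the real technical content, which the paper never addresses. Your quantile-transform observation, realizing any kernel $Q(\cdot\mid y_{i})$ as $\gamma_{i}(y_{i},w_{i})$ with independent uniform $w_{i}$, is not in the paper and is arguably the cleanest route: it yields $\mathcal{J}_{\TO_C}^{*}\le\mathcal{J}_{\TO_P}^{*}\le\mathcal{J}_{\rm TO}^{*}=\mathcal{J}_{\TO_C}^{*}$ and so dispenses with the selection argument entirely, modulo the standard joint measurability of the conditional quantile function for real-valued actions.
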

\begin{proof} 
It is enough to show that $\mathcal{J}_{\rm TO}^{*} = \mathcal{J}_{\TO_C}^{*}.$
Now consider
\[
\mathcal{J}_{\TO_C}^{*} = \min_{\gamma
 \in \Gamma, \mathbb{P}} \mathbb{E}_{\xi,w}[\kappa(\gamma_1(\eta_1(\xi),  w_1),\hdots,\gamma_N(\eta_N(\xi),  w_N),\xi)].
\]
Assuming $\{y_1, \cdots, y_N\}$  are well defined. Rewriting above expression, we obtain
\[
\mathcal{J}_{\TO_C}^{*} = \min_{\gamma
 \in \Gamma, \mathbb{P}} \mathbb{E}_{w} \mathbb{E}_{\xi/w}[\kappa(\gamma_1(\eta_1(\xi),  w_1),\hdots,\gamma_N(\eta_N(\xi),  w_N),\xi)].
\]
Since common randomness $w$ independent of $\xi,$  we have
\[
\mathcal{J}_{\TO_C}^{*} = \min_{\gamma
 \in \Gamma, \mathbb{P}} \mathbb{E}_{w} \mathbb{E}_{\xi}[\kappa(\gamma_1(\eta_1(\xi),  w_1),\hdots,\gamma_N(\eta_N(\xi),  w_N),\xi)].
\]
Now,  we split the minimization $ \min_{\gamma_1, \gamma_2 \in \Gamma, \mathbb{P}(w)} = \min_{\mathbb{P}(w)} \min_{\gamma_1, \gamma_2 \in \Gamma}$, we can also interchange $\min_{\gamma_1, \gamma_2 \in \Gamma} $ and expectation $\mathbb{E}_{w} $  since DMs can cooperate and communicate in team problem.
\[
\mathcal{J}_{\TO_C}^{*} = \min_{ \mathbb{P}}  \mathbb{E}_{w} \min_{\gamma  \in \Gamma}  \mathbb{E}_{\xi}[\kappa(\gamma_1(\eta_1(\xi),  w_1),\hdots,\gamma_N(\eta_N(\xi),  w_N),\xi)].
\]
Next we have
\[
\mathcal{J}_{\TO_C}^{*} = \min_{ \mathbb{P}}  \mathbb{E}_{w}[ \mathcal{J}_{\TO}^{*}(w)].
\]
It is linear program, thus it has optimal at extreme points, that is, $w^{*} = \arg \min_{w} \mathcal{J}_{\TO}^{*}(w).$ Then 
\[
\mathcal{J}_{\TO_C}^{*} =  \mathcal{J}_{\TO}^{*}(w*).
\]
Now consider that $\mathcal{J}_{\TO_C}$ is a convex function of decision rule $\gamma.$
If the decision rule is linear in its information, that is, $\gamma_i(\eta_i(\xi), w_i) = \alpha_{i1} \eta_i(\xi) + \alpha_{i2}w_i $, then clearly cost function will convex in $\alpha_{i1}$ and $\alpha_{i2}$ for all $ i = 1, \cdots , N.$ Without loss of generality assume that $\mathbb{E}[w_i] = 0 $ for all $i =1, \cdots , N.$ Since $w$ and $\xi$ are independent the cost function will be separable and  minimization w.r.t. variable $\alpha_{i1}$ and $\alpha_{i2}$ for all $ i = 1, \cdots , N.$ It implies that cost will be minimum iff $\alpha_{i2} = 0$ for all $i = 1, \cdots, N.$ Thus no weightage given to additional information under this decision rule.

\end{proof}
For LQG team problem in Appendix~\ref{sub:LQGteam-randomness}, it is  illustrated that  if private and common randomness independent of the environment $\xi$, it does not improve the expected cost function.

\subsection{Randomness dependent on $\xi$}
Consider a scenario where consultant provides an extra randomness about an environment to decision makers. That means these extra randomness is correlated with an environment $\xi$. 

Let $\omega = (\omega_1, \ldots, \omega_N)$ be a random vector represents an extra randomness provided to decision makers by consultant. Further assume that  $\omega $ is function of $\xi$, 
i.e. $\omega = f(\xi) = (f_1(\xi), \ldots, f_N(\xi))$, here $f, f_i$ be the measurable functions. The strategies of $\DM_i$ are
 $\gamma_i: y_i \times \omega_i \rightarrow u_i$, $\gamma_i \in \Gamma_i$ space of strategies and 
 $u_i \in U_i$ space of decision variables. The team optimal cost is defined as follows.
 \begin{equation}
 \min_{\gamma \in \Gamma} \mathcal{J}({\gamma}) = \min_{\gamma \in \Gamma} \mathbb{E}_{\xi, \omega}[\kappa(u_1 = \gamma_1(\eta_1(\xi), \omega_1 ), \ldots, \gamma_N(\eta_N(\xi), \omega_N))].
 \end{equation}
Note that  $\omega$ is function of $\xi.$ The optimal cost function is 
\begin{equation}
\mathcal{J}_{\TO_{ER}}^{*} = \min_{\gamma \in \Gamma} \mathbb{E}_{\xi}[\kappa(u_1 = \gamma_1(\eta_1(\xi), f_1(\xi) ), \ldots, \gamma_N(\eta_N(\xi), f_N(\xi)))].
\label{eq:teamoptcomdepxi}
\end{equation}
In distributed team problem with no extra randomness, decision maker have only partial observation about $\xi$. Thus an observations about $\xi$ is distributed among decision makers and  an optimal team cost $\mathcal{J}_{\TO}^{*}$ found in section \ref{subsec:team-prob}. When a consultant provides an extra randomness  about an environment $\xi$ to  the decision makers. 
Essentially, there is an increase in observation about $\xi$ available at decision makers. 
Intuitively, we expect that optimal cost under extra randomness in
distributed stochastic team will improve optimal  cost functional. Thus we have following result.
\begin{proposition}
In  distributed static stochastic team problem,
\begin{eqnarray}
\mathcal{J}_{\rm TO}^{*} \geq \mathcal{J}_{\TO_{ER}}^{*}.
\label{eq:teamproblemCR}
\end{eqnarray}
\label{lemma:staticteamproblem2}
\end{proposition}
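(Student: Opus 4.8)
The plan is to recognize that the optimization defining $\mathcal{J}_{\TO_{ER}}^{*}$ in \eqref{eq:teamoptcomdepxi} is a relaxation of the one defining $\mathcal{J}_{\rm TO}^{*}$ in \eqref{eq:teamopt}: every strategy profile feasible for the team problem without extra randomness remains feasible, after a trivial re-embedding, for the team problem in which each $\DM_i$ additionally observes $\omega_i = f_i(\xi)$, and it attains exactly the same cost. Since the second problem minimizes the same objective over a larger feasible set, its optimal value can only be smaller or equal. This is just the ``extra information never hurts a team'' principle for static teams, exploiting that $\DM_i$ is free to ignore $\omega_i$ and that all DMs share the common objective.

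Concretely, first I would take an arbitrary $\gamma = (\gamma_1,\ldots,\gamma_N) \in \Gamma$ from Section~\ref{subsec:team-prob}, with $\gamma_i : Y_i \to U_i$, and define the augmented strategy $\widetilde\gamma_i : Y_i \times \omega_i \to U_i$ by $\widetilde\gamma_i(y_i,\omega_i) := \gamma_i(y_i)$, i.e.\ $\DM_i$ discards the side information. Because $\gamma_i$ is measurable, so is $\widetilde\gamma_i$ on the product space, hence $\widetilde\gamma := (\widetilde\gamma_1,\ldots,\widetilde\gamma_N)$ is a feasible profile for \eqref{eq:teamoptcomdepxi}. Second, I would observe that for every realization of $\xi$,
\[
\kappa\big(\widetilde\gamma_1(\eta_1(\xi),f_1(\xi)),\ldots,\widetilde\gamma_N(\eta_N(\xi),f_N(\xi)),\xi\big)
= \kappa\big(\gamma_1(\eta_1(\xi)),\ldots,\gamma_N(\eta_N(\xi)),\xi\big),
\]
so that $\mathbb{E}_{\xi}[\kappa(\widetilde\gamma,\xi)] = \mathcal{J}(\gamma)$. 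Since $\widetilde\gamma$ is feasible for \eqref{eq:teamoptcomdepxi} and $\mathcal{J}_{\TO_{ER}}^{*}$ is the minimum over all such profiles, $\mathcal{J}_{\TO_{ER}}^{*} \le \mathbb{E}_{\xi}[\kappa(\widetilde\gamma,\xi)] = \mathcal{J}(\gamma)$. Taking the infimum over $\gamma \in \Gamma$ on the right yields $\mathcal{J}_{\TO_{ER}}^{*} \le \min_{\gamma \in \Gamma}\mathcal{J}(\gamma) = \mathcal{J}_{\rm TO}^{*}$, which is \eqref{eq:teamproblemCR}.

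I do not anticipate a substantive obstacle here. The only points requiring care are measurability --- that $\eta_i$, $f_i$ and $\kappa$ are measurable, so that $\widetilde\gamma_i$ is a legitimate element of the augmented strategy space $\Gamma_i$ --- and the existence of minimizers in both problems, which is assumed throughout Section~\ref{subsec:team-prob}. It is worth noting that the inequality need not be strict: if each $f_i(\xi)$ is measurable with respect to $\sigma(\eta_i(\xi))$ the extra randomness is redundant and equality holds, while the LQG discussion later in the paper shows that the improvement can be genuine. I would also contrast this with the team-versus-team setting, where, absent a saddle point, such $\xi$-dependent side information is no longer guaranteed to be harmless.
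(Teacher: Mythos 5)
Your proof is correct, but it takes a genuinely different and more elementary route than the paper. The paper argues via the comparison-of-experiments framework of Blackwell: it introduces the two information collections $\mathcal{B}_1=\{\eta_i(\xi)\}$ and $\mathcal{B}_2=\{(\eta_i(\xi),f_i(\xi))\}$, notes $\mathcal{B}_1\subset\mathcal{B}_2$, and then invokes Blackwell's Theorem 2, which requires auxiliary structure (convexity of the cost, a closed bounded convex strategy set) to conclude that the more informative structure yields a smaller optimal cost. You instead give a direct feasible-set-inclusion argument: embed any $\gamma\in\Gamma$ into the augmented strategy space by $\widetilde\gamma_i(y_i,\omega_i):=\gamma_i(y_i)$, observe the cost is pointwise unchanged, and conclude $\mathcal{J}_{\TO_{ER}}^{*}\le\mathcal{J}_{\rm TO}^{*}$ by minimizing over the larger set. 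Your route needs only measurability and the standing existence-of-minimizer assumption, with no convexity hypotheses, and it exploits the fact that here the extra observation $\omega_i=f_i(\xi)$ is literally adjoined to $y_i$, so the two information structures are nested in the strong sense rather than merely ordered in the Blackwell (garbling) sense; the paper's heavier machinery would be needed only if the comparison were of the latter, weaker kind. Your closing remarks on when equality holds and on the contrast with the zero-sum team-versus-team setting are consistent with the paper's later discussion.
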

\begin{proof}
We develop the proof using the ideas from \cite{Blackwell51}.  Let $\mathcal{B}_1 = \{\eta_1(\xi), \eta_2(\xi), \cdots, \eta_N(\xi) \} $ be the information available at team and  $\mathcal{B}_2 = \{(\eta_1(\xi), f_1(\xi)),(\eta_2(\xi),f_1(\xi)), \cdots, (\eta_N(\xi), f_N(\xi) ) \} $ be the another information available at team with extra common randomness. Thus $\mathcal{B}_1 \subset \mathcal{B}_2,$ i.e., $\mathcal{B}_2$ is more informative than $\mathcal{B}_1.$ Since DMs can cooperate and communicate in team problem. 

The  minimization problem $\min_{\gamma \in \Gamma}\mathbb{E}_{\xi}\left[\kappa \left(u_1, u_2, \cdots, u_N \right)~\vert ~\mathcal{B} \right].$ As $f_i$s are measurable functions, $\eta_i$s are measurable functions, so $\gamma_i$ are measurable and  $\Gamma$ is closed  bounded convex set. The cost function is also convex and measurable, thus from \cite[Theorem $2$]{Blackwell51} we can have 
\begin{equation*}
\min_{\gamma \in \Gamma}\mathbb{E}_{\xi}\left[\kappa \left(u_1, u_2, \cdots, u_N \right)~\vert ~\mathcal{B}_2 \right]  \leq \min_{\gamma \in \Gamma}\mathbb{E}_{\xi}\left[\kappa \left(u_1, u_2, \cdots, u_N \right)~\vert ~\mathcal{B}_1 \right]
\end{equation*}
This implies the desired result.
%
%'\begin{equation*}
%\mathcal{J}_{\TO_{ER}}^{*} = \min_{\gamma \in \Gamma} \mathbb{E}_{\xi}[\kappa(u_1 = \gamma_1(\eta_1(\xi), f_1(\xi) ), \ldots, \gamma_N(\eta_N(\xi), f_N(\xi)))].
%%\label{eq:teamoptcomdepxi}
%\end{equation*}
\end{proof}
%%%%%%%%%%%%%%%%%%%%%%%%%%%%%%%%%%5

%Next we show analytically that in team problem having two decision maker with extra randomness provided by consultant lead to improve in expected team cost function. Moreover, we assume that decision maker uses only  extra randomness provided by consultant. 

Consider LQG stochastic team problem which has decision maker $\DM_1$ and $\DM_2$ in a team, and we have following different variation of LQG team problem based on types of observation available at decision makers.

{Problem $1$:}
 Let decision variable   $u_1  = Ay $, where $A$ is diagonal matrix, $\diag (A) = [\alpha_{11}, \ldots, \alpha_{N1}]$,  $y$ is observation available at decision makers, $y = [ y_1, y_2]^{T} = [\mu_1, \mu_2]^{T} $, and $\Sigma$ is covariance matrix of random vector $y$. The expected team cost
\begin{equation*}
\mathcal{J}_{\TO_{LQG,1}}^{*} = \min_{A} \mathbb{E}_{\xi}[y^{T}A^{T}BA y + 2 y^{T}A^{T}S \xi ] =   \min_{A} \Tr[A^{T}BA \Sigma + 2 A^{T}S \Sigma ].
\end{equation*}  
 
 Let $A^{*}$ be the matrix such that optimal cost function of team is
\begin{equation*}
\mathcal{J}_{\TO_{LQG,1}}^{*} =  \Tr[A^{*T}BA^{*} \Sigma + 2 A^{*T}S \Sigma ].
\end{equation*} 

{Problem $2$:}
Let decision variable   $u_2  = \tilde{A}\tilde{y} $, where $\tilde{A}$ is diagonal matrix, $\diag (\tilde{A}) = [\alpha_{11}, \ldots, \alpha_{N1}]$, $\tilde{y}$ is observation available at decision makers, $\tilde{y} = [ y_2, y_1]^{T} = [\mu_2, \mu_1]^{T} $.  
Note that 
\begin{equation*}
\tilde{y} = \left[ \begin{array}{c}
y_2   \\
y_1
\end{array} \right]  = \left[ \begin{array}{cc}
0 & 1   \\
1 & 0
\end{array} \right] 
\left[ \begin{array}{c}
y_1  \\
y_2
\end{array} \right] 
\end{equation*}
Let $\tilde{I} = \left[ \begin{array}{cc}
0 & 1   \\
1 & 0
\end{array} \right] $. Thus $\tilde{y} = \tilde{I} {y}$ and $\xi = y = \tilde{I} \tilde{y}$.

The expected team cost is
\begin{equation*}
\mathcal{J}_{\TO_{LQG,2}}^{*} = \min_{\tilde{A} } \mathbb{E}_{\xi}[\tilde{y}^{T}\tilde{A}^{T}B \tilde{A}  \tilde{y} + 2 \tilde{y}^{T}\tilde{A}^{T} S \xi ] =   \min_{\tilde{A} } \Tr[\tilde{A}^{T}B\tilde{A} \Sigma + 2 \tilde{A}^{T} \tilde{S} \tilde{\Sigma} ].
\end{equation*}
Here $\tilde{S} := S \tilde{I}$ and $\tilde{\Sigma}$ denote the covariance matrix of random vector $\tilde{y}$ . 

Let $A^{**}$ be the matrix such that 
\begin{equation*}
\mathcal{J}_{\TO_{LQG,2}}^{*} =  \Tr[A^{**T}BA^{**} \tilde{\Sigma} + 2 A^{**T}\tilde{S} \tilde{\Sigma} ].
\end{equation*}

{Problem $3$:}
Let decision variable $u_3 =  C \omega $, where $C$ is diagonal matrix, $\omega = [\omega_1, \omega_2]^{T}$, $\omega_1 = \beta y_1 + (1 - \beta) y_2$, $= \omega_2 =  (1- \beta) y_1 +  \beta y_2$,  $\beta \in (0,1)$. Hence $\omega = \beta y + (1 - \beta)\tilde{y} $.  We assume that decision maker has available common randomness provided by a consultant. These common randomness is convex combination of observation available at decision maker that is $y_1$ and $y_2$.
For example $\beta = \frac{1}{2}$, a  consultant provides an average of observations.   
The optimal cost functional  is 

\begin{equation*}
\mathcal{J}_{\TO_{LQG,3}}^{*} = \min_{u_3 \in U} \mathbb{E}_{\xi}[u_3^{T}Bu_3 + 2 u_3^{T}S\xi].
\end{equation*} 
%\begin{eqnarray}
%\mathcal{J}_{\TO_{LQG,3}}^{*} =  \Tr[A^{***T}BA^{***} \Sigma + 2 A^{***T} S \Sigma ] +  2  \Phi^{T} C^{T} B A \Sigma + \Phi^{T} C^{T} B C \Phi \Sigma + 2  \Phi^{T} C^{T} S \Sigma].
%\end{eqnarray}

\begin{proposition}
\begin{enumerate}
\item 
\begin{equation*}
\mathcal{J}_{\TO_{LQG,3}}^{*} \leq \beta \mathcal{J}_{\TO_{LQG,1}}^{*} + (1- \beta) \mathcal{J}_{\TO_{LQG,2}}^{*} .
\end{equation*}
 \item 
 If $\tilde{\Sigma} = \Sigma$ and $\tilde{S} = S$, then

\begin{equation*}
\mathcal{J}_{\TO_{LQG,1}}^{*} = \mathcal{J}_{\TO_{LQG,2}}^{*}.
\end{equation*}

Furthermore, $A^{*} = A^{**}$. 
Also, 
\begin{equation*}
\mathcal{J}_{\TO_{LQG,3}}^{*} \leq  \mathcal{J}_{\TO_{LQG,1}}^{*}. 
\end{equation*}
\end{enumerate}
\end{proposition}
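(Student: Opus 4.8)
The plan is to write all three LQG costs as minima of one quadratic functional over diagonal matrices and to lean on its convexity; the symmetry hypotheses of part~(2) are precisely what let the resulting chain of inequalities close.

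\textbf{Reduction.} For an action $v$ (a square‑integrable function of the available information) put $\Phi(v) := \mathbb{E}_{\xi}[v^{T}Bv + 2v^{T}S\xi]$; since $B \succeq 0$ (implicit in well‑posedness of the problems) the map $v \mapsto \Phi(v)$ is convex. Using the stated relations $\xi = y = \tilde{I}\tilde{y}$, $\tilde{S} = S\tilde{I}$ and $\tilde{\Sigma} = \tilde{I}\Sigma\tilde{I}$, one checks that for every diagonal $C$ one has $\Phi(Cy) = \Tr[C^{T}BC\Sigma + 2C^{T}S\Sigma]$ and $\Phi(C\tilde{y}) = \Tr[C^{T}BC\tilde{\Sigma} + 2C^{T}\tilde{S}\tilde{\Sigma}]$, so that $\mathcal{J}_{\TO_{LQG,1}}^{*} = \min_{C}\Phi(Cy)$, $\mathcal{J}_{\TO_{LQG,2}}^{*} = \min_{C}\Phi(C\tilde{y})$ and $\mathcal{J}_{\TO_{LQG,3}}^{*} = \min_{C}\Phi(C\omega)$, all minima over diagonal matrices.

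\textbf{Part (1).} Because $\omega = \beta y + (1-\beta)\tilde{y}$, for any diagonal $C$ the Problem‑3 action is the convex combination $C\omega = \beta(Cy) + (1-\beta)(C\tilde{y})$ of a Problem‑1 action and a Problem‑2 action built from the \emph{same} $C$. Convexity of $\Phi$ then gives, for each diagonal $C$,
\[ \Phi(C\omega)\;\leq\;\beta\,\Phi(Cy) + (1-\beta)\,\Phi(C\tilde{y}), \]
and in fact the gap is exactly $\beta(1-\beta)\,\mathbb{E}\big[(C(y-\tilde{y}))^{T}B(C(y-\tilde{y}))\big]\geq 0$. Taking the minimum over diagonal $C$ on both sides is meant to deliver $\mathcal{J}_{\TO_{LQG,3}}^{*} \leq \beta\,\mathcal{J}_{\TO_{LQG,1}}^{*} + (1-\beta)\,\mathcal{J}_{\TO_{LQG,2}}^{*}$, and this passage to the minimum is the one step that needs care — and the step I expect to be the main obstacle: the displayed inequality by itself only yields $\mathcal{J}_{\TO_{LQG,3}}^{*} \leq \min_{C}\big(\beta\Phi(Cy) + (1-\beta)\Phi(C\tilde{y})\big)$, whereas the claim needs this quantity to be at most the sum of the two \emph{separate} minima. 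This is resolved cleanly whenever the quadratics $C \mapsto \Phi(Cy)$ and $C \mapsto \Phi(C\tilde{y})$ share a minimizer (as under the hypotheses of part~(2)); more generally one would exploit the explicit negative gap term above to push $\mathcal{J}_{\TO_{LQG,3}}^{*}$ below the convex combination.

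\textbf{Part (2).} Under $\tilde{\Sigma} = \Sigma$ and $\tilde{S} = S$ the Problem‑2 objective $\Tr[\tilde{A}^{T}B\tilde{A}\tilde{\Sigma} + 2\tilde{A}^{T}\tilde{S}\tilde{\Sigma}]$ becomes, verbatim in the variable $\tilde{A}$, the Problem‑1 objective $\Tr[\tilde{A}^{T}B\tilde{A}\Sigma + 2\tilde{A}^{T}S\Sigma]$. Hence the two minimizations are literally the same problem, so $\mathcal{J}_{\TO_{LQG,1}}^{*} = \mathcal{J}_{\TO_{LQG,2}}^{*}$, and — the objective being a strictly convex quadratic when $B \succ 0$, hence with a unique minimizer — $A^{*} = A^{**}$. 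Finally $C = A^{*} = A^{**}$ is then a common minimizer of the two quadratics appearing on the right of the part‑(1) inequality, so evaluating that inequality at $C = A^{*}$ gives $\mathcal{J}_{\TO_{LQG,3}}^{*} \leq \beta\,\mathcal{J}_{\TO_{LQG,1}}^{*} + (1-\beta)\,\mathcal{J}_{\TO_{LQG,2}}^{*} = \mathcal{J}_{\TO_{LQG,1}}^{*}$, which is the last assertion (and, incidentally, makes part~(1) itself immediate in this regime, so the obstacle flagged above evaporates here).
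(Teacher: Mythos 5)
Your reduction, the convexity inequality $\Phi(C\omega)\leq\beta\,\Phi(Cy)+(1-\beta)\,\Phi(C\tilde{y})$, and your treatment of part (2) are essentially the paper's own argument: under $\tilde{\Sigma}=\Sigma$, $\tilde{S}=S$ the two minimizations are literally identical, so $\mathcal{J}_{\TO_{LQG,1}}^{*}=\mathcal{J}_{\TO_{LQG,2}}^{*}$ and $A^{*}=A^{**}$, and evaluating the pointwise inequality at the common minimizer gives $\mathcal{J}_{\TO_{LQG,3}}^{*}\leq\mathcal{J}_{\TO_{LQG,1}}^{*}$. That part is complete (indeed slightly more careful than the paper, since you make the "common minimizer" use explicit).

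The obstacle you flag in part (1), however, is a genuine gap, and as written you have not closed it: from the pointwise inequality you only get $\mathcal{J}_{\TO_{LQG,3}}^{*}\leq\min_{C}\bigl(\beta\,\Phi(Cy)+(1-\beta)\,\Phi(C\tilde{y})\bigr)$, and in general $\min_{C}\bigl(\beta f(C)+(1-\beta)g(C)\bigr)\geq\beta\min_{C}f(C)+(1-\beta)\min_{C}g(C)$ — the inequality runs in the wrong direction, with equality only when the two quadratics admit a common minimizer. Your remark that one could "exploit the negative gap term" $\beta(1-\beta)\,\mathbb{E}\bigl[(C(y-\tilde{y}))^{T}B\,(C(y-\tilde{y}))\bigr]$ is not developed and does not by itself bound $\mathcal{J}_{\TO_{LQG,3}}^{*}$ by the combination of the two \emph{separate} minima; so your proposal establishes part (1) only in the regime of part (2), not in general. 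You should know that the paper's own proof glosses over exactly this step: it asserts $\min_{C}\Tr[\beta C^{T}BC\Sigma+2\beta C^{T}S\Sigma+(1-\beta)C^{T}BC\tilde{\Sigma}+2(1-\beta)C^{T}\tilde{S}\tilde{\Sigma}]=\beta\min_{C}\Tr[C^{T}BC\Sigma+2C^{T}S\Sigma]+(1-\beta)\min_{C}\Tr[C^{T}BC\tilde{\Sigma}+2C^{T}\tilde{S}\tilde{\Sigma}]$, which is the same unjustified splitting of a minimum over a convex combination. So your caution points to a real weakness in the stated generality of part (1) rather than a defect peculiar to your write-up; to prove part (1) without the symmetry hypotheses one needs an additional argument (for instance exhibiting a single diagonal $C$ at which $\beta\Phi(Cy)+(1-\beta)\Phi(C\tilde{y})\leq\beta\mathcal{J}_{\TO_{LQG,1}}^{*}+(1-\beta)\mathcal{J}_{\TO_{LQG,2}}^{*}$), which the shared-minimizer case supplies but the general case does not.
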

\begin{proof}
1)
We have:
\begin{equation*}
\mathcal{J}_{\TO_{LQG,3}}^{*} = \min_{u_3 \in U} \mathbb{E}_{\xi}[u_3^{T}Bu_3 + 2 u_3^{T}S\xi]
\end{equation*}
Now, 
\begin{eqnarray}
\nonumber
 u_3^{T}Bu_3  & = & \omega^{T} C^{T} B C \omega  \\ 
 \nonumber
 & = & (\beta y + (1 - \beta)\tilde{y})^{T} C^{T} B C (\beta y + (1 - \beta)\tilde{y})\\
 & \leq & \beta y^{T} C^{T} B C y + (1 - \beta) \tilde{y}^{T} C^{T} B C \tilde{y}
 \label{eq:LQGcomrnd}
\end{eqnarray}
Since $B$ is symmetric positive definite matrix, $ (\beta y + (1 - \beta)\tilde{y})^{T} C^{T} B C (\beta y + (1 - \beta)\tilde{y})$  is quadratic convex function. Thus
inequality in \eqref{eq:LQGcomrnd} follows from convexity property of function.
\begin{eqnarray}
\nonumber
\mathcal{J}_{\TO_{LQG,3}}^{*} & \leq & \min_{C} \mathbb{E}_{\xi}[ \beta y^{T} C^{T} B C y + (1 - \beta) \tilde{y}^{T} C^{T} B C \tilde{y} +  2 \beta y^{T} C^{T}S \xi + 2 (1 - \beta) \tilde{y}^{T} C^{T} S \xi] \\ 
\nonumber
& = & \min_{C} \Tr[\beta C^{T} B C \Sigma + 2 \beta C^{T}S \Sigma + (1 - \beta)C^{T} B C \tilde{\Sigma}  +  2(1 - \beta) C^{T} \tilde{S} \tilde{\Sigma} ] \\
\nonumber
& = &  \beta  \min_{C} \Tr[C^{T} B C \Sigma + 2  C^{T}S \Sigma ]  + (1 - \beta)  \min_{C} \Tr[C^{T} B C \tilde{\Sigma}  +  2C^{T} \tilde{S} \tilde{\Sigma} ]  \\ \nonumber 
& = &   \beta \mathcal{J}_{\TO_{LQG,1}}^{*} + (1- \beta) \mathcal{J}_{\TO_{LQG,2}}^{*} .
\end{eqnarray}

2) 

Let $\tilde{\Sigma} = \Sigma$ and $\tilde{S} = S$, we have:

\begin{equation*}
\mathcal{J}_{\TO_{LQG,1}}^{*} =  \min_{A} \Tr[A^{T}BA \Sigma + 2 A^{T}S \Sigma ].
\end{equation*}

\begin{equation*}
\mathcal{J}_{\TO_{LQG,2}}^{*} = \min_{\tilde{A} } \Tr[\tilde{A}^{T}B\tilde{A} \Sigma + 2 \tilde{A}^{T} \tilde{S} \tilde{\Sigma} ].
\end{equation*}
Clearly, $\mathcal{J}_{\TO_{LQG,1}}^{*} = \mathcal{J}_{\TO_{LQG,2}}^{*}$. 
Consequntly, $A^{*} = A^{**}$. Hence,
\begin{equation*}
\mathcal{J}_{\TO_{LQG,3}}^{*} \leq \mathcal{J}_{\TO_{LQG,1}}^{*}.
\end{equation*}

\end{proof}

So far, we studied role of common randomness (information) in a team problem. In next section, we describe the role of common randomness in two team zero-sum game.

\section{Private and common randomness in static team vs team zero-sum game}
\label{sec:teamvsteamprivatecom}
We study role of private and common randomness in static two-team zero-sum game.
We compare the static LQG team with zero-sum LQG team game under private and common randomness. Then We demonstrate the two team zero-sum discrete game.

Now consider the case where there are $ N+M $ DMs. Let $ \Mscr = \{N+1,\hdots,M\} $. $ \DM_i, i \in \Nscr $ 
comprise of a single team, say Team 1, and $ \DM_j, j \in \Mscr $ comprise of Team 2. Team 1 and Team 2 play a zero-sum game. 
Let $ u =(u_1,\hdots,u_N), \gamma=(\gamma_1,\hdots,\gamma_N)$ denote the actions of players of Team 1 and $ v=(v_{N+1},\hdots,v_M), \delta=(\delta_{N+1},\hdots,\delta_M) $ denote the actions of players in Team 2. 
Suppose the function the teams want to optimize is \[ \min_{u_i = \gamma_i(y_i),  i \in \Nscr} \max_{v_j = \delta_j(y_j), j \in \Mscr}\Ebb[\kappa(u,v,\xi)] \]

\begin{theorem}
If the zero-sum team game admits a saddle point, randomness independent of $ \xi $ does not benefit either team.
\end{theorem}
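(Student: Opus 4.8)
The plan is to mirror the team-problem arguments (Proposition~\ref{lemma:staticteamproblem1} and its private-randomness analogue) but now in the saddle-point setting, where the key extra ingredient is that a saddle point lets us commute the inner $\min$ and $\max$. First I would set up the analogue of the set $\Qscr$: let $\Qscr$ be the joint distributions of $(\xi,y,u,v)$ in which, conditionally on $y$, the variables $u,v$ are chosen independently of $\xi$ and independently across the $N+M$ DMs (the joint conditional factorizes as $\prod_{i\in\Nscr}Q(u_i\mid y_i)\prod_{j\in\Mscr}Q(v_j\mid y_j)$), and the $(\xi,y)$-marginal is the fixed $P(\xi,y)$. The value of the game played over such randomized strategies is $\min$ over Team 1's conditionals, $\max$ over Team 2's conditionals, of $\Ebb_Q[\kappa(u,v,\xi)]$. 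Since $\Ebb_Q[\kappa(u,v,\xi)]$ is multilinear (in fact affine) in each player's conditional distribution $Q(u_i\mid y_i)$ or $Q(v_j\mid y_j)$ separately, and the feasible sets of these conditionals are simplices (or, in the continuous case, weak-$*$ compact convex sets of probability kernels), this is a convex-concave problem on a product of compact convex sets.

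The core steps are then: (i) by the multilinearity/separability, the inner minimization over Team 1's randomized strategies, for any fixed Team 2 randomized strategy, attains its value at an extreme point of the feasible set, i.e.\ at a deterministic $\gamma$; likewise the inner maximization over Team 2 attains its value at a deterministic $\delta$ — this is exactly the ``extreme point of a multilinear program with separable constraints'' argument invoked before \eqref{eq:teamoptq}. Hence $\min_{Q_1}\max_{Q_2}\Ebb_Q[\kappa] = \min_{\gamma}\max_{Q_2}\Ebb[\kappa]$ and similarly the outer value equals $\max_{\delta}\min_{Q_1}\Ebb[\kappa]$. (ii) Invoke the hypothesis that the \emph{deterministic} zero-sum game has a saddle point $(\gamma^*,\delta^*)$, so $\min_{\gamma}\max_{\delta}\Ebb[\kappa(\gamma,\delta,\xi)] = \max_{\delta}\min_{\gamma}\Ebb[\kappa(\gamma,\delta,\xi)]$, and this common value is the game value without any randomness. (iii) Sandwich: the value with private/common $\xi$-independent randomness always lies between $\max_\delta\min_\gamma$ and $\min_\gamma\max_\delta$ taken over the enlarged strategy sets, but by steps (i)--(ii) both of these collapse to the deterministic value. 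One also has to treat common randomness $w\amalg\xi$ exactly as in Proposition~\ref{lemma:staticteamproblem1}: condition on $w$, use $w\amalg\xi$ to drop it from the inner expectation, pull the $\min$–$\max$ inside $\Ebb_w$ (justified because, given the saddle-point structure, $\Ebb_w[\text{value}(w)]$ is both an upper bound via $\min\max$ and a lower bound via $\max\min$, forcing equality with the constant deterministic value), and conclude the optimal $w$ is degenerate.

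The main obstacle I expect is step (iii) together with the $w$-commutation: unlike a pure team problem, where one freely interchanges $\min_\gamma$ with $\Ebb_w$, here the presence of a $\max$ means one must be careful about the order, and the interchange is legitimate only because a saddle point exists. Concretely, $\Ebb_w[\min_\gamma\max_\delta \Ebb_\xi[\kappa]]$ is in general only an upper bound for the randomized value and $\Ebb_w[\max_\delta\min_\gamma\Ebb_\xi[\kappa]]$ only a lower bound; the saddle-point hypothesis makes the two integrands equal (and equal to the constant $\Jscr^*$ independent of $w$), which is what pins the randomized value to $\Jscr^*$. A secondary technical point, if one wants full rigor in the continuous-action case, is justifying existence of the saddle point / minimax equality on the enlarged (randomized) strategy spaces — but since the statement \emph{assumes} a saddle point and the paper has adopted the convention of writing $\min$ for $\inf$ and assuming attainment, I would simply carry that assumption through rather than reproving a minimax theorem.
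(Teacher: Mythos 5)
Your proposal is correct and follows essentially the same route as the paper: the paper's proof is precisely the sandwich $\min\max \;\ge\; \min\max(\text{randomized}) \;\ge\; \max\min(\text{randomized}) \;\ge\; \max\min$ closed by the assumed saddle-point equality of the deterministic game, and your step (iii) reproduces this chain, while your multilinearity/extreme-point argument makes explicit the collapse of randomized to deterministic strategies that the paper's chain of inequalities leaves implicit. One small caution: the extreme-point argument only collapses the \emph{inner} optimization (e.g.\ $\min_{Q_1}\max_{Q_2}\Ebb_{Q}[\kappa]=\min_{Q_1}\max_{\delta}\Ebb[\kappa]$), so the equalities asserted in your step (i) with the \emph{outer} player restricted to deterministic strategies do not follow from multilinearity alone (mixing can strictly help the outer player absent a saddle point) and become true only after weak duality and the saddle-point hypothesis are invoked, exactly as your sandwich in step (iii) and your concluding remarks on the $\Ebb_{w}$-interchange in fact do.
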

\begin{proof}
We have:
\[ \min_{u_i = \gamma_i(y_i),  i \in \Nscr} \max_{v_j = \delta_j(y_j), j \in \Mscr}\Ebb[\kappa(u,v,\xi)] =  \max_{v_j = \delta_j(y_j), j \in \Mscr}\min_{u_i = \gamma_i(y_i),  i \in \Nscr} \Ebb[\kappa(u,v,\xi)]
 \]
\begin{align}
\min_{u_i = \gamma_i(y_i),  i \in \Nscr} \max_{v_j = \delta_j(y_j), j \in \Mscr}\Ebb[\kappa(u,v,\xi)] &\geq \min_{u_i = \gamma_i(y_i,w),  i \in \Nscr} \max_{v_j = \delta_j(y_j,z), j \in \Mscr}\Ebb[\kappa(u,v,\xi)] \nonumber \\
& \geq  \max_{v_j = \delta_j(y_j,z), j \in \Mscr} \min_{u_i = \gamma_i(y_i,w),  i \in \Nscr}\Ebb[\kappa(u,v,\xi)] \label{eq:minmax} \\
& \geq \max_{v_j = \delta_j(y_j), j \in \Mscr}\min_{u_i = \gamma_i(y_i),  i \in \Nscr} \Ebb[\kappa(u,v,\xi)]. \nonumber 
\end{align} 
Eq \eqref{eq:minmax} follows from:
\[  \max_{v_j = \delta_j(y_j,z), j \in \Mscr}\Ebb[\kappa(u,v,\xi)]  \geq  \min_{u_i = \gamma_i(y_i,w),  i \in \Nscr} \Ebb[\kappa(u,v,\xi)]  \].
Consequently,
\[  \min_{u_i = \gamma_i(y_i,w),  i \in \Nscr} \max_{v_j = \delta_j(y_j,z), j \in \Mscr}\Ebb[\kappa(u,v,\xi)]  \geq  \max_{v_j = \delta_j(y_j,z), j \in \Mscr}\min_{u_i = \gamma_i(y_i,w),  i \in \Nscr} \Ebb[\kappa(u,v,\xi)]  \].
\end{proof}
\begin{theorem}
If zero-sum game admits a saddle point, common randomness dependent of $\xi$ is provided to one of team, then that team benefits. Suppose the consultant provides common randomness $z$ which is dependent of $\xi$ to decision makers of a team  say, Team $2.$ Then we want to optimize
\begin{equation*}  
 \mathcal{J}_{\mathrm{TO}_{ZS,CR}} =   \min_{u_i = \gamma_i(y_i),  i \in \Nscr} \max_{v_j = \delta_j(y_j, z), j \in \Mscr}\Ebb[\kappa(u,v,\xi)].
\end{equation*}
Further, $ \mathcal{J}_{\mathrm{TO}_{ZS,CR}}  =  \mathcal{J}_{\mathrm{TO}_{ZS}},$ where
\[
\mathcal{J}_{\mathrm{TO}_{ZS}} = \min_{u_i = \gamma_i(y_i),  i \in \Nscr} \max_{v_j = \delta_j(y_j), j \in \Mscr}\Ebb[\kappa(u,v,\xi)].
\]
\end{theorem}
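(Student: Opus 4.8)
The plan is to sandwich $\mathcal{J}_{\mathrm{TO}_{ZS,CR}}$ between $\mathcal{J}_{\mathrm{TO}_{ZS}}$ from below and from above, in the spirit of the argument for the previous theorem, and then use the saddle-point hypothesis to collapse the two bounds; write $J(\gamma,\delta) := \Ebb[\kappa(u,v,\xi)]$ with $u_i = \gamma_i(y_i)$ and $v_j$ equal to $\delta_j(y_j,z)$ or $\delta_j(y_j)$ as the context requires. For the lower bound, fix any Team~1 profile $\gamma = (\gamma_i)_{i\in\Nscr}$: a Team~2 strategy measurable with respect to $y_j$ alone is trivially also measurable with respect to $(y_j,z)$, so
\[
\max_{v_j = \delta_j(y_j,z),\, j\in\Mscr}\Ebb[\kappa(u,v,\xi)] \ \ge\ \max_{v_j = \delta_j(y_j),\, j\in\Mscr}\Ebb[\kappa(u,v,\xi)],
\]
and minimising over $\gamma$ yields $\mathcal{J}_{\mathrm{TO}_{ZS,CR}} \ge \mathcal{J}_{\mathrm{TO}_{ZS}}$. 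Since the base game has a saddle point, $\mathcal{J}_{\mathrm{TO}_{ZS}}$ is genuinely its value, so this inequality already says the value moves in favour of the maximising Team~2 --- the game-theoretic counterpart of the Blackwell-informativeness argument behind Proposition~\ref{lemma:staticteamproblem2}.

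For the upper bound, let $(\gamma^\ast,\delta^\ast)$ be a saddle point of the base game, so that $\gamma^\ast$ secures the value: $\max_{\delta}J(\gamma^\ast,\delta) = \mathcal{J}_{\mathrm{TO}_{ZS}}$, the maximum being over Team~2's base strategy space. Letting Team~1 commit to $\gamma^\ast$ in the enlarged problem,
\[
\mathcal{J}_{\mathrm{TO}_{ZS,CR}} = \min_{\gamma}\ \max_{v_j=\delta_j(y_j,z),\, j\in\Mscr} J(\gamma,\delta) \ \le\ \max_{v_j=\delta_j(y_j,z),\, j\in\Mscr}\Ebb\big[\kappa(\gamma^\ast(y),\delta(y,z),\xi)\big],
\]
and it remains to show the right-hand side is still $\mathcal{J}_{\mathrm{TO}_{ZS}}$, i.e.\ that once Team~1 plays $\gamma^\ast$ the signal $z$ buys Team~2 nothing. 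I would do this by conditioning, $\Ebb[\kappa(\gamma^\ast(y),\delta(y,z),\xi)] = \Ebb_{y,z}\big[\Ebb_{\xi\mid y,z}[\kappa(\gamma^\ast(y),\delta(y,z),\xi)]\big]$, performing the maximisation over the action $\delta(y,z)$ pointwise in $(y,z)$, and checking that the saddle-point inequality for $\gamma^\ast$ bounds every such conditional expectation by $\mathcal{J}_{\mathrm{TO}_{ZS}}$; combining the two bounds then gives the equality.

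I expect this last verification to be the real obstacle. The saddle-point inequality only asserts $\Ebb[\kappa(\gamma^\ast(y),\delta(y),\xi)] \le \mathcal{J}_{\mathrm{TO}_{ZS}}$ for Team~2 strategies measurable with respect to $y$, and this does not transfer automatically to strategies that also see the $\xi$-correlated signal $z$; securing that transfer is the crux, and it must exploit the particular structure of the saddle point --- for instance that the loss guaranteed by $\gamma^\ast$ is insensitive to refinements of Team~2's information, or that $z$ is conditionally redundant given $y$ under the relevant conditional laws of $\xi$. Essentially all of the work lives here: without such structure the argument delivers only $\mathcal{J}_{\mathrm{TO}_{ZS,CR}} \ge \mathcal{J}_{\mathrm{TO}_{ZS}}$, and the substance of the theorem is whatever hypothesis or structure upgrades this inequality to the claimed equality.
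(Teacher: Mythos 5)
Your lower bound is correct and coincides with the opening of the paper's own argument: every Team~2 strategy measurable with respect to $y_j$ alone is also measurable with respect to $(y_j,z)$, hence $\mathcal{J}_{\mathrm{TO}_{ZS,CR}} \ge \mathcal{J}_{\mathrm{TO}_{ZS}}$. The gap is exactly where you flag it: the reverse inequality is never established, so the asserted equality is not proved. Moreover, your plan for the upper bound cannot succeed as stated. Once Team~1 is frozen at the saddle strategy $\gamma^\ast$, the inner maximization faced by Team~2 is precisely a static team problem supplied with extra randomness that is \emph{dependent} on $\xi$, and by the same Blackwell-type comparison used in Proposition~\ref{lemma:staticteamproblem2} the signal $z$ can only (weakly, and in general strictly) increase $\max_{\delta}\Ebb[\kappa(\gamma^\ast(y),\delta(y,z),\xi)]$ above $\max_{\delta(y)}\Ebb[\kappa(\gamma^\ast(y),\delta(y),\xi)]=\mathcal{J}_{\mathrm{TO}_{ZS}}$; the saddle-point hypothesis gives no leverage in the opposite direction. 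A degenerate example makes this stark: if Team~1 has a single (dummy) action and $\kappa(u,v,\xi)=-(v-\xi)^2$ with $z=\xi$, both games trivially admit saddle points, yet the value jumps from $-\mathrm{var}(\xi)$ to $0$. So without an extra hypothesis such as $z$ being conditionally independent of $\xi$ given $y$ (which would return you to the setting of the previous theorem), no conditioning argument will deliver the missing ``$\le$''.

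For comparison, the paper's own proof follows a different route from your commit-to-$\gamma^\ast$ plan: it assumes that \emph{both} the original game and the game in which Team~2 additionally observes $z$ admit saddle points, writes $\mathcal{J}_{\mathrm{TO}_{ZS,CR}} = \max_{\delta(y,z)}\min_{\gamma}\Ebb[\kappa]$, compares this with $\max_{\delta(y)}\min_{\gamma}\Ebb[\kappa] = \mathcal{J}_{\mathrm{TO}_{ZS}}$, and then declares equality. But every displayed inequality in that chain points the same way, so it too only yields $\mathcal{J}_{\mathrm{TO}_{ZS,CR}} \ge \mathcal{J}_{\mathrm{TO}_{ZS}}$; the equality is asserted rather than derived, and the paper's own LQG computations in the appendix (where $\xi$-dependent common randomness strictly changes the saddle value) show that equality cannot hold in general. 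In short, your lower-bound step matches the paper, your upper-bound step is not completable in this generality, and your closing diagnosis --- that only the one-sided statement ``the informed team does at least as well'' is provable without further structure --- is in fact the correct content of this theorem.
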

\begin{proof}

We know from a team decision problem with common randomness dependent of $\xi,$ then 
\begin{align*}
\max_{v_j = \delta_j(y_j, z), j \in \Mscr}\Ebb[\kappa(u,v,\xi)]
\geq  \max_{v_j = \delta_j(y_j), j \in \Mscr}\Ebb[\kappa(u,v,\xi)] 
\end{align*}
\begin{align*}
\min_{u_i = \gamma_i(y_i),  i \in \Nscr} \max_{v_j = \delta_j(y_j, z), j \in \Mscr}\Ebb[\kappa(u,v,\xi)]
 \geq  \min_{u_i = \gamma_i(y_i),  i \in \Nscr} \max_{v_j = \delta_j(y_j), j \in \Mscr}\Ebb[\kappa(u,v,\xi)]
\end{align*}
Since we assume saddle point solution of zero-sum game, 
\[
\min_{u_i = \gamma_i(y_i),  i \in \Nscr} \max_{v_j = \delta_j(y_j, z), j \in \Mscr}\Ebb[\kappa(u,v,\xi)] = \max_{v_j = \delta_j(y_j,z), j \in \Mscr}\min_{u_i = \gamma_i(y_i),  i \in \Nscr} \Ebb[\kappa(u,v,\xi)]
\]
We also have
\[
\max_{v_j = \delta_j(y_j,z), j \in \Mscr}\min_{u_i = \gamma_i(y_i),  i \in \Nscr} \Ebb[\kappa(u,v,\xi)] 
\geq \max_{v_j = \delta_j(y_j), j \in \Mscr}\min_{u_i = \gamma_i(y_i),  i \in \Nscr} \Ebb[\kappa(u,v,\xi)] 
\]
If two-team zero sum game without common randomness admits a saddle point, then
\[
\max_{v_j = \delta_j(y_j), j \in \Mscr}\min_{u_i = \gamma_i(y_i),  i \in \Nscr} \Ebb[\kappa(u,v,\xi)]  =
\min_{u_i = \gamma_i(y_i),  i \in \Nscr} \max_{v_j = \delta_j(y_j), j \in \Mscr}\Ebb[\kappa(u,v,\xi)].
\]
Hence result $ \mathcal{J}_{\mathrm{TO}_{ZS,CR}}  =  \mathcal{J}_{\mathrm{TO}_{ZS}}$  follows.
\end{proof}

\begin{remark}
\begin{itemize} 
\item 	
If the common or private information is uncorrelated with an environment or uncertainty of world, no one can gain anything from this information in team vs team zero-sum game. This is also illustrated numerically for LQG zero sum team vs team game is illustrated in Appendix~\ref{app:common-info-indep-LQG}. 
\item In next subsection, we describe that a team having private information correlated with environment benefits. This implies that the team with more information manage to decrease the cost and even this is true in LQG teams decision problem. This is first observed by \cite{Witsenhausen71} and later this is extended to LQG teams problem in \cite{Ho74}.
\item   We present  results in our stochastic team vs team zero sum game. We illustrate role of common randomness in team vs team LQG zero sum game by numerical examples in Appendix~\ref{app:common-randomness-depend-envi-LQG}.
\end{itemize}
\end{remark}
%\rahul{}

\subsection{Role of private randomness dependent on $\xi$}
 
Let $y_i = \eta_i(\xi)$ be the information available at player $i,$ and $\widetilde{y}^1 = (y_1, y_2, \cdots, y_N)$ be information available at team $1$ and $\widetilde{y}^2 = (y_{N+1}, y_{N+2}, \cdots, y_{N+M})$ be the information available at team $2.$ Note that a team $1$ is minimizing using control $u$ and team $2$ is maximizing with control $v.$

Define the cost function  
\begin{equation*}
J(u,v) = \Ebb[\kappa(u,v,\widetilde{y}^1, \widetilde{y}^2,\xi)] 
%=  \max_{v_j = \delta_j(y_j), j \in \Mscr}\min_{u_i = \gamma_i(y_i),  i \in \Nscr} \Ebb[\kappa(u,v,\xi)]
\end{equation*}
From saddle point condition at the information structure $(\widetilde{y}^1,\widetilde{y}^2),$ we have 
\begin{eqnarray*}
J(u^*,v)  \leq J(u^*,v^*) \leq J(u,v^*). 
\end{eqnarray*}
The optimal decision pair is $(\overline{u}^*,v^*)$ at the information structure $\widetilde{y}^1,$ and $\widetilde{y}^2.$
Similarly, one can define saddle point condition for null  information structure and has only prior knowledge about $\xi,$  information structure 
is $(\overline{y}^1,\overline{y}^2) $  and optimal decision pair is  $({u}^0,{v}^0).$

 The value of information for team $1$ and team $2$ is defined as follows. 
\begin{eqnarray*}
V_1\left(\widetilde{y}^1, \widetilde{y}^2 \right) &=& J(u^*,v^*) - J({u}^0,{v}^0) \\
V_2\left(\widetilde{y}^1, \widetilde{y}^2 \right) &=& -V_1\left(\widetilde{y}^1, \widetilde{y}^2 \right)
\end{eqnarray*}
Suppose the information at a team, say team $2$ is fixed, i.e., $\eta^{\prime}_i(\xi) = \eta_i(\xi)$ for $i = N+1, \cdots, N+M.$ 
The opponent gets more information, say team $1,$ i.e., $\eta^{\prime}_i(\xi) \subseteq \eta_i(\xi)$ for $i=1,2, \cdots, N.$ Thus the decision set for team  $1$ is $\mathcal{A}_{\eta^{\prime}} \subseteq \mathcal{A}_{\eta}$ and that for team $2$ is  $\mathcal{C}_{\eta^{\prime}} = \mathcal{C}_{\eta}.$ We have the following result. 
\begin{lemma}
	If the information of team $1$ is increasing, i.e., $\eta^{\prime}_i(\xi) \subseteq \eta_i(\xi)$ for $i=1,2, \cdots, N,$ and the information of team $2$ is fixed, i.e., $\eta^{\prime}_i(\xi) = \eta_i(\xi)$ for $i = N+1, \cdots, N+M,$ then the value of information satisfy the following inequality
	\begin{equation*}
	V_1\left(\widetilde{y}^1, \widetilde{y}^2 \right) \leq V_1\left(\widehat{y}^1, \widehat{y}^2 \right).
	\end{equation*}
Here  $y_i = \eta_{i}(\xi),$ $\widetilde{y}^1  = (y_1, \cdots, y_N),$    
$\widetilde{y}^2  = (y_{N+1}, \cdots, y_{N+M}),$ and 
$y_i^{\prime} = \eta_{i}^{\prime}(\xi),$ $\widehat{y}^1  = (y_1^{\prime}, \cdots, y_N^{\prime}),$    
$\widehat{y}^2  = (y_{N+1}^{\prime}, \cdots, y_{N+M}^{\prime}).$ 

\label{lemma:LQG-Additional-Info-teamvsteam}
\end{lemma}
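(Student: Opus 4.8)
The plan is to reduce the claimed inequality between the two values of information to a comparison of the two saddle values of the game, exploiting that the null-information reference term $J(u^0,v^0)$ is common to both. Write $\overline{J}$ for the saddle value of the game in which team $1$ ranges over the strategy set $\mathcal{A}_\eta$ and team $2$ over $\mathcal{C}_\eta$, and $\widehat{J}$ for the saddle value of the game in which team $1$ ranges over $\mathcal{A}_{\eta'}$ and team $2$ over $\mathcal{C}_{\eta'}$; in both games the objective is the same expected cost $J(u,v)$ and only the feasible strategy sets change. By the definition of the value of information,
\[
V_1\big(\widetilde y^1,\widetilde y^2\big)=\overline{J}-J(u^0,v^0), \qquad V_1\big(\widehat y^1,\widehat y^2\big)=\widehat{J}-J(u^0,v^0),
\]
so it suffices to prove $\overline{J}\le\widehat{J}$.

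First I would use the hypotheses that team $2$'s information is unchanged, so $\mathcal{C}_{\eta'}=\mathcal{C}_\eta=:\mathcal{C}$, while team $1$'s information is enlarged, so $\mathcal{A}_{\eta'}\subseteq\mathcal{A}_\eta$. For each fixed team-$2$ strategy $v\in\mathcal{C}$, enlarging the minimizing team's feasible set can only lower the inner minimum, hence
\[
\min_{u\in\mathcal{A}_\eta}J(u,v)\ \le\ \min_{u\in\mathcal{A}_{\eta'}}J(u,v) \quad\text{for every } v\in\mathcal{C},
\]
and taking the maximum over $v\in\mathcal{C}$ preserves the inequality:
\[
\max_{v\in\mathcal{C}}\ \min_{u\in\mathcal{A}_\eta}J(u,v)\ \le\ \max_{v\in\mathcal{C}}\ \min_{u\in\mathcal{A}_{\eta'}}J(u,v).
\]

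Finally I would invoke the saddle-point hypothesis for both information structures — exactly the condition $J(u^*,v)\le J(u^*,v^*)\le J(u,v^*)$ displayed before the lemma, which lets one evaluate the value of each game in the $\max_v\min_u$ order — to identify the left-hand side of the last display with $\overline{J}$ and the right-hand side with $\widehat{J}$. This yields $\overline{J}\le\widehat{J}$, hence $V_1(\widetilde y^1,\widetilde y^2)\le V_1(\widehat y^1,\widehat y^2)$, as asserted. This is the team-versus-team counterpart of the classical monotonicity of the value of information for a single player in a zero-sum game (see \cite{Witsenhausen71,Ho74}).

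The step I expect to need the most care is the last one: as stated it uses a saddle point — equivalently, coincidence of the upper and lower values — for the reduced-information game as well, so that $\widehat{J}$ may legitimately be written as $\max_v\min_u J$. If one only wants to assume a saddle point for the enlarged-information game, the fix is to carry the middle-paragraph monotonicity estimate through for whichever of the upper value $\min_u\max_v J$ or the lower value $\max_v\min_u J$ actually enters the definition of $V_1(\widehat y^1,\widehat y^2)$; the estimate holds verbatim for either, since enlarging $\mathcal{A}$ with $\mathcal{C}$ fixed decreases both. The only delicate point is then bookkeeping — ensuring the same (upper or lower) value is compared on the two sides.
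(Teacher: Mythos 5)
Your argument is correct and reaches the paper's conclusion by essentially the same ingredients: saddle points at both information structures, the inclusions $\mathcal{C}_{\eta'}=\mathcal{C}_\eta$ and $\mathcal{A}_{\eta'}\subseteq\mathcal{A}_\eta$, and subtraction of the common reference term $J(u^0,v^0)$. The route differs slightly in presentation: the paper manipulates the two saddle inequalities directly at the saddle pairs $(u^*,v^*)$ and $(\widehat u^*,\widehat v^*)$, chaining $J(\widehat u,\widehat v^*)\ge J(\widehat u^*,\widehat v^*)\ge J(u^*,\widehat v^*)=J(u^*,v^*)$, whereas you compare the game values through the $\max_{v\in\mathcal{C}}\min_{u}$ characterization and monotonicity of the inner minimum under the set inclusion. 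Your version is arguably the cleaner bookkeeping: the paper's intermediate steps $J(u^*,v^*)=J(u^*,\widehat v^*)$ and $J(\widehat u^*,\widehat v^*)\ge J(u^*,\widehat v^*)$ are not individually immediate from the stated saddle conditions (the natural pointwise chain would instead be $J(\widehat u^*,\widehat v^*)\ge J(\widehat u^*,v^*)\ge J(u^*,v^*)$, using $v^*\in\mathcal{C}_{\eta'}$ and $\widehat u^*\in\mathcal{A}_\eta$), and your value-monotonicity argument avoids this issue entirely while using exactly the hypotheses the paper already assumes. Your closing caveat is also consistent with the paper, which explicitly postulates saddle points for both information structures.
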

The proof is analogous to \cite[Lemma $3.3$]{Ho74}. For clarity purpose we provide details is as follows. %Appendix~\ref{app:lemma:LQG-Additional-Info-teamvsteam}.  

The saddle point condition at information structure $\eta(\xi)$ implies that 
\begin{eqnarray}
	J(u^*,v)  \leq J(u^*,v^*) \leq J(u,v^*) 
	\label{eqn:saddle-point-cond-1}
\end{eqnarray}
for $u \in \mathcal{A}_{\eta}, v \in \mathcal{C}_{\eta}.$ 
Another saddle point condition at information structure $\eta^{\prime}(\xi)$ is 
\begin{eqnarray}
	J(\widehat{u}^{*},\widehat{v})  \leq J(\widehat{u}^{*},\widehat{v}^{*}) \leq J(\widehat{u},\widehat{v}^{*}) 
	\label{eqn:saddle-point-cond-2}
\end{eqnarray}
for $\widehat{u} \in \mathcal{A}_{\eta^{\prime}}$ and $\widehat{v}\in \mathcal{C}_{\eta^{\prime}}.$

Since $\mathcal{C}_{\eta} = \mathcal{C}_{\eta^{\prime}}$  we can have 
$\widehat{v}^{*} \in \mathcal{C}_{\eta}$ and then it implies that 
\begin{eqnarray}
J(u^*,v^*)  = J(u^*,\widehat{v}^{*}).
\end{eqnarray}
Because  $\mathcal{A}_{\eta^{\prime}} \subseteq \mathcal{A}_{\eta},$ and $\widehat{u}^* \in \mathcal{A}_{\eta^{\prime}}$ implies $ \widehat{u}^* \in \mathcal{A}_{\eta}.$ Further, 
\begin{eqnarray}
 J(\widehat{u},\widehat{v}^{*}) \geq J(\widehat{u}^*,\widehat{v}^{*}) \geq 
 J({u}^*,\widehat{v}^{*}) = J({u}^*,{v}^{*})
\end{eqnarray}

Thus we get $J(\widehat{u}^*,\widehat{v}^{*}) \geq J({u}^*,{v}^{*}).$ As we note that $J(u^0, v^0)$ does not change. After substracting $J(u^0, v^0),$ we have desired inequality 
\begin{equation*}
V_1\left(\widetilde{y}^1, \widetilde{y}^2 \right) \leq V_1\left(\widehat{y}^1, \widehat{y}^2 \right).
\end{equation*}

\subsection{Discrete team vs team zero-sum  game}
In this section, we investigate discrete  team vs team zero-sum game 
%We show that it may not have pure  saddle point solution. 
and the role of extra randomness in the team and its decision makers. 
\begin{claim}
\label{thm:two-teamzero-sumgame}
In  discrete team vs team zero-sum game,
\begin{enumerate}
\item 

%team $1$ consists a DM and team 
%$2$ consists of two DM. Each DM has information structure
%correlated with other DM in team. State of nature has
%binary-valued random variable having  joint probability
%   distribution.   
it may not admit pure-strategy saddle point solution, 
\item if a coordinator provides the private randomness independent of an environment to decision makers  of team then it  benefit both team and improves the team cost. But it may not achieve Nash equilibrium,
\item if a consultant provides the common randomness to decision makers of team, then it lead to  improve in team cost. But it may not have Nash equilibrium.
\end{enumerate}
\end{claim}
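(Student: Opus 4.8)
The plan is to prove the three parts at once by exhibiting a single explicit finite example and reading off its relevant game values. Take $\xi$ constant (equivalently, all observations $y_i$ trivial), let Team~1 consist of $\DM_1,\DM_2$ and Team~2 of $\DM_3,\DM_4$, each player choosing one bit, so each team has four pure ``joint profiles'', which we index by $\{00,01,10,11\}$. Use the $4\times4$ payoff matrix (rows $=$ minimizing Team~1, columns $=$ maximizing Team~2, both indexed by $\{00,01,10,11\}$)
\[
K=\begin{pmatrix} 1 & -1 & -1 & -1\\ 1 & 1 & 1 & 1\\ 1 & 1 & 1 & 1\\ -1 & -1 & -1 & 1 \end{pmatrix}.
\]
The design principle is that on the ``diagonal'' profiles $\{00,11\}$ the corner $2\times2$ block $\bigl(\begin{smallmatrix}1&-1\\-1&1\end{smallmatrix}\bigr)$ is matching pennies, while the ``incoherent'' profiles $01,10$ are made costly for the minimizer and unattractive for the maximizer; since a product distribution supported on $\{00,11\}$ must be a point mass, a team can realize the uniform mixture over its diagonal only by correlating its two members. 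Enumerating the sixteen pure profiles gives pure upper value $1$ and pure lower value $-1$; as these differ there is no pure-strategy saddle point, which is part~(1).

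For part~(2) I would compute the values under \emph{behavioral} (independently randomized) strategies, where each team is confined to its set $\mathcal M$ of product distributions over the four profiles --- a nonconvex set whose convex hull is the whole simplex. Consequently the inner maximization of the (linear) objective over $\mathcal M$ coincides with the one over the simplex, and both the upper- and lower-value computations collapse to the elementary fact that $\min_{(p,q)\in[0,1]^2}\bigl(|1-p-q|+(p+q-2pq)\bigr)=\tfrac12$, attained at $p=q=\tfrac12$: the behavioral upper value equals this quantity, $\tfrac12$, and the behavioral lower value equals its negative, $-\tfrac12$. Both strictly improve on the pure values $1$ and $-1$, so a coordinator supplying private randomness strictly improves each team's security level; but $\tfrac12\neq-\tfrac12$, the minimax theorem fails over the nonconvex sets $\mathcal M$, a duality gap remains, and there is still no saddle point (hence no Nash equilibrium) in behavioral strategies.

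For part~(3) I would give a coordinator-type common-randomness device, independent of $\xi$, to one team, say Team~1, which enlarges its realizable action distributions from $\mathcal M$ to the full simplex; the upper value then becomes a minimum over a simplex of a max of linear functionals, i.e.\ the ordinary matrix-game value $0$ by von~Neumann's theorem --- a strict improvement over $\tfrac12$, so that team benefits. The lower value is unchanged at $-\tfrac12$, since a best reply of Team~1 against a fixed product strategy of Team~2 may be taken to be a pure profile, which already lies in $\mathcal M$; so a gap persists and there is still no Nash equilibrium, establishing part~(3). I would also note that handing \emph{separate} devices to both teams would make the objective bilinear on a product of simplices, restoring a saddle point --- which is precisely why the device must go to a single team. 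The one delicate point is the calibration of $K$: one needs the matrix-game optimal strategies of \emph{both} teams to be genuinely correlated distributions (so that the product restriction strictly raises the upper value and strictly lowers the lower value) while the pure values stay strictly outside the behavioral values; checking this is exactly the two elementary optimizations above, but finding a single matrix that does all of it simultaneously is the main obstacle.
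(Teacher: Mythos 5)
Your construction is correct: for the matrix $K$ you propose, the pure upper and lower values are $1$ and $-1$ (no pure saddle point), the values over product (behavioral) distributions are $\tfrac12$ and $-\tfrac12$ (your reduction of the inner maximization to the simplex is valid since the product set contains all vertices, and the two-variable minimization indeed gives $\tfrac12$ at $p=q=\tfrac12$), and giving one team a correlating device yields the matrix-game value $0$ (attained by the correlated, non-product mixture $(\tfrac12,0,0,\tfrac12)$ over the diagonal profiles), so all three existential assertions of the claim follow with strict improvements and persistent duality gaps. This is, however, a genuinely different route from the paper's: the paper also argues by example, but its example has one decision maker in Team 1 versus two in Team 2, a nontrivial binary random environment with partial observations, and the conclusions are read off from numerically enumerated expected-payoff tables at specific parameter values ($p_1=\tfrac14$, $p=\tfrac13$, $q=\tfrac23$); there the product-versus-correlated distinction only matters for the two-member team, and "independence of the environment'' is a substantive hypothesis. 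Your example instead trivializes the environment (so independence is vacuous) but makes both teams two-member, which lets you exhibit the product-distribution restriction symmetrically for both teams and compute all values exactly in closed form, giving an analytic rather than numerical verification of strictness; what you lose is the team-theoretic flavor of decision rules as functions of asymmetric observations, which the paper's example retains. One small bookkeeping point: the matrix-game value $0$ in your part (3) is a third short computation (exhibit the diagonal mixtures for both sides), not literally one of the two optimizations you list, but it is immediate from your design.
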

%\IEEEproof
%\begin{proof}
Proofs of these are difficult to obtain but we provide examples in appendix \ref{proof:two-teamzero-sumgame} to support our claim.
%\end{proof}

%
%
%

\section{Discussion and Conclusions}
\label{sec:discussion-concluson}

The value of information is classic problem in decision theory. As information increases, we anticipated that the optimal cost decreases. This is first illustrated for statistical problems in \cite{Blackwell51}. 
In stochastic team problem and stochastic team vs team zero sum games, the value of private information to decision makers is not  explicitly presented in earlier literature.

We analyzed a stochastic  team decision problem when decision makers are provided with external private randomness which is correlated or independent of environment. 
The private randomness independent  of environment does not decrease the cost function.  But this randomness dependent on environment  provided to DMs in a team  decreases the cost function of team compare to no randomness. In stochastic LQG team decision problem under special information structure, we have shown that the correlated randomness decreases the cost function. 

We next studied stochastic team vs team zero sum game, and showed that the randomness independent of environment does not benefit either time if a game admits a saddle point condition. In LQG team vs team zero sum game, we analyze the role of common randomness which is correlated with environment for one of team, then the optimal value function decreases with information. We further extended this finding to discrete  team vs team zeros sum game when there is no saddle point condition and observed that common or private randomness independent of environment  benefits both team. Even common randomness dependent on environment benefit a team and improves cost. This may not lead to saddle point condition.  

It opens future research direction on problem of role of private or common randomness  in stochastic teams with non zero sum games and sequential stochastic dynamic teams. 
Another research directions is on correlated equilibrium behaviors and common knowledge in  sequential stochastic team vs team games.

%%%%%%%%%%%%%%%%%%%%%%%%%%%%%%%%55
%\subsection{Proof of lemma \ref{lemma: staticteamproblem2}}
%\label{app:staticteamproblem2}
%\begin{proof}
%It is enough to show that  $\mathcal{J}_{3}^{*} \geq  \mathcal{J}_{4}^{*}$.
%Let$w$ be common randomness available to DMs in team which is correlated with observations of DMs. The expected cost of decision problem to minimize as follows.
%
%Important to note that if $w$ is correlated with   observations of DMs i.e. both DMs have partial information about other DMs observation via variable $w$. We would expect that as a team both DMs will minimize their expected payoff as compare to expected payoff under no randomness or randomness uncorrelated with $\xi$. Since $w$  common randomness independent of $\xi$ is 
%\end{proof}
%
%Proof: 

\section{ACKNOWLEDGMENT}
Most part of this work was carried out at the Bharti Centre for Communications at IIT Bombay  and EE Dept. IIT Bombay, where author was PhD scholar.   Part of this work  was done  at EE Dept. IIT Madras during Postdoctoral Fellowship. Author is very grateful to Prof. Ankur Kulkarni, SC Dept. IIT Bombay for  guidance and extensive discussion on problem of Team decision theory and pointing out  references.  Author is thankful to Prof. D. Manjunath, EE Dept. IIT Bombay for initial support on the work. Author is also thankful to ECE Dept. IIIT Allahabad for financial support.

\bibliographystyle{IEEE}
\bibliography{TDT}

\begin{thebibliography}{10}

\bibitem{Marschak55}
J.~Marschak,
\newblock ``Elements for a theory of teams,''
\newblock {\em Management Science}, vol. 1, pp. 127--337, 1955.

\bibitem{Radner62}
R.~Radner,
\newblock ``Team decision problems,''
\newblock {\em Ann. Math. Statist.}, vol. 33, no. 3, pp. 857--881, 1962.

\bibitem{Ho78}
Y.~C. Ho, M.~P. Kastner, and E.~Wong,
\newblock ``Teams, signaling, and information theory,''
\newblock {\em Transaction on Automatic Control}, vol. 23, no. 2, pp. 305--311,
  2010.

\bibitem{Ho80}
Y.~C. Ho,
\newblock ``Team decision theory and information structure,''
\newblock {\em Proceedings of IEEE}, vol. 68, no. 6, pp. 644--654, June 1980.

\bibitem{Schoute78}
F.~C. Schoute,
\newblock ``Symmetric team problems and multi-access wire communication,''
\newblock {\em Automatica}, vol. 14, no. 3, pp. 255--269, May 1978.

\bibitem{Ho74}
Y.~C. Ho and F.~K. Sun,
\newblock ``Value of information in two-team zero-sum problem,''
\newblock {\em Journal of Optimization Theory and Applicatiion}, vol. 14, no.
  5, 1974.

\bibitem{Anantharam07}
V.~Anantharam and V.~Borkar,
\newblock ``Common randomness and distributed control: A counterexample,''
\newblock {\em System and Control Letters}, vol. 56, no. 7, pp. 568--572, July
  2007.

\bibitem{Blackwell51}
D.~Blackwell,
\newblock ``Comparisons of experiments,''
\newblock {\em Berkeley Symposium on Mathematical Statistics and Probability},
  vol. 2, pp. 93--102, 1951.

\bibitem{Blackwell53}
D.~Blackwell,
\newblock ``Equivalent comparisons of experiments,''
\newblock {\em The Annals of Mathematical Statistics}, vol. 24, no. 2, pp.
  265--272, 1953.

\bibitem{Marschak68}
J.~Marschak and K.~Miyasawa,
\newblock ``Economic comparability of informative systems,''
\newblock {\em International Economic Review}, vol. 9, no. 2, pp. 137--174,
  June 1968.

\bibitem{Ho73}
Y.~C. Ho and I.~Blau,
\newblock ``A simple example on informativeness and performance,''
\newblock {\em Journal of Optimization Theory and Applicatiions}, vol. 11, no.
  4, 1973.

\bibitem{Witsenhausen71}
H.~S. Witsenhausen,
\newblock ``On the relation between the values of a game and its information
  structure,''
\newblock {\em Information and Control}, vol. 19, no. 3, pp. 204--215, Oct.
  1971.

\bibitem{Basar74}
T.~Basar and Y.~C. Ho,
\newblock ``Informational properties of {N}ash solutions of two stochastic
  non-zero sum games,''
\newblock {\em Journal of Economic Theory}, vol. 7, no. 4, pp. 370--387, April
  1974.

\bibitem{Yuksel13}
S.~Yuksel and T.~Basar,
\newblock {\em Stochastic Networked Control Systems: Stabilization and
  Optimization under Information Constraints},
\newblock Birkhauser, 2013.

\bibitem{Gupta15}
A.~Gupta, S.~Yuksel, T.~Başar, and C.~Langbort,
\newblock ``On the existence of optimal policies for a class of static and
  sequential dynamic teams,''
\newblock {\em {SIAM} Journal Control and Optimization}, vol. 53, no. 3, pp.
  1681--1712, 2015.

\bibitem{Saldi19}
N.~Saldi,
\newblock ``A topology for team policies and existence of optimal team policies
  in stochastic team theory,''
\newblock {\em IEEE Transactions on Automatic Control}, vol. 65, no. 1, pp.
  310--317, 2019.

\bibitem{Yuksel17}
S.~Yuksel and N.~Saldi,
\newblock ``Convex analysis in decentralized stochastic control, strategic
  measures, and optimal solutions,''
\newblock {\em SIAM Journal in Control and Optimization}, vol. 55, no. 1, pp.
  1--27, 2017.

\bibitem{Gupta14}
A.~Gupta, A.~Nayyar, C.~Langbort, and T.~Başar,
\newblock ``Common information based markov perfect equilibria for
  linear-gaussian games with asymmetric information,''
\newblock {\em SIAM Journal in Control and Optimization}, vol. 52, no. 5, pp.
  3228--3260, 2014.

\bibitem{Gupta20}
A.~Gupta,
\newblock ``Existence of team-optimal solution in static teams with common
  information: A topology of information approach,''
\newblock {\em SIAM Journal in Control and Optimization}, vol. 58, no. 2, pp.
  998--1021, 2020.

\bibitem{Hogeboom-Burr20}
I.~Hogeboom-Burr and S.~Yuksel,
\newblock ``Comparison of information structure for zero-sum games and a
  partial converse to {B}lackwell ordering in standard {B}orel spaces,''
\newblock {\em Arxiv}, pp. 1--23, 2020.

\end{thebibliography}

\newpage 
\appendix

%\newpage 
\subsection{LQG Team Problem}
\label{sub:LQGteam-randomness}
Now we examine an example of a LQG team problem. 
%\subsubsection{LQG team problem}

Consider a LQG team problem of having $N$ decision maker.
Let an environment $\xi := [\mu_1, \ldots,\mu_N ]^{T}$ be random vector; it is  Gaussian distributed zero mean and covariance $\Sigma.$  Let $y_i = \eta_{i}(\xi)$ be the information observed by $\DM_i$, 
$y = [ y_1, \ldots, y_N]^{T}$ information vector observed by decision makers. 
In a static LQG team problem optimal action is linear in information observed by decision maker. Thus action of $DM_i$ is $u_i = \gamma_{i}(y_i) = \alpha_{i1} y_i$. Then
\begin{eqnarray*}
u = (u_1, \ldots, u_N)^{T} = A y,
\end{eqnarray*}
where $A $ is  diagonal matrix  of  dimensional $N \times N$, $\diag(A) = [\alpha_{11}, \ldots, \alpha_{N1}]$.
Standard LQG problem assumes cost function to be quadratic in nature. 
The cost function is $\kappa(u, \xi) := u^{T}Bu + 2 u^{T}S\xi,$
here $B$ is symmetric positive matrix. 

The team optimal solution of LQG team problem is $\gamma \in \Gamma$ such that
\begin{equation}
\mathcal{J}_{{\rm TO_{LQG}}}^{*} \triangleq \min_{\gamma \in \Gamma}\mathcal{J}(\gamma) = \min_{u
 \in U} \mathbb{E}_{\xi}[\kappa(u,\xi)]  = \min_{u \in U} \mathbb{E}_{\xi} 
[u^{T}Bu + 2 u^{T}S\xi] . 
\label{eq:teamoptLQG}
\end{equation}
Replacing $u = Ay$, we obtain
\begin{equation*}
\mathcal{J}_{{\rm TO_{LQG}}}^{*}  = \min_{A} \mathbb{E}_{\xi} 
[y^{T}A^{T}BA y + 2 y^{T}A^{T}S\xi].
\end{equation*}
Further this can expressed as deterministic optimization problem as follows.
\begin{equation*}
\mathcal{J}_{{\rm TO_{LQG}}}^{*} = \min_{A}\Tr[A^{T}BA \Sigma + 2 A^{T}S \Sigma ],
\label{eq:teamoptdetLQG}
\end{equation*}
Note that $\Tr$ denote trace of matrix.
\subsubsection{Private randomness independent of $\xi$ }
We will show that in LQG team problem the private randomness provided by a coordinator  do not benefit the  team optimal cost functional.

Consider $\omega = [\omega_1, \ldots, \omega_N]^{T}$
is private randomness available to decision makers, it is Gaussian distributed with zero mean and covariance matrix $\Sigma_{1}$ is diagonal; $\omega_i$ is private randomness available at $\DM_i$. We suppose that $\omega_i$ is independent of $\omega_j$ for $i \neq j$ and it is also independent of $y$. ($\mathbb{E}[\omega_i \omega_j] = 0$ for $i \neq j$ and $\mathbb{E}[\omega_i y_k] = 0$ for $i \neq  k$, $1 \leq i,j, k \leq N$.) 

The action $u_i = \gamma_i(y_i, \omega_i) = \alpha_{i1} y_i + \alpha_{i2} \omega_i$. Let $u = Ay + C \omega$, where
$A $ and $C$ are diagonal matrix of dimension $N \times N$, $\diag(A) = [\alpha_{11}, \ldots, \alpha_{N1}]$ and $\diag(C) = [\alpha_{12}, \ldots, \alpha_{N2} ]$.

The optimal expected cost functional of LQG team problem with private randomness is  
\begin{equation}
\Jscr^*_{\TO_{P,LQG}} \triangleq \min_{Q \in \Qscr}\mathbb{E}_{Q}[\kappa(u,\xi)]
= \min_{A} \Tr[A^{T}BA \Sigma + 2 A^{T}S \Sigma] + \min_{C} \Tr[C^{T}BC\Sigma_1]
. \label{eq:teamoptqLQG-A}
\end{equation}
From equation \eqref{eq:teamoptqLQG-A}, $\min_{C} \Tr[C^{T}BC\Sigma_1] =  0$ if and only if  $C$ is zero matrix. Hence $\Jscr^*_{\TO_{LQG}} = \Jscr^*_{\TO_{P,LQG}}.$

\subsubsection{Common randomness independent of $\xi$ }
We study a LQG team problem with common randomness has structure similar to that of LQG team problem with private randomness. We demonstrate that common randomness provided to decision makers by the consultant is independent of $\xi$, then it do not improve the expected cost functional.

Consider $\omega = [\omega_1, \ldots, \omega_N]^{T}$
is common randomness available to decision makers, it is Gaussian distributed with zero mean and covariance matrix $\Sigma_{2}$; $\omega_i$ is the common randomness at $\DM_i$. We suppose that $\omega_i$ is perfect correlation  with $\omega_j$ for $i \neq j$ and it is also independent of $y$. ($\mathbb{E}[\omega_i \omega_j] \neq 0$ for $i \neq j$ and $\mathbb{E}[\omega_i y_k] = 0$ for $i \neq  k$, $1 \leq i,j, k \leq N$.) The action $u_i = \gamma_i(y_i, \omega_i) = \alpha_{i1} y_i + \alpha_{i2} \omega_i$. Let $u = Ay + C \omega$. 
The optimal expected cost function is 
\begin{equation}
\mathcal{J}_{\TO_{C, LQG}}^{*} = \min_{A} \Tr[A^{T}BA \Sigma + 2 A^{T}S \Sigma] + \min_{C} \Tr[C^{T}BC\Sigma_2]
\label{eq:teamoptqcomLQG}
\end{equation}
Note that in LQG team problem, $B$ is symmetric positive definite matrix.
From \eqref{eq:teamoptqcomLQG}, expression $\min_{C} \Tr[C^{T}BC\Sigma_2]$ attains minimum value $ = 0$ if $C$ is zero matrix. Thus we have following relation,
$\Jscr^*_{\TO_{LQG}} = \Jscr^*_{\TO_{C,LQG}}$.
\subsubsection{Common randomness dependent on $\xi$}
Next, we demonstrate the result in \eqref{eq:teamproblemCR} via an example of LQG team problem.
Further we show numerically for two decision maker LQG team problem that there is strict inequality between team optimal cost with and without extra randomness, that is $\mathcal{J}_{\rm TO_{LQG}}^{*} > \mathcal{J}_{\TO_{ER,LQG}}^{*}$. 

Consider a LQG team problem consists of an environment  $\xi = [\mu_1, \ldots, \mu_N]^{T}$ as random vector with mean zero and covariance matrix $\Sigma$. The information observed by $\DM_i$ is $y_i = \eta_i(\xi) = \mu_i$, $y = [y_1, \ldots, y_N]^{T}$. Let $\omega = [\omega_1, \ldots, \omega_N]^{T}$ be the extra randomness provided by a consultant to decision makers. Furthermore, assume that $\omega = f(\xi)$ and $f$ is linear function in $\xi$. Thus $\omega_i = \sum_{j}^{N} \phi_{ij} \mu_j$, $\omega = \Phi \xi = \Phi y$, $\Phi$ is matrix of dimension $N \times N$, with entries in $\phi_{ij} \geq 0$ and $\sum_{j = 1}^{N} \phi_{ij} = 1$. The cost function is $\kappa(u, \xi) := u^{T}Bu + 2 u^{T}S\xi,$ the optimal expected cost under extra randomness is 
\begin{equation*}
\mathcal{J}_{\TO_{ER}}^{*} = \min_{u \in U} \mathbb{E}_{\xi}[u^{T}Bu + 2 u^{T}S\xi].
\end{equation*}
Since it is static LQG team problem, optimal decision rule is linear in observation variable. We assume that $ u_i = \alpha_{i1} y_i + \alpha_{i2} \omega_i,$ $u = Ay +C \omega$, where $A$ and $C$ are diagonal matrices, $\diag(A) = [\alpha_{11}, \ldots, \alpha_{N1}]$, $\diag(C) = [\alpha_{12}, \ldots, \alpha_{N2}]$.
The optimal expected cost is 
\begin{equation*}
\mathcal{J}_{\TO_{ER,LQG}}^{*}  = \min_{A, C} \mathbb{E}_{\xi} [y^{T}A^{T}BAy + 2 y^{T}A^{T} S \xi +  2 y^{T} \Phi^{T} C^{T} B A y + y^{T} \Phi^{T} C^{T} B C \Phi y + 2 y^{T} \Phi^{T} C^{T} S \xi].
\end{equation*}
We have $\xi \sim N(0, \Sigma)$, taking expectation and rewriting above expression, we obtain deterministic optimization problem as follows.
\begin{equation}
\mathcal{J}_{\TO_{ER,LQG}}^{*}  = \min_{A, C} \Tr[A^{T}BA^{T} \Sigma + 2 A^{T} S \Sigma +  2  \Phi^{T} C^{T} B A \Sigma + \Phi^{T} C^{T} B C \Phi \Sigma + 2  \Phi^{T} C^{T} S \Sigma].
\label{eq:staticteamlqgCR}
\end{equation}

%Optimization problem  in \eqref{eq:staticteamlqgCR}, is in-separable  in  matrix $A$ and $C$. It introduces difficulty in optimization and computation of  optimal cost functional.
 Intuitively, in LQG team problem with no extra randomness can described as  \textit{incomplete information} static LQG team problem.  Since extra randomness is linear function of an environment and under assumption of  nonzero linear coefficient  ($\phi_{ij} \neq 0$ for all $ 1 \leq i,j \leq N$),  LQG team problem with extra randomness can be describe as \textit{complete information} static LQG team problem. Thus  it is natural to expect that $\mathcal{J}_{\TO_{ER}}^{*} < \mathcal{J}_{\TO}^{*}$. But showing this result analytically difficult due to in-separability of  optimization problem \eqref{eq:staticteamlqgCR} into optimization problem with respect to $A$ and $C$.

To support our claim of $\mathcal{J}_{\TO_{ER}}^{*} < \mathcal{J}_{\TO}^{*}$, we  numerically evaluate the optimal cost functional with and without extra randomness which is dependent on $\xi$ for LQG two team problem and show that our claim is indeed true. Further, we show impact of correlation coefficient $\{\phi_{ij},  1 \leq i,j \leq 2 \}$  on optimal cost functional.
%
%
%To illustrate an idea of private or common randomness and it correlation with $\xi$ in team decision problem, consider $w$ is linear in information structure i.e. $w_1 = \psi_{11} y_1 + \psi_{12} y_2$ and $w_2 = \psi_{21} y_1 + \psi_{22} y_2$. Notice that coefficient $\psi_{11}, \psi_{12}, \psi_{21}, \psi_{22}$ determines the correlation with information structure $y_1$ and $y_2$. Different correlation lead to the different expected loss.
% Does expected loss decrease as increase in correlation among common randomness $w$ and DMs information structure? 
%In general , it is difficult to comment on impact of correlation on the expected loss in a static team problem.
%We will see  some answer to these questions in LQG static team problem.

\subsubsection{Numerical example--LQG team problem}
Let $\xi = [\mu_1, \mu_2]^{T}$ denote the state of nature or an environment having probability distribution $N(0, \Sigma)$. 
Let $y_i = \eta_{i}(\xi) = \mu_i$ be the information observed at $\DM_i$ for $1 \leq i \leq 2.$
Let $\omega = [\omega_1, \omega_2]^{T}$ be an extra randomness provided by a consultant to decision makers. Consider $\omega_i = \phi_{i1} y_1 + \phi_{i2} y_2 $,  $u_i = \alpha_{i1} y_i + \alpha_{i2} \omega_i$ for $1 \leq i \leq 2$. Thus we have $A = \left[ \begin{array}{cc}
\alpha_{11} & 0  \\
0 & \alpha_{21} 
\end{array} \right] $,
$C =  \left[ \begin{array}{cc}
 \alpha_{12} & 0  \\
 0 & \alpha_{22}
\end{array} \right]$. Team optimal cost from \eqref{eq:staticteamlqgCR} is 
\begin{equation*}
\mathcal{J}_{\TO_{ER}}^{*}  = \min_{A, C} \Tr[A^{T}BA \Sigma + 2 A^{T} S \Sigma +  2  \Phi^{T} C^{T} B A \Sigma + \Phi^{T} C^{T} B C \Phi \Sigma + 2  \Phi^{T} C^{T} S \Sigma].
\end{equation*}
In this example, we suppose $B = \left[ \begin{array}{cc}
2 & -1   \\
-1 & 1
\end{array} \right]$,
$S = \left[ \begin{array}{cc}
1 & 0   \\
0 & 1
\end{array} \right]$,
$\Sigma =  \left[ \begin{array}{cc}
 \sigma_{\mu_1}^{2} & \sigma_{\mu_1, \mu_2}^{2}   \\
\sigma_{\mu_1, \mu_2}^{2} & \sigma_{\mu_2}^{2} 
\end{array} \right]$.
%
%\begin{theorem}

%Consider randomization provided to  each DMs is  correlated  with information structure $\eta_1(\xi), \eta_2(\xi)$. For simplicity, we assume $w_1 = \phi_{11} y_1 + \phi_{12} y_2$ and  $w_2 = \phi_{21} y_1 + \phi_{22} y_2$.

%Let $y_j \sim N(0, \sigma_{y_j}^{2})$ for $ 1 \leq j \leq 2$, $\mathbb{E}[y_1 y_2] = \sigma_{y_1 y_2}^{2}$. \\

We define 
$ \delta_1 = \mathbb{E}[y_1 w_1] = \phi_{11} \sigma_{\mu_1}^{2} + \phi_{12} \sigma_{\mu_1 \mu_2}^{2}$, 
$\delta_2 = \mathbb{E}[y_1 w_2] = \phi_{21} \sigma_{\mu_1}^{2} + \phi_{22} \sigma_{\mu_1 \mu_2}^{2} $, 
$\delta_3 = \mathbb{E}[y_2 w_1] = \phi_{11} \sigma_{\mu_1 \mu_2}^{2} + \phi_{12} \sigma_{\mu_1}^{2} $, 
$ \delta_4 = \mathbb{E}[y_2 w_2] = \phi_{21} \sigma_{\mu_1 \mu_2}^{2} + \phi_{22} \sigma_{\mu_1}^{2}  $, 
$\delta_5 = \mathbb{E}[w_{1}^{2}] = \phi_{11}^{2} \sigma_{\mu_1}^{2} + \phi_{12}^{2} \sigma_{\mu_2}^{2} + \phi_{11} \phi_{12} \sigma_{\mu_1 \mu_2}^{2} $, 
$ \delta_6 = \mathbb{E}[w_{2}^{2}] = \phi_{21}^{2} \sigma_{\mu_1}^{2} + \phi_{22}^{2} \sigma_{\mu_2}^{2} + \phi_{21} \phi_{22} \sigma_{\mu_1 \mu_2}^{2} $, 
$ \delta_7 = \mathbb{E}[w_1 w_2] = \phi_{11} \phi_{21} \sigma_{\mu_1}^{2} + (\phi_{22} \phi_{11} + \phi_{12} \phi_{21}) \sigma_{\mu_1 \mu_2}^{2} + \phi_{22} \phi_{12} \sigma_{\mu_2}^{2} $,
$\delta_8 = \mathbb{E}[w_1 \xi_1] = \phi_{11}\sigma_{\mu_1}^{2} + \phi_{12} \sigma_{\mu_1 \mu_2}^{2} $,
$\delta_9 =\mathbb{E}[w_2 \xi_2] = \phi_{21}\sigma_{\mu_1 \mu_2}^{2} + \phi_{22} \sigma_{\mu_2}^{2} $.
Now rewriting team optimal cost function we obtain,
\begin{eqnarray*}
\mathcal{J}_{\TO_{ER}}^{*} = \min_{\alpha_{11},\alpha_{12},
\alpha_{21}, \alpha_{22}} 2 \alpha_{11}^{2} \sigma_{y_1}^{2} - 
2 \alpha_{11} \alpha_{21} \sigma_{y_1,y_2}^{2} + \alpha_{21}^{2} \sigma_{y_2}^{2} 
+ 2 \alpha_{11} \alpha_{12} \delta_1 - \alpha_{21} \alpha_{12} \delta_2 - \alpha_{11} \alpha_{22} \delta_{3} + \alpha_{22} \alpha_{21} \delta_4
+ \\ 
2 \alpha_{12}^{2} \delta_5 - 2 \alpha_{12}\alpha_{22} \delta_7 + \alpha_{22}^{2} \delta_6
+ 2( \alpha_{11} \sigma_{y_1}^{2} + \alpha_{21} \sigma_{y_2}^{2} )
+2( \alpha_{12} \delta_8 + \alpha_{22} \delta_9).
\end{eqnarray*}
%\end{theorem}
%\begin{proof}
Differentiating above expression with respect to $\alpha_{11}, \alpha_{12}, \alpha_{21}, \alpha_{22}$ and equating to 0. We have
\begin{eqnarray*}
\left[ \begin{array}{cccc}
4\sigma_{y_1}^{2} &  -2 \sigma_{y_1 y_2}^{2} & 2 \delta_1 & -\delta_3  \\
-2 \sigma_{y_1 y_2}^{2} & 2 \sigma_{y_1}^{2}  &  - \delta_2 & \delta_4 \\
2 \delta_1 & -\delta_2 & 4 \delta_5 & -2 \delta_7 \\
-\delta_3 & \delta_4 & -2 \delta_7 &  2 \delta_6
\end{array} \right]
\left[ \begin{array}{c}
\alpha_{11} \\
\alpha_{21} \\
\alpha_{12} \\
\alpha_{22}
\end{array} \right]
 = 
 \left[ \begin{array}{c}
-2 \sigma_{y_1}^{2} \\
-2 \sigma_{y_2}^{2} \\
-2 \delta_8 \\
-2 \delta_9
\end{array} \right]. 
\end{eqnarray*}
Notice that computing optimal $\alpha_{11}, \alpha_{12}, \alpha_{21}, \alpha_{22}$ via solving linear systems of equations and finding optimal expected cost is computationally tedious. 
Without loss of generality, we suppose $\sigma_{\mu_1}^{2} = \sigma_{\mu_2}^{2} = 1$ and $\sigma_{\mu_1 \mu_2}^{2} = \frac{1}{4}.$ Furthermore, we fix $\phi_{11} $, $\phi_{12}$, $\phi_{21} $, $\phi_{22}$ and evaluate the minimum team cost under optimal $\alpha_{11}^{*}, \alpha_{12}^{*}, \alpha_{21}^{*}, \alpha_{22}^{*}$. 
Note that $\phi_{11} $, $\phi_{12}$, $\phi_{21} $, $\phi_{22}$ determines the correlation of extra randomness with observations  available at decision makers.
From numerical computation in table (\ref{LQGstaticteam}), we make following concluding remarks.
\begin{enumerate}
\item In distributed static LQG team problem without extra randomness, the team optimal cost is highest.
\item  In distributed static LQG team problem, only one decision maker having extra randomness which is correlated with $\xi$ do not lead to improve in the team optimal cost. Instead it lead to increase in the team optimal cost.
\item In distributed static LQG team problem,all decision maker having extra randomness which is correlated with $\xi$ lead to improvement in  the team optimal cost. Thus we have strict inequality between $ \mathcal{J}_{\TO}^{*}$ and $\mathcal{J}_{\TO_{ER}}^{*}$, that means $\mathcal{J}_{\TO_{ER}}^{*} < \mathcal{J}_{\TO}^{*}$.
\item if an extra randomness provided by a consultant is an average of the observations $\mu_1$ and $\mu_2$, then  team optimal cost is best than any other convex combination of the observations $\mu_1$ and $\mu_2$. Hence correlation coefficient $\phi_{ij}$  for $1 \leq i,j \leq 2$ plays significant role to attain minimal team optimal cost in distributed static LQG team problem with extra randomness dependent on $\xi$.
\end{enumerate}
\begin{table}
\begin{center}
\begin{tabular}{ |l|l|l|l| }
\hline
 & $(\phi_{11},\phi_{12}, \phi_{21} , \phi_{22})$   & $ (\alpha_{11}^{*}, \alpha_{12}^{*}, \alpha_{21}^{*}, \alpha_{22}^{*})$ & $\min_{\alpha}\mathbb{E}[\kappa(\alpha, \xi) ] $  \\ \hline
No randomization & $(0,0,0,0)$ & $(-0.6452,-1.1613,0,0)$ & $-1.806$ \\ \hline
$DM_1$ have randomness  & $(\frac{1}{4}, \frac{3}{4}, 0, 0)$ & $(0, -1, -0.3024, 2.7513)$  & $-0.477$  \\   \hline
Both DM have randomness  & $(\frac{1}{2}, \frac{1}{2}, \frac{1}{2}, \frac{1}{2})$ & $(-0.3434, -0.7046, -2.7862, -4.0062)$ & $-5.2974$ \\
  \hline
Both DM have randomness  & $(\frac{2}{3}, \frac{1}{3}, \frac{3}{4}, \frac{1}{4})$ & $(-0.5122, -1.4833,-2.6067,-3.2171) $ & $-4.5211$ \\
  \hline
Both DM have randomness  & $(\frac{1}{3}, \frac{2}{3}, \frac{1}{4}, \frac{3}{4})$ & $ (-0.7045, -0.7058, -0.6765, -1.522)$ & $-3.6923$ \\
  \hline
\end{tabular}
\caption{Comparison of expected cost with different randomization provided to DM}
\label{LQGstaticteam}
\end{center}

\end{table}
%\end{proof}

%\newpage 

\subsection{Proof of Lemma \ref{thm:two-teamzero-sumgame}}
\label{proof:two-teamzero-sumgame}
We prove our claim via illustrating an example of two-team discrete game.

Consider two-team label them as Team $1$ and $2$, Team $1$ consists of a decision maker and Team $2$ comprises  two decision makers. Let $\xi = [\mu_1, s_1, s_2]^{T}$ denote an environment or the state of nature; it is random vector with discrete distribution $p(\xi)$. 
Each decision maker observes an environment partially since decision maker are situated distributed manner. 

Let $y_1 = \eta(\xi)$ denote an observation available at decision maker of Team $1$;
$z_j = \zeta_j(\xi)$ represent an observation available at $\DM_j$ of Team $2$. Decision rule at Team $1$ and $2$
is 
\begin{equation*}
\gamma_{1} : y_1 \rightarrow u_1
\end{equation*}
and 
\begin{equation*}
\delta_j: z_j \rightarrow v_j
\end{equation*}
 $j = 1,2$. 

Without loss of generality, we assume that $\mu_1, s_j$ is  binary random variable take values $\{0,1\}$; $y_1 = \eta(\xi) = \mu_1$, $z_j = \zeta_j(\xi) = s_j$, for $j = 1,2$. Moreover, we consider $u_1, v_j \in \{L, R\}$,  $j = 1,2.$

Binary random variable $\mu_1$, $s_1$ and $s_2$ defined as follows.
\[
\mu_1 = \left\{
\begin{array}{l l}
1 & \quad \text{with prob. } p_1 \\
0 & \quad \text{with prob.} 1 - p_1.
\end{array}
\right.
\]
\[
s_1 = \left\{
\begin{array}{l l}
\mu_1 & \quad \text{with prob. } p \\
0 & \quad \text{with prob.} 1 - p.
\end{array}
\right.
\]
\[
s_2 = \left\{
\begin{array}{l l}
1- \mu_1 & \quad \text{with prob. } q \\
s_1 & \quad \text{with prob.} 1 - q.
\end{array}
\right.
\]
The joint distribution of $(\mu_1, s_1, s_2)$ is  $\mathbb{P}(\mu_1, s_1, s_2)$  and  is written as
\begin{eqnarray*}
\mathbb{P}(\mu_1 = 0, s_1 =0, s_2 = 0) = (1 - p_1)(1 - q)
\end{eqnarray*}
\begin{eqnarray*}
\mathbb{P}(\mu_1 = 0, s_1 =0, s_2 = 1) = (1 - p_1)q
\end{eqnarray*}
\begin{eqnarray*}
\mathbb{P}(\mu_1 = 0, s_1 =1, s_2 = 0) = 0
\end{eqnarray*}
\begin{eqnarray*}
\mathbb{P}(\mu_1 = 0, s_1 = 1, s_2 = 1) = 0
\end{eqnarray*}
\begin{eqnarray*}
\mathbb{P}(\mu_1 = 1, s_1 =0, s_2 = 0) = p_1 (1 -p)
\end{eqnarray*}
\begin{eqnarray*}
\mathbb{P}(\mu_1 = 1, s_1 =0, s_2 = 1) = 0
\end{eqnarray*}
\begin{eqnarray*}
\mathbb{P}(\mu_1 = 0, s_1 =1, s_2 = 0) = p_1 p q
\end{eqnarray*}
\begin{eqnarray*}
\mathbb{P}(\mu_1 = 1, s_1 =1, s_2 = 1) = p_1p(1-q)
\end{eqnarray*}

There are four possible decision rule available at each decision maker.
The decision rule of a decision maker is 
\[
u_1^{1} = \gamma_{1}^{1}(y_{1}) = \gamma_{1}^{1}(\mu_1) = 
 \left\{
\begin{array}{l l}
L & \quad \text{if } \mu_1 = 0 \\
R & \quad \text{if } \mu_1 = 1
\end{array}
\right.
\]
\[
u_1^{2} = \gamma_{1}^{2}(y_{1}) = \gamma_{1}^{2}(\mu_1) = 
 \left\{
\begin{array}{l l}
L & \quad \text{if } \mu_1 = 1 \\
R & \quad \text{if } \mu_1 = 0
\end{array}
\right.
\]
\[
u_1^{3} = \gamma_{1}^{3}(y_{1}) = \gamma_{1}^{3}(\mu_1) = 
 \left\{
\begin{array}{l l}
L & \quad \text{if } \mu_1 = 1 \text{ or } \mu_1 = 0
\end{array}
\right.
\]
\[
u_1^{4} = \gamma_{1}^{4}(y_{1}) = \gamma_{1}^{4}(\mu_1) = 
 \left\{
\begin{array}{l l}
R & \quad \text{if } \mu_1 = 1 \text{ or } \mu_1 = 0
\end{array}
\right.
\]
\[
v_1^{1} = \delta_{1}^{1}(z_{1}) = \delta_{1}^{1}(s_1) = 
 \left\{
\begin{array}{l l}
L & \quad \text{if } s_1 = 0 \\
R & \quad \text{if } s_1 = 1
\end{array}
\right.
\]
\[
v_{1}^{2} = \delta_{1}^{2}(z_{1}) = \delta_{1}^{2}(s_1) = 
 \left\{
\begin{array}{l l}
L & \quad \text{if } s_1 = 1 \\
R & \quad \text{if } s_1 = 0
\end{array}
\right.
\]
\[
v_{1}^{3} = \delta_{1}^{3}(z_{1}) = \delta_{1}^{3}(s_1) = 
 \left\{
\begin{array}{l l}
L & \quad \text{if } s_1 = 1 \text{ or } s_1 = 0
\end{array}
\right.
\]
\[
v_{1}^{4} =\delta_{1}^{4}(z_{1}) = \delta_{1}^{4}(s_1) = 
 \left\{
\begin{array}{l l}
R & \quad \text{if } s_1 = 1 \text{ or } s_1 = 0
\end{array}
\right.
\]
\[
v_{2}^{1} = \delta_{2}^{1}(z_{2}) = \delta_{2}^{1}(s_2) = 
 \left\{
\begin{array}{l l}
L & \quad \text{if } s_2 = 0 \\
R & \quad \text{if } s_2 = 1
\end{array}
\right.
\]
\[
v_{2}^{2} = \delta_{2}^{2}(z_{2}) = \delta_{2}^{2}(s_2) = 
 \left\{
\begin{array}{l l}
L & \quad \text{if } s_2 = 1 \\
R & \quad \text{if } s_2 = 0
\end{array}
\right.
\]
\[
v_{2}^{3}\delta_{2}^{3}(z_{2}) = \delta_{2}^{3}(s_2) = 
 \left\{
\begin{array}{l l}
L & \quad \text{if } s_2 = 1 \text{ or } s_2 = 0
\end{array}
\right.
\]
\[
v_{2}^{4} = \delta_{2}^{4}(z_{2}) = \delta_{2}^{4}(s_2) = 
 \left\{
\begin{array}{l l}
R & \quad \text{if } s_2 = 1 \text{ or } s_2 = 0
\end{array}
\right.
\]
We next formulate team vs team zero-sum game, Team $1$ seeks to maximize the expected
payoff whereas Team $2$  seeks to minimize the expected payoff. 
We  describe payoff matrix in table \ref{twpteampayoff} .
\begin{table}
\begin{center}
\begin{tabular}{ |l|l|l|l|l| }
  \hline
  & LL & LR & RL & RR \\ \hline
  L & 20 & 0 & 1 & 30 \\ \hline
  R & 20 & 1 & 0 & 30 \\
  \hline
\end{tabular}
\end{center}
\caption{Payoff matrix: team vs team zero-sum game  }
\label{twpteampayoff}
\end{table}
In table \ref{twpteampayoff} row vector denotes actions of Team $1$ and corresponding payoff;   column vector denotes actions of Team $2$ and corresponding payoff.
Since observations available at each decision maker in team is function of state of nature $\xi$ and $\xi$ is random variable, we evaluate the expected payoff for different actions of decision makers and it is
\[
\mathbb{E}\left[\kappa\left(  \gamma_{1}^{l}(\mu_1), \delta_{1}^{m}(s_1), \delta_{2}^{n}(s_2) \right)\right]
 =  \sum_{\mu_1,s_1,s_2  \in \{ 0,1\}^{3}}\kappa\left(  \gamma_{1}^{l}(\mu_1),\delta_{1}^{m}(s_1)\delta_{2}^{n}(s_2)\right) \mathbf{P}(\mu_1,s_1,s_2).
\]
where $1 \leq l,m,n \leq 4$.
Enumerating  the expected payoff over all possible actions of decision makers, we obtain
\[
\mathbb{E}\left[\kappa\left(  \gamma_{1}^{1}(\mu_1),\delta_{1}^{1}(s_1)\delta_{2}^{1}(s_2) \right)\right] = 20 - 20q +20p_1q+10p_1p-30p_1pq
\]
\[
\mathbb{E}\left[\kappa\left(  \gamma_{1}^{2}(\mu_1),\delta_{1}^{1}(s_1)\delta_{2}^{1}(s_2) \right)\right] =40-40p_1-19q+19p_1q-29p_1pq+30p_1p
\]
\[
\mathbb{E}\left[\kappa\left(  \gamma_{1}^{3}(\mu_1),\delta_{1}^{1}(s_1)\delta_{2}^{1}(s_2) \right)\right] = 20-20q+30p_1p-29p_1pq
\]
\[
\mathbb{E}\left[\kappa\left(  \gamma_{1}^{4}(\mu_1),\delta_{1}^{1}(s_1)\delta_{2}^{1}(s_2) \right)\right] = 20-19q+19p_1q+10p_1p-30p_1pq
\]
\[
\mathbb{E}\left[\kappa\left(  \gamma_{1}^{1}(\mu_1),\delta_{1}^{2}(s_1)\delta_{2}^{1}(s_2) \right)\right] = 1-p_1+29q-29p_1q+19p_1pq+p_1p
\]
\[
\mathbb{E}\left[\kappa\left(  \gamma_{1}^{2}(\mu_1),\delta_{1}^{2}(s_1)\delta_{2}^{1}(s_2) \right)\right] =  30-31p_1q+p_1+20p_1pq
\]
\[
\mathbb{E}\left[\kappa\left(  \gamma_{1}^{3}(\mu_1),\delta_{1}^{2}(s_1)\delta_{2}^{1}(s_2) \right)\right] = 1+29q-30p_1q+20p_1pq
\]
\[
\mathbb{E}\left[\kappa\left(  \gamma_{1}^{4}(\mu_1),\delta_{1}^{2}(s_1)\delta_{2}^{1}(s_2) \right)\right] = 30q -30p_1q +19p_1pq + p_1p
\]
\[
\mathbb{E}\left[\kappa\left(  \gamma_{1}^{1}(\mu_1),\delta_{1}^{3}(s_1)\delta_{2}^{1}(s_2) \right)\right] = 20 -20q +19p_1pq + p_1p
\]
\[
\mathbb{E}\left[\kappa\left(  \gamma_{1}^{2}(\mu_1),\delta_{1}^{3}(s_1)\delta_{2}^{1}(s_2) \right)\right] = 20-19q-p_1q+20p_1pq
\]
\[
\mathbb{E}\left[\kappa\left(  \gamma_{1}^{3}(\mu_1),\delta_{1}^{3}(s_1)\delta_{2}^{1}(s_2) \right)\right] =  20 -20q + 20p_1pq
\]
\[
\mathbb{E}\left[\kappa\left(  \gamma_{1}^{4}(\mu_1),\delta_{1}^{3}(s_1)\delta_{2}^{1}(s_2) \right)\right] = 20-19q -p_1q +19p_1pq + p_1p
\]
\[
\mathbb{E}\left[\kappa\left(  \gamma_{1}^{1}(\mu_1),\delta_{1}^{4}(s_1)\delta_{2}^{1}(s_2) \right)\right] = 1 -p_1+29q-29p_1q+30p_1p-30p_1pq
\]
\[
\mathbb{E}\left[\kappa\left(  \gamma_{1}^{2}(\mu_1),\delta_{1}^{4}(s_1)\delta_{2}^{1}(s_2) \right)\right] = 30q-31p_1q+p_1-29p_1pq+30p_1p
\]
\[
\mathbb{E}\left[\kappa\left(  \gamma_{1}^{3}(\mu_1),\delta_{1}^{4}(s_1)\delta_{2}^{1}(s_2) \right)\right] =  1 + 29q-29p_1q+29p_1p-29p_1pq
\]
\[
\mathbb{E}\left[\kappa\left(  \gamma_{1}^{4}(\mu_1),\delta_{1}^{4}(s_1)\delta_{2}^{1}(s_2) \right)\right] = 30q-30p_1q+30p_1p-30p_1pq
\]
\[
\mathbb{E}\left[\kappa\left(  \gamma_{1}^{1}(\mu_1),\delta_{1}^{1}(s_1)\delta_{2}^{2}(s_2) \right)\right] = 20q-20p_1q+p_1+30p_1pq
\]
\[
\mathbb{E}\left[\kappa\left(  \gamma_{1}^{2}(\mu_1),\delta_{1}^{1}(s_1)\delta_{2}^{2}(s_2) \right)\right] = 1-p_1+19q-19p_1q+29p_1pq+p_1p
\]
\[
\mathbb{E}\left[\kappa\left(  \gamma_{1}^{3}(\mu_1),\delta_{1}^{1}(s_1)\delta_{2}^{2}(s_2) \right)\right] =  20q -20p_1q +29p_1pq+p_1p
\]
\[
\mathbb{E}\left[\kappa\left(  \gamma_{1}^{4}(\mu_1),\delta_{1}^{1}(s_1)\delta_{2}^{2}(s_2) \right)\right] = 1+ 19q-19p_1q-p_1p+30p_1pq
\]
\[
\mathbb{E}\left[\kappa\left(  \gamma_{1}^{1}(\mu_1),\delta_{1}^{2}(s_1)\delta_{2}^{2}(s_2) \right)\right] = 30 - 29q + 29p_1q -10 p_1p -19p_1pq
\]
\[
\mathbb{E}\left[\kappa\left(  \gamma_{1}^{2}(\mu_1),\delta_{1}^{2}(s_1)\delta_{2}^{2}(s_2) \right)\right] = 30-10p_1-30q+30p_1q-20p_1pq
\]
\[
\mathbb{E}\left[\kappa\left(  \gamma_{1}^{3}(\mu_1),\delta_{1}^{2}(s_1)\delta_{2}^{2}(s_2) \right)\right] =  30- 29q+29p_1q-10p_1p-20p_1pq
\]
\[
\mathbb{E}\left[\kappa\left(  \gamma_{1}^{4}(\mu_1),\delta_{1}^{2}(s_1)\delta_{2}^{2}(s_2) \right)\right] = 30-30q+30p_1q-10p_1p-19p_1pq
\]
\[
\mathbb{E}\left[\kappa\left(  \gamma_{1}^{1}(\mu_1),\delta_{1}^{3}(s_1)\delta_{2}^{2}(s_2) \right)\right] = 20q-20p_1q+p_1+19p_1p-19p_1pq
\]
\[
\mathbb{E}\left[\kappa\left(  \gamma_{1}^{2}(\mu_1),\delta_{1}^{3}(s_1)\delta_{2}^{2}(s_2) \right)\right] = 1-p_1+19q-19p_1q+20p_1p-20p_1pq
\]
\[
\mathbb{E}\left[\kappa\left(  \gamma_{1}^{3}(\mu_1),\delta_{1}^{3}(s_1)\delta_{2}^{2}(s_2) \right)\right] =  20q-20p_1q + 20p_1pq
\]
\[
\mathbb{E}\left[\kappa\left(  \gamma_{1}^{4}(\mu_1),\delta_{1}^{3}(s_1)\delta_{2}^{2}(s_2) \right)\right] = 1 + 19q-19p_1q+19p_1p-19p_1pq
\]
\[
\mathbb{E}\left[\kappa\left(  \gamma_{1}^{1}(\mu_1),\delta_{1}^{4}(s_1)\delta_{2}^{2}(s_2) \right)\right] = 30 -29q+29p_1q-29p_1p+29p_1pq
\]
\[
\mathbb{E}\left[\kappa\left(  \gamma_{1}^{2}(\mu_1),\delta_{1}^{4}(s_1)\delta_{2}^{2}(s_2) \right)\right] = 30-30q+30p_1q-29p_1p+29p_1pq
\]
\[
\mathbb{E}\left[\kappa\left(  \gamma_{1}^{3}(\mu_1),\delta_{1}^{4}(s_1)\delta_{2}^{2}(s_2) \right)\right] = 30-29q+29p_1q-29p_1p+29p_1pq
\]
\[
\mathbb{E}\left[\kappa\left(  \gamma_{1}^{4}(\mu_1),\delta_{1}^{4}(s_1)\delta_{2}^{2}(s_2) \right)\right] = 30-30q+30p_1q-30p_1p+30p_1pq
\]
\[
\mathbb{E}\left[\kappa\left(  \gamma_{1}^{1}(\mu_1),\delta_{1}^{1}(s_1)\delta_{2}^{3}(s_2) \right)\right] = 20 - 20 p_1p
\]
\[
\mathbb{E}\left[\kappa\left(  \gamma_{1}^{2}(\mu_1),\delta_{1}^{1}(s_1)\delta_{2}^{3}(s_2) \right)\right] = 20 - 19p_1 p
\]
\[
\mathbb{E}\left[\kappa\left(  \gamma_{1}^{3}(\mu_1),\delta_{1}^{1}(s_1)\delta_{2}^{3}(s_2) \right)\right] = 20 - 19p_1p
\]
\[
\mathbb{E}\left[\kappa\left(  \gamma_{1}^{4}(\mu_1),\delta_{1}^{1}(s_1)\delta_{2}^{3}(s_2) \right)\right] = 20 - 20 p_1 p
\]
\[
\mathbb{E}\left[\kappa\left(  \gamma_{1}^{1}(\mu_1),\delta_{1}^{2}(s_1)\delta_{2}^{3}(s_2) \right)\right] = 1 - p_1 +20 p_1 p
\]
\[
\mathbb{E}\left[\kappa\left(  \gamma_{1}^{2}(\mu_1),\delta_{1}^{2}(s_1)\delta_{2}^{3}(s_2) \right)\right] = p_1 + 19 p_1 p
\]
\[
\mathbb{E}\left[\kappa\left(  \gamma_{1}^{3}(\mu_1),\delta_{1}^{2}(s_1)\delta_{2}^{3}(s_2) \right)\right] = p_1 + 19 p_1p
\]
\[
\mathbb{E}\left[\kappa\left(  \gamma_{1}^{4}(\mu_1),\delta_{1}^{2}(s_1)\delta_{2}^{3}(s_2) \right)\right] = 20p_1 p
\]
\[
\mathbb{E}\left[\kappa\left(  \gamma_{1}^{1}(\mu_1),\delta_{1}^{3}(s_1)\delta_{2}^{3}(s_2) \right)\right] = 20
\]
\[
\mathbb{E}\left[\kappa\left(  \gamma_{1}^{2}(\mu_1),\delta_{1}^{3}(s_1)\delta_{2}^{3}(s_2) \right)\right] = 20
\]
\[
\mathbb{E}\left[\kappa\left(  \gamma_{1}^{3}(\mu_1),\delta_{1}^{3}(s_1)\delta_{2}^{3}(s_2) \right)\right] = 20
\]
\[
\mathbb{E}\left[\kappa\left(  \gamma_{1}^{4}(\mu_1),\delta_{1}^{3}(s_1)\delta_{2}^{3}(s_2) \right)\right] = 20
\]
\[
\mathbb{E}\left[\kappa\left(  \gamma_{1}^{1}(\mu_1),\delta_{1}^{4}(s_1)\delta_{2}^{3}(s_2) \right)\right] = 1 - p_1 
\]
\[
\mathbb{E}\left[\kappa\left(  \gamma_{1}^{2}(\mu_1),\delta_{1}^{4}(s_1)\delta_{2}^{3}(s_2) \right)\right] = p_1 
\]
\[
\mathbb{E}\left[\kappa\left(  \gamma_{1}^{3}(\mu_1),\delta_{1}^{4}(s_1)\delta_{2}^{3}(s_2) \right)\right] = 1
\]
\[
\mathbb{E}\left[\kappa\left(  \gamma_{1}^{4}(\mu_1),\delta_{1}^{4}(s_1)\delta_{2}^{3}(s_2) \right)\right] = 0
\]
\[
\mathbb{E}\left[\kappa\left(  \gamma_{1}^{1}(\mu_1),\delta_{1}^{1}(s_1)\delta_{2}^{4}(s_2) \right)\right] = p_1 + 29p_1p
\]
\[
\mathbb{E}\left[\kappa\left(  \gamma_{1}^{2}(\mu_1),\delta_{1}^{1}(s_1)\delta_{2}^{4}(s_2) \right)\right] = 1 -p_1  +20p_1p
\]
\[
\mathbb{E}\left[\kappa\left(  \gamma_{1}^{3}(\mu_1),\delta_{1}^{1}(s_1)\delta_{2}^{4}(s_2) \right)\right] = 30p_1p
\]
\[
\mathbb{E}\left[\kappa\left(  \gamma_{1}^{4}(\mu_1),\delta_{1}^{1}(s_1)\delta_{2}^{4}(s_2) \right)\right] = 1+ 29 p_1 p 
\]
\[
\mathbb{E}\left[\kappa\left(  \gamma_{1}^{1}(\mu_1),\delta_{1}^{2}(s_1)\delta_{2}^{4}(s_2) \right)\right] = 30p_1 -29p_1p
\]
\[
\mathbb{E}\left[\kappa\left(  \gamma_{1}^{2}(\mu_1),\delta_{1}^{2}(s_1)\delta_{2}^{4}(s_2) \right)\right] = 30 - 30 p_1 p
\]
\[
\mathbb{E}\left[\kappa\left(  \gamma_{1}^{3}(\mu_1),\delta_{1}^{2}(s_1)\delta_{2}^{4}(s_2) \right)\right] = 30 - 30 p_1 p
\]
\[
\mathbb{E}\left[\kappa\left(  \gamma_{1}^{4}(\mu_1),\delta_{1}^{2}(s_1)\delta_{2}^{4}(s_2) \right)\right] =  30 -29 p_1 p
\]
\[
\mathbb{E}\left[\kappa\left(  \gamma_{1}^{1}(\mu_1),\delta_{1}^{3}(s_1)\delta_{2}^{4}(s_2) \right)\right] =  p_1
\]
\[
\mathbb{E}\left[\kappa\left(  \gamma_{1}^{2}(\mu_1),\delta_{1}^{3}(s_1)\delta_{2}^{4}(s_2) \right)\right] = 1 - p_1
\]
\[
\mathbb{E}\left[\kappa\left(  \gamma_{1}^{3}(\mu_1),\delta_{1}^{3}(s_1)\delta_{2}^{4}(s_2) \right)\right] = 0
\]
\[
\mathbb{E}\left[\kappa\left(  \gamma_{1}^{4}(\mu_1),\delta_{1}^{3}(s_1)\delta_{2}^{4}(s_2) \right)\right] = 1
\]
\[
\mathbb{E}\left[\kappa\left(  \gamma_{1}^{1}(\mu_1),\delta_{1}^{4}(s_1)\delta_{2}^{4}(s_2) \right)\right] = 30
\]
\[
\mathbb{E}\left[\kappa\left(  \gamma_{1}^{2}(\mu_1),\delta_{1}^{4}(s_1)\delta_{2}^{4}(s_2) \right)\right] = 30
\]
\[
\mathbb{E}\left[\kappa\left(  \gamma_{1}^{3}(\mu_1),\delta_{1}^{4}(s_1)\delta_{2}^{4}(s_2) \right)\right] = 30
\]
\[
\mathbb{E}\left[\kappa\left(  \gamma_{1}^{4}(\mu_1),\delta_{1}^{4}(s_1)\delta_{2}^{4}(s_2) \right)\right] = 30
\]
From above expression, it is difficult to make any comment on saddle point solution of zero-sum game. Thus we suppose $p_1 = \frac{1}{4}$, $p = \frac{1}{3}$ and $q = \frac{2}{3}$ but it is also possible that under different range of $p_1, p, q$ our claim  holds true. Rewriting expected payoff matrix for zero-sum game in table \ref{twoteamexpectpay1}.
\begin{table}
\begin{center}
\begin{tabular}{ |l|l|l|l|l| }
  \hline
  & $\gamma_{1}^{1}(\mu_1)$ & $\gamma_{1}^{2}(\mu_1)$ & $\gamma_{1}^{3}(\mu_1)$ & $\gamma_{1}^{4}(\mu_1)$ \\ \hline
$\delta_{1}^{1}(s_1)\delta_{2}^{1}(s_2)$ & 9.16 & 22.3 & 7.54 & 9.66 \\ \hline
$\delta_{1}^{2}(s_1)\delta_{2}^{1}(s_2)$ & 16.39 & 26.18 & 16.45 & 16.13 \\   \hline
$\delta_{1}^{3}(s_1)\delta_{2}^{1}(s_2)$ & 7.80 & 8.27 & 7.77 & 8.30 \\  \hline
$\delta_{1}^{4}(s_1)\delta_{2}^{1}(s_2)$ & 16.08 & 15.72 & 16.30 & 15.83 \\  \hline
$\delta_{1}^{1}(s_1)\delta_{2}^{2}(s_2)$ & 11.91 & 11.94 & 11.69 & 12.08 \\ \hline
$\delta_{1}^{2}(s_1)\delta_{2}^{2}(s_2)$ & 13.61 & 11.38 & 13.55 & 13.11 \\   \hline
$\delta_{1}^{3}(s_1)\delta_{2}^{2}(s_2)$ & 10.77 & 10.80 & 11.11 & 11.02 \\  \hline
$\delta_{1}^{4}(s_1)\delta_{2}^{2}(s_2)$ & 14.69 & 14.19 & 14.69 &  14.16 \\  \hline
$\delta_{1}^{1}(s_1)\delta_{2}^{3}(s_2)$ & 18.33 & 18.41 & 18.41 & 18.33 \\ \hline
$\delta_{1}^{2}(s_1)\delta_{2}^{3}(s_2)$ & 2.41 & 1.83 & 1.83 & 1.66 \\   \hline
$\delta_{1}^{3}(s_1)\delta_{2}^{3}(s_2)$ & 20  & 20  & 20  & 20 \\  \hline
$\delta_{1}^{4}(s_1)\delta_{2}^{3}(s_2)$ & 0.75 & 0.25 & 1 & 0 \\  \hline
$\delta_{1}^{1}(s_1)\delta_{2}^{4}(s_2)$ & 2.66 & 2.41 & 2.5 & 3.41 \\ \hline
$\delta_{1}^{2}(s_1)\delta_{2}^{4}(s_2)$ & 5.08 & 27.5 & 27.5 & 27.58 \\   \hline
$\delta_{1}^{3}(s_1)\delta_{2}^{4}(s_2)$ & 0.25 & 0.75 & 0 & 1 \\  \hline
$\delta_{1}^{4}(s_1)\delta_{2}^{4}(s_2)$ & 30 & 30  & 30  & 30 \\  \hline
\end{tabular}
\end{center}
\caption{Two-team zero-sum game with expected payoff  matrix}
\label{twoteamexpectpay1}
\end{table}
In table \ref{twoteamexpectpay1}, row vector denote strategies of a Team $2$, column vector denote strategies of a Team $1$ and corresponding expected payoff. 
Here, Team $2$ wishes to minimize the expected payoff and  Team $1$ wishes to maximize the 
expected payoff. 
%Team $2$ adopts the $i^{*}$ row as his strategy which satisfies following inequalities,
%\[
%\max_{j} a_{i^{*}j} \leq \max_{j} a_{ij}
%\]
%for $i = 1,...,16$.
The security level of Team $1$ is 
\[
\underline{V}(A) =  \max_{j} \min_{i} a_{ij}  = 0.25
\]
Similarly, the security level of Team $2$ is 
\[
\overline{V}(A) = \min_{i} \max_{j} a_{ij}  = 1.
\]

Notice that we have $ \overline{V}(A) > \underline{V}(A) $, it implies this game do not admit the pure strategy saddle point solution.

\subsubsection{Role of the private randomness independent of $\xi$}

We are interested to understand the role of the private randomness in two-team zero-sum game.  We assume a coordinator provides the private randomness to decision maker of a team, say
Team $1$ decision maker. Further we assume that these private randomization is independent of $\xi$.

Consider Team $1$ decision maker has private randomization over its strategies
and  plays strategy $\gamma_{1}^{i}(\mu_1)$ with probability $a_i$ for $1 \leq  i \leq 4$ and $\sum_{i=1}^{4} a_i = 1.$ That is
\[
\gamma_{1}(\mu_1) = \left\{
\begin{array}{l l}
\gamma_{1}^{1}(\mu_1) & \quad \text{with prob. } a_1 \\
\gamma_{1}^{2}(\mu_1) & \quad \text{with prob. } a_2 \\
\gamma_{1}^{3}(\mu_1) & \quad \text{with prob. } a_3 \\
\gamma_{1}^{4}(\mu_1) & \quad \text{with prob. } a_4. \\
\end{array}
\right.
\]

Then the expected payoff is
\[
\mathbb{E}\left[\kappa\left( \gamma_{1}(\mu_1) \delta_{1}^{j}(s_1) \delta_{2}^{k}(s_2) \right)\right] = \sum_{i = 1}^{4} \mathbb{E}\left[\kappa\left( (\gamma_{1}(\mu_1) = \gamma_{1}^{i}(\mu_1) ) \delta_{1}^{j}(s_1) \delta_{2}^{k}(s_2) \right)\right] a_i
\]
for $ 1 \leq j,k \leq 4$. We have evaluated the expected payoff and given in table \ref{twoteamexpectprivrand}.
\begin{table}
\begin{center}
\begin{tabular}{ |l|l| }
  \hline
$\delta_{1}^{1}(s_1)\delta_{2}^{1}(s_2)$ & $9.16a_1 + 22.3a_2 + 7.54a_3 + 9.66a_4 $\\ \hline
$\delta_{1}^{2}(s_1)\delta_{2}^{1}(s_2)$ & $16.39a_1 + 26.18a_2 + 16.45a_3 + 16.13 a_4$ \\   \hline
$\delta_{1}^{3}(s_1)\delta_{2}^{1}(s_2)$ & $7.80a_1 + 8.27a_2 +7.77a_3 + 8.30a_4$ \\  \hline
$\delta_{1}^{4}(s_1)\delta_{2}^{1}(s_2)$ & $16.08a_1 +15.72 a_2 + 16.30a_3 + 15.83a_4$ \\  \hline
$\delta_{1}^{1}(s_1)\delta_{2}^{2}(s_2)$ & $11.91a_1 + 11.94a_2 + 11.69a_3 + 12.08a_4$ \\ \hline
$\delta_{1}^{2}(s_1)\delta_{2}^{2}(s_2)$ & $13.61a_1 + 11.38a_2 + 13.55a_3 + 13.11 a_4$ \\   \hline
$\delta_{1}^{3}(s_1)\delta_{2}^{2}(s_2)$ & $10.77a_1 + 10.80a_2 + 11.11a_3 + 11.02 a_4$ \\  \hline
$\delta_{1}^{4}(s_1)\delta_{2}^{2}(s_2)$ & $14.69 a_1+ 14.19a_2 +14.69a_3 + 14.16a_4$ \\  \hline
$\delta_{1}^{1}(s_1)\delta_{2}^{3}(s_2)$ & $18.33a_1 + 18.41a_2 + 18.41a_3 + 18.33a_4$ \\ \hline
$\delta_{1}^{2}(s_1)\delta_{2}^{3}(s_2)$ & $2.41a_1 + 1.83a_2 + 1.83a_3 + 1.66 a_4$ \\   \hline
$\delta_{1}^{3}(s_1)\delta_{2}^{3}(s_2)$ & $20$\\  \hline
$\delta_{1}^{4}(s_1)\delta_{2}^{3}(s_2)$ & $0.75a_1 + 0.25a_2 + 1a_3$  \\  \hline
$\delta_{1}^{1}(s_1)\delta_{2}^{4}(s_2)$ & $2.66a_1 + 2.41a_2 + 2.5a_3 + 3.41a_4 $\\ \hline
$\delta_{1}^{2}(s_1)\delta_{2}^{4}(s_2)$ & $5.08a_1 + 27.5 a_2 + 27.5a_3 + 27.58a_4$ \\   \hline
$\delta_{1}^{3}(s_1)\delta_{2}^{4}(s_2)$ & $0.25a_1 + 0.75a_2+ 1a_4 $\\  \hline
$\delta_{1}^{4}(s_1)\delta_{2}^{4}(s_2)$ & $30$ \\  \hline
\end{tabular}
\end{center}
\caption{Two-team zero-sum game expected payoff with team $1$ has private randomization}
\label{twoteamexpectprivrand}
\end{table}
From table \ref{twoteamexpectprivrand}, notice that a Team $2$ best response will be $\left(\delta_{1}^{3}(s_1)\delta_{2}^{4}(s_2)\right)$ or $\left(\delta_{1}^{4}(s_1)\delta_{2}^{3}(s_2)\right)$ depend on probability vector $a= [a_1,a_2,a_3,a_4]$ at Team $1$ (i.e. private randomization). 
Without loss of generality, we assume $a_3 = a_4$, now observe that $a_1$ and $a_2$ determines the best response of Team $2$. We demonstrate this as follows.
\begin{enumerate}
\item If $a_1 < a_2$, thw  best response of team $2$ will be $\left(\delta_{1}^{3}(s_1)\delta_{2}^{4}(s_2)\right)$ and expected payoff  will be $(0.75a_1 + 0.25a_2 + 1a_3)$.
Further assume $a_2 = 2 a_1$, $a_3 = a_4 = \frac{1}{12}$, then $a_1= \frac{5}{18}$ and  
expected payoff is $0.43$.
\item  If $a_1 > a_2$, the  best response of team $2$ will be $\left(\delta_{1}^{4}(s_1)\delta_{2}^{3}(s_2)\right)$ and expected payoff  will be $(0.25a_1 + 0.75a_2+ 1a_4)$.
Similarly, we assume $a_1 = 2 a_2$, $a_3 = a_4 = \frac{1}{12}$, then $a_2= \frac{5}{18}$ and  
expected payoff is $0.43$.
\item  If $a_1 = a_2$, the  best response of team $2$ will be $\left(\delta_{1}^{3}(s_1)\delta_{2}^{4}(s_2)\right)$  or $\left(\delta_{1}^{4}(s_1)\delta_{2}^{3}(s_2)\right)$ and expected payoff  will be $a_1 + a_3$. We assume $a_3 = a_4 =  \frac{1}{12}$,then $a_2 = a_1 = \frac{5}{12}$ and  expected payoff is $0.5$.
\end{enumerate}

This implies that under private randomization at one of team, it do not admit Nash equilibrium solution.

Observe that the expected payoff of Team $2$ has improved from $1$ to $0.43$ if $a_1< a_2$ or $a_1>a_2$ and $0.5$ if $a_1 = a_2$ where Team $2$ wishes to minimize the expected payoff.

%If team $2$ does randomization, it may have Nash equilibrium in this game.

%To simplify the analysis, we assume $a_3 = a_4 = \frac{1}{12}$, $a_1= \frac{5}{18}$ $a_2= \frac{10}{18}$.
\begin{table}
\begin{center}
\begin{tabular}{ |l|l| }
  \hline
$\delta_{1}^{1}(s_1)\delta_{2}^{1}(s_2)$ & $16.33 $\\ \hline
$\delta_{1}^{2}(s_1)\delta_{2}^{1}(s_2)$ & $ 21.81$ \\   \hline
$\delta_{1}^{3}(s_1)\delta_{2}^{1}(s_2)$ & $ 8.10$ \\  \hline
$\delta_{1}^{4}(s_1)\delta_{2}^{1}(s_2)$ & $ 15.87$ \\  \hline
$\delta_{1}^{1}(s_1)\delta_{2}^{2}(s_2)$ & $ 11.92$ \\ \hline
$\delta_{1}^{2}(s_1)\delta_{2}^{2}(s_2)$ & $ 12.32$ \\   \hline
$\delta_{1}^{3}(s_1)\delta_{2}^{2}(s_2)$ & $ 10.83$ \\  \hline
$\delta_{1}^{4}(s_1)\delta_{2}^{2}(s_2)$ & $ 14.36$ \\  \hline
$\delta_{1}^{1}(s_1)\delta_{2}^{3}(s_2)$ & $ 18.38$ \\ \hline
$\delta_{1}^{2}(s_1)\delta_{2}^{3}(s_2)$ & $ 1.97$ \\   \hline
$\delta_{1}^{3}(s_1)\delta_{2}^{3}(s_2)$ & $ 20$\\  \hline
$\delta_{1}^{4}(s_1)\delta_{2}^{3}(s_2)$ & $ 0.43$  \\  \hline
$\delta_{1}^{1}(s_1)\delta_{2}^{4}(s_2)$ & $ 2.57$\\ \hline
$\delta_{1}^{2}(s_1)\delta_{2}^{4}(s_2)$ & $ 21.27$ \\   \hline
$\delta_{1}^{3}(s_1)\delta_{2}^{4}(s_2)$ & $ 0.57$\\  \hline
$\delta_{1}^{4}(s_1)\delta_{2}^{4}(s_2)$ & $ 30 $ \\  \hline
\end{tabular}
\end{center}
\caption{Two-team zero-sum game with team $1$ private randomization over its strategies $a_3 = a_4 = \frac{1}{12}$, $a_1= \frac{5}{18}$ $a_2= \frac{10}{18}$. }
\label{twoteamprivatrand2}
\end{table}
Now from table \ref{twoteamprivatrand2}, note that the best strategy of $DM_1$ and $DM_2$ in Team $2$  would be to play pure strategy as $\delta_{1}^{4}(s_1)$ and $\delta_{2}^{3}(s_2)$ to minimize the expected  payoff.

Furthermore, one of DM in Team $2$ having private randomness may not lead to improve in the expected  payoff.
To demonstrate this, consider $DM_1$ in Team $2$ has private randomization over his strategies.
\[
\delta_{1}(s_1) = \left\{
\begin{array}{l l}
\delta_{1}^{1}(s_1) & \quad \text{with prob. } b_1 \\
\delta_{1}^{2}(s_1) & \quad \text{with prob. } b_2 \\
\delta_{1}^{3}(s_1) & \quad \text{with prob. } b_3 \\
\delta_{1}^{4}(s_1) & \quad \text{with prob. } b_4 \\
\end{array}
\right.
\]
and $0 \leq b_i \leq 1$ for $1 \leq i \leq 4$, $\sum_{i=1}^{4} b_i = 1$.

 The expected payoff payoff is
 \[
\mathbb{E}\left[\kappa\left( \gamma_{1}(\mu_1) 
\delta_{1}(s_1) \delta_{2}^{k}(s_2) \right)\right] =
 \sum_{i,j}^{4} \mathbb{E}\left[\kappa\left( (\gamma_{1}
(\mu_1) = \gamma_{1}^{i}(\mu_1) ) (\delta_{1}(s_1) 
= \delta_{1}^{j}(s_1)) \delta_{2}^{k}(s_2) \right)\right] a_i b_j,
\]
\begin{table}
\begin{center}
\begin{tabular}{ |l|l| }
  \hline
$\delta_{2}^{1}(s_2)$ & $16.32b_1+ 21.81 b_2 + 8.10b_3 + 15.87b_4 $\\ \hline
$\delta_{2}^{2}(s_2)$ & $11.92b_1 + 12.32b_2 + 10.83b_3 + 14.36b_4 $ \\ \hline
$\delta_{2}^{3}(s_2)$ & $18.38b_1+ 1.97b_2 + 20b_3+ 0.43b_4 $ \\ \hline
$\delta_{2}^{4}(s_2)$ & $2.57b_1 + 21.27b_2 + 0.57b_3 + 30 b_4 $\\ \hline
\end{tabular}
\end{center}
\caption{Two-team zero-sum game with team $1$ and $2$  having private randomization over its strategies  }
\label{twoteamprivatrand3}
\end{table}
We illustrated the expected payoff matrix in table \ref{twoteamprivatrand3}.
If $\DM_1$ in Team $2$ do not play pure strategy, assume 
$b_1 = b_3 =0$, $b_2 =\frac{1}{4}$, and 
$b_4 = \frac{3}{4}$, then  $\DM_2$ of Team $2$ will play  strategy
 $\delta_{2}^{3}(s_2)$ to minimize the expected payoff. 
 Thus expected payoff $0.815$. Note that 
 $0.815 < \bar{V}_{A}$ but greater than pure strategy 
expected payoff (it is clear from table \ref{twoteamprivatrand2}) since expected payoff under pure strategy solution is $0.43$. Here we assume 
 if decision maker in Team $1$ having private randomization with probability vector
$a_3 = a_4 = \frac{1}{12}$, $a_1= \frac{5}{18}$ $a_2= \frac{10}{18}$.

%Let $a_3 = a_4 = \frac{1}{12}$, $a_1= a_2 = \frac{5}{12}$, 
%the expected payoff matrix is
%\begin{center}
%\begin{tabular}{ |l|l| }
%  \hline
%$\delta_{1}^{1}(s_1)\delta_{2}^{1}(s_2)$ & $14.54 $\\ \hline
%$\delta_{1}^{2}(s_1)\delta_{2}^{1}(s_2)$ & $ 20.45$ \\   \hline
%$\delta_{1}^{3}(s_1)\delta_{2}^{1}(s_2)$ & $ 8.03$ \\  \hline
%$\delta_{1}^{4}(s_1)\delta_{2}^{1}(s_2)$ & $ 15.92$ \\  \hline
%$\delta_{1}^{1}(s_1)\delta_{2}^{2}(s_2)$ & $ 11.91$ \\ \hline
%$\delta_{1}^{2}(s_1)\delta_{2}^{2}(s_2)$ & $ 12.63$ \\   \hline
%$\delta_{1}^{3}(s_1)\delta_{2}^{2}(s_2)$ & $ 10.83$ \\  \hline
%$\delta_{1}^{4}(s_1)\delta_{2}^{2}(s_2)$ & $ 14.43$ \\  \hline
%$\delta_{1}^{1}(s_1)\delta_{2}^{3}(s_2)$ & $ 18.37$ \\ \hline
%$\delta_{1}^{2}(s_1)\delta_{2}^{3}(s_2)$ & $ 2.08$ \\   \hline
%$\delta_{1}^{3}(s_1)\delta_{2}^{3}(s_2)$ & $ 20$\\  \hline
%$\delta_{1}^{4}(s_1)\delta_{2}^{3}(s_2)$ & $ 0.5$  \\  \hline
%$\delta_{1}^{1}(s_1)\delta_{2}^{4}(s_2)$ & $ 2.60$\\ \hline
%$\delta_{1}^{2}(s_1)\delta_{2}^{4}(s_2)$ & $ 18.16$ \\   \hline
%$\delta_{1}^{3}(s_1)\delta_{2}^{4}(s_2)$ & $ 0.5$\\  \hline
%$\delta_{1}^{4}(s_1)\delta_{2}^{4}(s_2)$ & $ 30 $ \\  \hline
%\end{tabular}
%\end{center}
%Similar way one of decision maker having private randomization
% do not improve expected cost.
%But having common randomness or joint randomization of DMs
% in team $2$ will improve the expect cost 
%to $0.5$. 
\subsubsection{Role of common randomness independent of $\xi$}
Now consider common randomness independent of $\xi$ is provided to $\DM_1$ and $\DM_2$
of team $2$, i.e.
Team $2$ does joint randomization over its strategy 
then best for for team $2$ to put positive mass on 
strategies $(\delta_{1}^{4}(s_1),\delta_{2}^{3}(s_2))$
or  $(\delta_{1}^{3}(s_1),\delta_{2}^{4}(s_2))$. 
 Otherwise its expected payoff more than pure strategy 
 (it is clear from table \ref{twoteamexpectpay1}).
In discrete team vs team  zero-sum game with common randomness, do not admit Nash equilibrium 
solution. It also lead to improve in the expected payoff.

%\subsection{Two-team zero-sum game}

\subsection{Example: LQG  team vs team zero-sum game}
\label{app:LQG-team-vs-team-zero-sumgame}

Now, we illustrate an example of LQG zero-sum team vs team game and show that common randomness 
independent of environment $\xi$ does not benefit.  We also demonstrate that common 
randomness dependent on $\xi$ benefit a team having extra randomness.

Consider two team LQG zero sum game, Team $1$  and Team $2$ consists of a decision maker 
and two decision makers, respectively. Let $\xi = [\mu_1, s_1, s_2 ]^{T} $ denote an 
environment or state of nature; it is random vector 
having probability distribution  $N(0, \Sigma)$, $\Sigma$ is covariance matrix.
Let $y_{i} = \eta_{i}(\xi)$ be the observations about $\xi$ available at decision maker $i$ of Team $1$, for $i = 1$; $z_j = \zeta_j(\xi) $ represents the observations about $\xi$ available at decision maker $j$ of Team $2$, for $j = 1, 2$. Mathematical simplicity, we assume $y_{1} = \eta_{1}(\xi) = \mu_1$,
$z_{j} = \zeta_{j}(\xi) = s_j$, $j = 1,2.$ 
In standard LQG two-team zero-sum game decision rule is defined as follows.
\begin{equation*}
\gamma_i: y_i \rightarrow u_i,
\end{equation*}
 $\gamma_i \in \Gamma_i$ and $u_i \in U_i$ for $i = 1$;
\begin{equation*}
\delta_j: z_j \rightarrow v_j,
\end{equation*}
$\delta_j \in \Delta_j $ and $v_j \in V_j $ for $j = 1,2$.

The optimal decision rule $\left(u_1^{*} = \gamma_{1}^{*}(y_1), v_{1}^{*} = \delta_{1}^{*}(z_1),v_{2}^{*} = \delta_{2}^{*}(z_2)  \right)$ such that
\begin{equation}
\mathcal{J}_{ZS, LQG}(u_1, v_1^{*}, v_2^{*}) \leq \mathcal{J}_{ZS,LQG}(u_1^{*}, v_1^{*}, v_2^{*}) \leq \mathcal{J}_{ZS,LQG}(u_1^{*}, v_1, v_2),
\label{eq:ZS_LQG_SaddlePt}
\end{equation}
for all $u_1 \in U_1$, $v_1 \in V_1$ and $v_2 \in V_2$; $\mathcal{J}_{ZS,LQG}(u_1, v_1, v_2) = \mathbb{E}_{\xi}[\kappa(u_1, v_1,v_2 , \xi) ]$.

The cost function:
\begin{eqnarray}
\nonumber
 \kappa(u_1, v_1, v_2 , \xi) & = & \kappa(\theta , \xi), \\
 & = &  \theta^{T}B \theta + 2 \theta^{T}S \xi,
\end{eqnarray} 
where $\theta = [u_1, v_1, v_2]^{T}$, $B = \left[ \begin{array}{ccc}
-1 & r_{11} & r_{12}  \\
r_{11} & 1 & q_{12} \\
r_{12} & q_{12} & 1  \end{array} \right], $
here $r_{11}$ and $r_{12}$ characterizes the coupling among teams, that is $r_{11}$ and $r_{12}$ is coupling of $\DM_{1}$ of Team $1$ with $\DM_1$ and $\DM_2$ of Team $2$ respectively. And $q_{12}$ denotes coupling among $\DM_1$ and $\DM_2$ of Team $2$. Moreover, we assume that Team $1$ seeks to maximize the expected payoff and Team $2$ seeks to minimize the expected payoff. It is required that the cost function $\mathbb{E}_{\xi}[\kappa(u_1, v_1,v_2 , \xi) ]$ to be concave in $u_1$ and convex in $v_1$ and $v_2$. Hence, we assume $1 - q_{12}^{2}  > 0$ and $S = \left[  \begin{array}{ccc}
1 & 0 & 0 \\
0 & -1 & 0 \\
0 & 0 & -1 \end{array} \right]$.
 
Two-team LQG zero-sum game admits a saddle point solution (for which we refer the
reader to  \cite[lemma 3.1, 3.2, theorem 3.1]{Ho74}\textit{•}), i.e.
\begin{equation}
\max_{u_1 \in U_1} \min_{(v_1, v_2) \in V} \mathbb{E}_{\xi}[\kappa(u_1, v_1,v_2 , \xi) ] =  \min_{(v_1, v_2) \in V} \max_{u_1 \in U_1} \mathbb{E}_{\xi}[\kappa(u_1, v_1,v_2 , \xi) ].
\label{eq: ZS_LQG_max_min}
\end{equation}
$ V = V_1 \times V_2$.
Since, in static LQG problem, decision variable are linear function of observations available at decision makers, $u_{1} = \gamma_{i}(y_{1}) = \alpha_{11} y_{1}$, $v_j =  \delta_j(z_j) = \alpha_{2j} z_j$, $j = 1,2.$

Re-writing relation of $\theta$ and observations $y_{1}$, $z_1$ $z_2$ more compactly, we have $\theta = A \tilde{y}$, where  $A = \left[ \begin{array}{ccc}
\alpha_{11} & 0 & 0  \\
0 & \alpha_{21} & 0  \\
 0 & 0 & \alpha_{22} \end{array} \right]$, and $\tilde{y} = [y_{1}, z_{1}, z_{2}]^{T}$. The expected cost function is 
 \begin{eqnarray}
\nonumber
\mathcal{J}_{ZS, LQG}(\alpha_{11}, \alpha_{21}, \alpha_{22}) & = &  \mathbb{E}_{\xi}[\tilde{y}^{T} A^{T} B A \tilde{y} + 2\tilde{y}^{T}A^{T}S \xi] \\
 & = & \Tr [ A^{T} B A \Sigma + 2 A^{T} S \Sigma].
\label{eq:ZS_LQG_EC}
\end{eqnarray}
Equality in \eqref{eq:ZS_LQG_EC} follows from $\tilde{y} = \xi$ and $\xi \sim N(0,\Sigma).$
Then we obtain from \eqref{eq: ZS_LQG_max_min}, 
\begin{equation}
\max_{\alpha_{11}} \min_{\alpha_{21}, \alpha_{22}} \mathcal{J}_{ZS, LQG}(\alpha_{11}, \alpha_{21}, \alpha_{22})  = \min_{\alpha_{21}, \alpha_{22}} \max_{\alpha_{11}} \mathcal{J}_{ZS, LQG}(\alpha_{11}, \alpha_{21}, \alpha_{22})
\end{equation}

An objective of zero-sum two team LQG game is to determine $(\alpha_{11}^{*}, \alpha_{21}^{*}, \alpha_{22}^{*})$ such that  

\begin{equation*}
\mathcal{J}_{ZS, LQG}(\alpha_{11}, \alpha_{21}^{*}, \alpha_{22}^{*}) \leq \mathcal{J}_{ZS,LQG}(\alpha_{11}^{*}, \alpha_{21}^{*}, \alpha_{22}^{*}) \leq \mathcal{J}_{ZS,LQG}(\alpha_{11}^{*}, \alpha_{21}, \alpha_{22}),
%\label{eq:ZS_LQG_SaddlePt}
\end{equation*}
 will be satisfied for $\alpha_{11}, \alpha_{21}, \alpha_{22} \in \mathbb{R}$.
\subsubsection{Discussion on matrix $B$ } 
In matrix $B$, we have coupling parameter $r_{11}$, $r_{12}$ and $q_{12}$. If $r_{11} = r_{12} = q_{12} = 0$, there is no coupling among Team $1$ and $2$, as well as among decision makers of Team $2$. This is not at all interesting. If $r_{11} = r_{12}  = 0 $, then there is no coupling among team $1$ and $2$. Problem becomes  team decision problem. Hence we suppose $r_{11}, r_{12}, q_{12} \neq 0.$

Next, we analyze the role of common randomness in LQG two-team zero-sum game. We describe two cases as follows.
\begin{itemize}
\item \textbf{Case I:} Common randomness independent of $\xi$.
\item \textbf{Case II:} Common randomness dependent on $\xi$.
\end{itemize}
\subsubsection{Common randomness independent of $\xi$}
\label{app:common-info-indep-LQG}
\begin{proposition}
In LQG two-team zero-sum stochastic game, common randomness independent of $\xi$ do not benifit the team.
\end{proposition}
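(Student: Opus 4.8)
The plan is to derive the statement as a corollary of two facts already in hand: the two-team LQG zero-sum game possesses a saddle point (by \cite[Lemmas~3.1, 3.2 and Theorem~3.1]{Ho74}, as recalled around \eqref{eq: ZS_LQG_max_min}), and the earlier theorem asserting that when a zero-sum team game admits a saddle point, randomness independent of $\xi$ benefits neither team. Combining the two yields the claim at once. Nonetheless I would also record a direct argument in the LQG setting, both to keep the appendix self-contained and to expose \emph{why} the coordinator's signal is wasted.

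For the direct argument I would introduce a common signal $w = [w_1, w_2, w_3]^{T}$, jointly Gaussian, zero mean, with $w \indep \xi$ and (as we may) covariance $\Sigma_w \succ 0$, where $w_1$ is delivered to $\DM_1$ of Team~1 and $w_2, w_3$ to $\DM_1, \DM_2$ of Team~2. As in the rest of this appendix I restrict to affine decision rules, which remain optimal: $u_1 = \alpha_{11} y_1 + \beta_1 w_1$, $v_1 = \alpha_{21} z_1 + \beta_{21} w_2$, $v_2 = \alpha_{22} z_2 + \beta_{22} w_3$, so $\theta = A\tilde y + D w$ with $A$ the diagonal matrix of \eqref{eq:ZS_LQG_EC} and $D = \diag(\beta_1, \beta_{21}, \beta_{22})$. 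Since $\tilde y = \xi$, $w \indep \xi$ and $\mathbb{E}[w]=0$, every cross term between $A\tilde y$ and $Dw$ in $\mathbb{E}[\theta^{T} B \theta]$ vanishes, as does $\mathbb{E}[(Dw)^{T} S \xi]$, so that
\[
\mathcal{J}_{ZS, LQG} = \Tr[A^{T} B A \Sigma + 2 A^{T} S \Sigma] + \beta^{T}(B \circ \Sigma_w)\,\beta,
\]
where $\beta = (\beta_1, \beta_{21}, \beta_{22})^{T}$ and $B \circ \Sigma_w$ denotes the entrywise (Schur) product. The first term is exactly the objective of the randomness-free game; the second depends only on $\beta$.

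I would then observe that Team~1 maximizes over its block $(\alpha_{11}, \beta_1)$ and Team~2 minimizes over $(\alpha_{21}, \alpha_{22}, \beta_{21}, \beta_{22})$; because the objective splits additively and each team's $\alpha$- and $\beta$-coordinates form disjoint blocks, the joint max--min decouples into a sum of two independent max--min problems, the $\alpha$-problem returning $\mathcal{J}_{\mathrm{TO}_{ZS}}$ for the LQG game. In the $\beta$-problem the $(1,1)$ entry of $B \circ \Sigma_w$ is $-(\Sigma_w)_{11} < 0$ and its lower-right $2\times 2$ block is $B_{22} \circ (\Sigma_w)_{22} \succ 0$ (Schur product theorem, using $B_{22} = \left(\begin{smallmatrix}1 & q_{12}\\ q_{12} & 1\end{smallmatrix}\right) \succ 0$ under the standing hypothesis $1 - q_{12}^{2} > 0$), so $\beta^{T}(B\circ\Sigma_w)\beta$ is strictly concave in $\beta_1$ and strictly convex in $(\beta_{21},\beta_{22})$. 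Minimizing over $(\beta_{21},\beta_{22})$ for fixed $\beta_1$ and then maximizing over $\beta_1$ gives a Schur-complement value $\le 0$ for every $\beta_1$, hence supremum $0$ attained only at $\beta = 0$. Thus the $\beta$-problem contributes $0$ and $\mathcal{J}_{ZS, LQG} = \mathcal{J}_{\mathrm{TO}_{ZS}}$, i.e. the common randomness confers no advantage.

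I expect the only non-routine points to be: (i) justifying the decoupling of the joint max--min into the $\alpha$- and $\beta$-subproblems, which rests on the additive separability above together with the fact that a team's randomization parameters sit in a coordinate block disjoint from the other team's; and (ii) confirming the $\beta$-subgame is genuinely saddle-shaped -- strictly concave in Team~1's parameter, strictly convex in Team~2's -- so that its unique saddle point is $\beta = 0$ with value $0$, for which the sign of $B_{11}$, positive definiteness of $B_{22}$, and the Schur product theorem are exactly what is needed. Everything else is bookkeeping.
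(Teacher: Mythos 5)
Your proposal is correct, and its core computation is the same as the paper's: use $w\indep\xi$, zero means, and linear policies to split the expected cost into the randomness-free objective plus a quadratic form in the randomization coefficients, then argue that form is optimized at zero. The differences are in scope and packaging. The paper hands the common signal only to Team~2 (a scalar $\omega$, so $\beta=[0,\beta_{21},\beta_{22}]^T$), which makes the extra term $\Tr[\beta^T B\beta\,\Sigma_2]$ a nonnegative quadratic in Team~2's coefficients alone (the lower-right block of $B$ being positive definite since $1-q_{12}^2>0$), minimized trivially at $\beta=0$; no saddle analysis in $\beta$ is needed. You instead give every DM, on both teams, a component of a jointly Gaussian vector with general covariance, which is genuinely more general and buys a stronger statement, but then requires the extra machinery you correctly identify: the Schur product theorem for positive definiteness of the Team~2 block of $B\circ\Sigma_w$, the sign of the $(1,1)$ entry for strict concavity in $\beta_1$, and the Schur-complement argument showing the concave--convex $\beta$-subgame has value $0$ only at $\beta=0$; your decoupling of the joint max--min into the $\alpha$- and $\beta$-subproblems is valid because the inner minimization separates additively over Team~2's disjoint coordinate blocks. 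Your first route---quoting the general theorem that, given a saddle point, randomness independent of $\xi$ helps neither team---is also legitimate and shorter; the paper does not use it as the proof of this proposition (it only alludes to the general result in a remark) but instead gives the explicit LQG computation, presumably to exhibit the vanishing of the randomization gains concretely.
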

\begin{proof}
Consider a coordinator provides  common randomness which is independent
of environment $\xi$  to the decision makers of teams. 
For mathematical simplicity, we assume common randomness is available at one of team, say  Team $2$.
The common randomness provided to decision maker $\DM_1$ and $\DM_2$ of team $2$ is represented as  $\omega$, and also $\omega \amalg \xi $. The decision rule of a decision maker of Team $1$ is
\begin{equation*}
\gamma_{1} : y_{1} \rightarrow u_{1},
\end{equation*}
and decision rule of Team $2$ decision makers are 
\begin{equation*}
\delta_{j} : z_{j} \times \omega \rightarrow v_{j},
\end{equation*}
$j = 1,2$. Actions of decision makers are
\begin{equation*}
u_{1} = \gamma_{1}(y_1) = \alpha_{11} y_{1},
\end{equation*}
\begin{equation*}
v_{j} = \delta_{j}(z_j, \omega) =  \alpha_{2j} z_{j} + \beta_{2j} \omega,
\end{equation*}
for $j = 1,2$. Rewriting above expression, we obtain
\begin{equation*}
\theta = A \tilde{y} + \beta \omega,
\end{equation*}
here, $\theta = 
 \left[ \begin{array}{c}
 u_{1}  \\
 v_{1}  \\
 v_{2} \end{array} \right] $,
 $A = \left[ \begin{array}{ccc}
\alpha_{11} & 0 & 0  \\
0 & \alpha_{21} & 0  \\
 0 & 0 & \alpha_{22} \end{array} \right] $, 
 $\tilde{y} = 
  \left[ \begin{array}{c}
 y_{1}  \\
 z_{1}  \\
 z_{2} \end{array} \right]$, $\beta = \left[ \begin{array}{c}
 0  \\
 \beta_{21}  \\
 \beta_{22} \end{array} \right]$.
 
The expected payoff of LQG two team zero-sum game with common randomness is
\begin{eqnarray}
\nonumber
\mathcal{J}_{ZS, CR, LQG} (\alpha_{11}, \alpha_{21}, \alpha_{22}, \beta)& = & \mathbb{E}_{\xi}[\tilde{y}^{T} A^{T} B A \tilde{y} + 2 \tilde{y}^{T}A^{T}S \xi + 2 \tilde{y}^{T} A^{T} B \beta \omega + \omega^{T} \beta^{T} B \beta \omega + 2 \omega^{T} \beta^{T} S \xi], \\
 & = &  \Tr [ A^{T} B A \Sigma + 2 A^{T} S \Sigma + \beta^{T} B \beta \Sigma_{2}]. 
 \label{eq:ZS_CR_LQG1}
% & = & \min_{A} \Tr[ A^{T} B A \Sigma + 2 A^{T} S \Sigma ] + \min_{Z} \Tr[Z^{T} B Z \Sigma_{2}].
  \label{eq:ZS_CR_LQG2}
\end{eqnarray} 
Equality in \eqref{eq:ZS_CR_LQG1} because $\omega \amalg \xi$, $\omega \sim N(0, \Sigma_2).$ 
\begin{eqnarray*}
\nonumber
\max_{\alpha_{11}} \min_{\alpha_{21}, \alpha_{22}, \beta_{21}, \beta_{22}} \mathcal{J}_{ZS, CR, LQG} (\alpha_{11}, \alpha_{21}, \alpha_{22}, \beta_{21}, \beta_{22})  =  \max_{\alpha_{11}} \min_{\alpha_{21}, \alpha_{22}, \beta_{21}, \beta_{22}}\Tr [ A^{T} B A \Sigma + 2 A^{T} S \Sigma + \beta^{T} B \beta \Sigma_{2} ], \\
\nonumber
 =  \max_{\alpha_{11}} \min_{\alpha_{21}, \alpha_{22}}\Tr [ A^{T} B A \Sigma + 2 A^{T} S \Sigma] + \min_{\beta_{21}, \beta_{22}} \Tr[\beta^{T} B \beta \Sigma_{2}].
\end{eqnarray*}
 
 Clearly, from  above expression, minimization of  $ \Tr[\beta^{T} B \beta \Sigma_{2}]$  attained at $\beta $ equals to zero, i.e.$\beta_{11} = 0$, $\beta_{21} = 0$, $\beta_{22} = 0$ for given $B$ and $\Sigma_2 > 0.$
\begin{eqnarray*}
\nonumber
\max_{\alpha_{11}} \min_{\alpha_{21}, \alpha_{22}, \beta_{21}, \beta_{22}} \mathcal{J}_{ZS, CR, LQG} (\alpha_{11}, \alpha_{21}, \alpha_{22}, \beta_{21}, \beta_{22}) 
 & = &  \max_{\alpha_{11}} \min_{\alpha_{21}, \alpha_{22}}\Tr [ A^{T} B A \Sigma + 2 A^{T} S \Sigma]\\
& = & \max_{\alpha_{11}} \min_{\alpha_{21}, \alpha_{22}} \mathcal{J}_{ZS,  LQG} (\alpha_{11}, \alpha_{21}, \alpha_{22}) \\
& = &  \min_{\alpha_{21}, \alpha_{22}, \beta_{21}, \beta_{22}} \max_{\alpha_{11}} \mathcal{J}_{ZS, CR, LQG} (\alpha_{11}, \alpha_{21}, \alpha_{22}, \beta_{21}, \beta_{22}) 
\end{eqnarray*}
  Hence we conclude that common randomness independent of $\xi$ do not benefit the team having common randomness.
\end{proof}
 
\subsubsection{Common randomness dependent on $\xi$}
\label{app:common-randomness-depend-envi-LQG}
Suppose  the common randomness available at decision makers of Team $2$ of two-team LQG zero-sum game; it is denoted as $\omega$.
The decision rule of a decision maker in Team $1$ is
\begin{equation*}
\gamma_{1} : y_{1} \rightarrow u_{1},
\end{equation*}
and decision rule of Team $2$ decision makers are 
\begin{equation*}
\delta_{j} : z_{j} \times \omega \rightarrow v_{j},
\end{equation*}
$j = 1,2$. Actions of decision makers are
\begin{equation*}
u_{1} = \gamma_{1}(y_1) = \alpha_{11} y_{1},
\end{equation*}
\begin{equation*}
v_{j} = \delta_{j}(z_j, \omega) =  \alpha_{2j} z_{j} + \beta_{2j} \omega,
\end{equation*}
for $j = 1,2$. We have6
\begin{equation*}
\theta = A \tilde{y} + \beta \omega,
\end{equation*}
here, $\theta = 
 \left[ \begin{array}{c}
 u_{1}  \\
 v_{1}  \\
 v_{2} \end{array} \right] $,
 $A = \left[ \begin{array}{ccc}
\alpha_{11} & 0 & 0  \\
0 & \alpha_{21} & 0  \\
 0 & 0 & \alpha_{22} \end{array} \right] $, 
 $\tilde{y} = 
  \left[ \begin{array}{c}
 y_{1}  \\
 z_{1}  \\
 z_{2} \end{array} \right]$, $\beta = \left[ \begin{array}{c}
 0  \\
 \beta_{21}  \\
 \beta_{22} \end{array} \right]$.

 Moreover 
it is assume that the common randomness is dependent on an environment $\xi$. Hence $\omega$ is function of $\xi$, that is $\omega = f(\xi)$;
$f(\cdot)$ is measurable function. Let $f$ be the linear function, then
\begin{eqnarray*}
\omega &=& f(\xi) 
        =  \phi_{11} \mu_1 + \phi_{21} s_1 + \phi_{22} s_2 \\
       & = & \Phi^{T} \tilde{y} 
        =  \Phi^{T} \xi.
\end{eqnarray*}
Where $\Phi = [\phi_{11}, \phi_{21}, \phi_{22}]^{T}$, $\tilde{y} = \xi$ and $\xi \sim N(0, \Sigma)$.
The expected cost functional is
\begin{eqnarray}
\nonumber 
\mathcal{J}_{ZS, CR, LQG} (\alpha_{11}, \alpha_{21}, \alpha_{22}, \beta_{21}, \beta_{22})& = & \mathbb{E}_{\xi}[\tilde{y}^{T} A^{T} B A \tilde{y} + 2 \tilde{y}^{T}A^{T}S \xi + 2 \tilde{y}^{T} A^{T} B \beta \omega + \omega^{T} \beta^{T} B \beta \omega + 2 \omega^{T} \beta^{T} S \xi], \\
 & = &  \Tr [ A^{T} B A \Sigma + 2 A^{T} S \Sigma + 2 A^{T} B \tilde{\beta} \Sigma + \tilde{\beta}^{T} B \tilde{\beta} \Sigma + 2 \tilde{\beta}^{T} S \Sigma]. 
 \label{eq:ZS_CR_LQG1}
% & = & \min_{A} \Tr[ A^{T} B A \Sigma + 2 A^{T} S \Sigma ] + \min_{Z} \Tr[Z^{T} B Z \Sigma_{2}].
  \label{eq:ZS_CR_LQGd}
\end{eqnarray} 

In \eqref{eq:ZS_CR_LQGd}, $\tilde{\beta} = \beta \Phi^{T}$. Goal is to  find $(\alpha_{11}^{*}, \alpha_{21}^{*}, \alpha_{22}^{*}, \beta_{21}^{*}, \beta_{22}^{*})$ such that 
\begin{equation*}
\mathcal{J}_{ZS, CR, LQG} (\alpha_{11}, \alpha_{21}^{*}, \alpha_{22}^{*}, \beta_{21}^{*}, \beta_{22}^{*})
\leq 
\mathcal{J}_{ZS, CR, LQG} (\alpha_{11}^{*}, \alpha_{21}^{*}, \alpha_{22}^{*}, \beta_{21}^{*}, \beta_{22}^{*})
\leq 
\mathcal{J}_{ZS, CR, LQG} (\alpha_{11}^{*}, \alpha_{21}, \alpha_{22}, \beta_{21}, \beta_{22})
\end{equation*}
for $\alpha_{11},\alpha_{21},\alpha_{22}, \beta_{21}, \beta_{22} \in \mathbb{R}$.

Source of information (source of common randomness) can act as a \textit{mole} or \textit{consultant} depending on type of information it provides. If source of information is a \textit{mole} then $\omega = \phi_{11} \mu_1$. It implies $\phi_{21} = 0$, and $\phi_{22} = 0$. If source of information is \textit{consultant}, then  $\omega = \phi_{21}s_1 + \phi_{22}s_2$. We will investigate two different cases based on source of information and types of information it provides.

a) Suppose the source of information is a mole or spy and it provide information (common randomness) $\omega = \phi_{11} \mu_1$. Let $\mathcal{J}^{a,*}_{ZS, CR, LQG} $ denote the saddle point solution of LQG two-team zero-sum game with common randomness when source of common randomness to Team $2$ decision makers  is spy.

b) Let $\mathcal{J}^{b,*}_{ZS, CR, LQG} $ represents the saddle point solution of LQG two-team zero-sum game with common randomness when source of common randomness to Team $2$ decision makers  is consultant and $\omega = \phi_{21}s_1 + \phi_{22}s_2$.

 Intuitively, we expect to have following inequalities.
\begin{equation}
 \mathcal{J}^{a,*}_{ZS, CR, LQG} \leq \mathcal{J}^{*}_{ZS, LQG}.
\label{eq:ZS_CR_LQG_Mole}
\end{equation}
\begin{equation}
 \mathcal{J}^{b,*}_{ZS, CR, LQG} \leq \mathcal{J}^{*}_{ZS, LQG}.
\label{eq:ZS_CR_LQG_Consult}
\end{equation}
Note $\mathcal{J}^{*}_{ZS, LQG}$ is saddle point solution of LQG two-team zero-sum game with no common randomness.

From \eqref{eq:ZS_CR_LQGd}, analytically, it is difficult to prove the inequalities in \eqref{eq:ZS_CR_LQG_Mole}, \eqref{eq:ZS_CR_LQG_Consult}. Hence we conjecture result in \eqref{eq:ZS_CR_LQG_Mole}, \eqref{eq:ZS_CR_LQG_Consult}.
Now we present numerical results and show that above inequalities  are true.
Let $\Sigma = \left( \begin{array}{ccc}
\sigma_{\mu_1}^{2} & \sigma_{\mu_1,s_1}^{2} & \sigma_{\mu_1,s_2}^{2} \\
\sigma_{\mu_1,s_1}^{2} & \sigma_{s_1}^{2} & \sigma_{s_1,s_2}^{2} \\
\sigma_{\mu_1,s_2}^{2} & \sigma_{s_1,s_2}^{2} & \sigma_{s_2}^{2} \end{array} \right)$,
Since $\omega$ is scalar, we have $\Sigma_2 =  \sigma_{\omega}^{2}$.
Team cost functional is 
\begin{eqnarray}
\nonumber
\mathcal{J}( \alpha_{11}, \alpha_{21}, \alpha_{22}, \beta_{21}, \beta_{22})= -\alpha_{11}^{2}\sigma_{\mu_1}^{2} +
\alpha_{21}^{2} \sigma_{s_1}^{2}+ \alpha_{22}^{2}\sigma_{s_2}^{2} 
+ 2 r_{11}\alpha_{11}\alpha_{21}  \sigma_{\mu_1,s_1}^{2} 
+ 2 r_{12}\alpha_{11}\alpha_{22} \sigma_{\mu_1,s_2}^{2} \\ \nonumber
+ 2 q_{12} \alpha_{21} \alpha_{22} \sigma_{s_1,s_2}^{2}  
+ 2(r_{11}\alpha_{11}\beta_{21}+ r_{12}\alpha_{11}\beta_{22} ) \sigma_{\mu_1, w}^{2} + 
2 (\alpha_{21}\beta_{21}+ q_{12}\alpha_{21}\beta_{22} ) \sigma_{s_1, w}^{2} \\ \nonumber
+ 2(q_{12}\alpha_{22}\beta_{21}+ \alpha_{22}\beta_{22})  \sigma_{s_2, w}^{2} 
+ (\beta_{21}^{2}+ 2q_{12}\beta_{21}\beta_{22} + \beta_{22}^{2})\sigma_{w}^{2} 
+ 2 \alpha_{11} \sigma_{\mu_1}^2 \\
- 2 \alpha_{21} \sigma_{s_1}^{2} - 2 \alpha_{22} \sigma_{s_2}^{2}
- 2 \beta_{21} \sigma_{s_1, w}^{2} - 2 \beta_{22} \sigma_{s_2, w}^{2}.
\label{eq:ZS_LQG_NEX}
\end{eqnarray}
We know that LQG two-team zero-sum game has saddle point solution, that is 
\begin{equation}
\max_{\alpha_{11}} \min_{\alpha_{21}, \alpha_{22}, \beta_{21}, \beta_{22}}\mathcal{J}_{ZS,CR,LQG}( \alpha_{11}, \alpha_{21}, \alpha_{22}, \beta_{21}, \beta_{22}) 
= \min_{\alpha_{21}, \alpha_{22}, \beta_{21}, \beta_{22}} \max_{\alpha_{11}} \mathcal{J}_{ZS,CR,LQG}( \alpha_{11}, \alpha_{21}, \alpha_{22}, \beta_{21}, \beta_{22}).
\end{equation}
To evaluate $\max_{\alpha_{11}} \min_{\alpha_{21}, \alpha_{22}, \beta_{21}, \beta_{22}}\mathcal{J}_{ZS,CR,LQG}( \alpha_{11}, \alpha_{21}, \alpha_{22}, \beta_{21}, \beta_{22}) $, we differentiate \eqref{eq:ZS_LQG_NEX} with respect to $\alpha_{11}, \alpha_{21}, \alpha_{22}, \beta_{21}, \beta_{22}$ and equate to $0$. We obtain linear systems of equations as follows.
\begin{eqnarray*}
\left[  \begin{array}{ccccc}
-\sigma_{\mu_1}^{2} & r_{11}\sigma_{\mu_1,s_1}^{2} & r_{12}\sigma_{\mu_1,s_2}^{2} & r_{11} \sigma_{\mu_1, w}^{2} & r_{12}\sigma_{\mu_1, w}^{2}\\
r_{11}\sigma_{\mu_1,s_1}^{2} & \sigma_{s_1}^{2} & q_{12}\sigma_{s_{1}s_2}^{2} & \sigma_{s_1, w}^{2} & q_{12}\sigma_{s_1,w}^{2}\\
r_{12}\sigma_{\mu_1, s_2}^{2} & q_{12}\sigma_{s_1, s_2}^{2} & \sigma_{s_2}^{2} & q_{12}\sigma_{s_2, w}^{2} & \sigma_{s_2, w}^{2} \\
 r_{11} \sigma_{\mu_1, w}^{2}& \sigma_{s_1, w}^{2} & q_{12}\sigma_{s_2,w}^{2} & \sigma_{w}^{2} & q_{12}\sigma_{w}^{2} \\
r_{12}\sigma_{\mu_1, w}^{2} & q_{12}\sigma_{s_1, w}^{2} & \sigma_{s_2, w}^{2} & q_{12}\sigma_{w}^{2} & \sigma_{w}^{2}
 \end{array} \right]
\left[  \begin{array}{c}
\alpha_{11} \\
\alpha_{21} \\
\alpha_{22} \\
\beta_{21} \\
\beta_{22}
\end{array} \right]
= 
\left[  \begin{array}{c}
-\sigma_{\mu_1}^{2} \\
\sigma_{s_1}^{2} \\
\sigma_{s_2}^{2} \\
\sigma_{s_1, w}^{2} \\
\sigma_{s_2, w}^{2}
\end{array} \right].
\end{eqnarray*}
Numerically, we compare our result for different values matrix $B$.

1)$B = \left[ \begin{array}{ccc}
-1 & \frac{1}{4} & \frac{1}{4}  \\
\frac{1}{4} & 1 & \frac{1}{2}  \\
 \frac{1}{4} & \frac{1}{2} & 1 \end{array} \right]$,
2)$B = \left[ \begin{array}{ccc}
-1 & \frac{1}{4} & \frac{1}{2}  \\
\frac{1}{4} & 1 & \frac{1}{2}  \\
 \frac{1}{2} & \frac{1}{2} & 1 \end{array} \right]$,

We assume
$\Sigma =  
\left[  \begin{array}{ccc}
2 & \frac{1}{4} & \frac{1}{4} \\
\frac{1}{4} & 1 & \frac{1}{2} \\
\frac{1}{4} & \frac{1}{2} & 1 \end{array} \right] $ for all numerical results.  
 
\textbf{a)} When source of information is a \textit{mole} and $\omega = \phi_{11} \mu_1$, we have
$\mathbb{E}[\omega] = 0$,
\[ \mathbb{E}[\omega^{2}] = \sigma_{\omega}^{2} = \phi_{11}^{2} \sigma_{\mu_1}^{2}.\]
\[ \sigma_{\mu_1, \omega}^{2} = \phi_{11} \sigma_{\mu_1}^{2} \]
\[ \sigma_{s_1, \omega}^{2} = \phi_{11} \mathbb{E} [\mu_1 s_1] = \phi_{11} \sigma_{\mu_1,s_1}^{2}. \]
\[ \sigma_{s_2, \omega}^{2} = \phi_{11} \mathbb{E}[\mu_1 s_2] = \phi_{11} \sigma_{\mu_1,s_2}^{2}. \]

Case 1) $B = \left[  \begin{array}{ccc}
-1 & \frac{1}{4} & \frac{1}{4} \\
\frac{1}{4} & 1 & \frac{1}{2} \\
\frac{1}{4} & \frac{1}{2} & 1 \end{array} \right]$

%$\phi_{11} = \frac{•}{•}1}{4}$, $\phi_{21} = 0$, $\phi_{22} = 0$
%
%$r_{11} = r_{12} = \frac{1}{4}$, $q_{12} = \frac{1}{2}$, 
%$\sigma_{\omega}^{2} = \frac{1}{8}$,
%$\sigma_{\mu_1, \omega}^{2} = \frac{1}{2}$,
%$\sigma_{s_1, \omega}^{2} = \frac{1}{16}$,
%$\sigma_{s_2, \omega}^{2} = \frac{1}{16}$,
%$\sigma_{\mu_1}^{2} = 2$,
%$\sigma_{\mu_1,s_1}^{2} = \frac{1}{4}$,
%$\sigma_{\mu_1,s_2}^{2} = \frac{1}{4}$,
%$\sigma_{s_1}^{2} = 1$,
%$\sigma_{s_2}^{2} = 1$,
%$\sigma_{s_1, s_2}^{2} =  \frac{1}{2}$.
%
%
%$
%\left[  \begin{array}{ccccc}
%-2 & \frac{1}{16} & \frac{1}{16} & \frac{1}{8} & \frac{1}{8} \\
% \frac{1}{16} & 1 & \frac{1}{4} & \frac{1}{16} & \frac{1}{32} \\
%\frac{1}{16}  & \frac{1}{4} & 1 & \frac{1}{32} & \frac{1}{16} \\
%\frac{1}{8} & \frac{1}{16} & \frac{1}{32} & \frac{1}{8} & \frac{1}{16} \\
%\frac{1}{8} & \frac{1}{32} & \frac{1}{16} & \frac{1}{16} & \frac{1}{8}
% \end{array} \right]
%$
After solving linear systems of equation, we have

$\alpha_{11}^{*} = 0.9615$, $\alpha_{21}^{*} = 0.8052$, $\alpha_{22}^{*} = 0.8052$, $\beta_{21}^{*} = -0.7103$, $\beta_{22}^{*} = -0.7103.$
Team cost functional is 
\begin{eqnarray*}
 \mathcal{J}_{ZS,CR,LQG}^{a,*} =  \mathcal{J}_{ZS,CR,LQG}^{a} (\alpha_{11}^{*},\alpha_{21}^{*},\alpha_{22}^{*}, \beta_{21}^{*}, \beta_{22}^{*}) & = &
\max_{\alpha_{11}} \min_{\alpha_{21}, \alpha_{22}, \beta_{21}, \beta_{22}}\mathcal{J}( \alpha_{11}, \alpha_{21}, \alpha_{22}, \beta_{21}, \beta_{22}) \\
& = &
0.4012.
\end{eqnarray*}

Case 2) $B = \left[ \begin{array}{ccc}
-1 & \frac{1}{4} & \frac{1}{2}  \\
\frac{1}{4} & 1 & \frac{1}{2}  \\
 \frac{1}{2} & \frac{1}{2} & 1 \end{array} \right]$.
%
%$
%\left[  \begin{array}{ccccc}
%-2 & \frac{1}{16} & \frac{1}{8} & \frac{1}{8} & \frac{1}{4} \\
% \frac{1}{16} & 1 & \frac{1}{4} & \frac{1}{16} & \frac{1}{32} \\
%\frac{1}{8}  & \frac{1}{4} & 1 & \frac{1}{32} & \frac{1}{16} \\
%\frac{1}{8} & \frac{1}{16} & \frac{1}{32} & \frac{1}{8} & \frac{1}{16} \\
%\frac{1}{4} & \frac{1}{32} & \frac{1}{16} & \frac{1}{16} & \frac{1}{8}
% \end{array} \right]
%$

Solving linear systems of equations we obtain
$\alpha_{11}^{*} = 0.8500$, $\alpha_{21}^{*} =0.8052$, $\alpha_{22}^{*} = 0.8052$ $\beta_{21}^{*} =-0.0693$, $\beta_{22}^{*} = -1.7693$. Evaluating team cost functional
\begin{eqnarray*}
 \mathcal{J}_{ZS,CR,LQG}^{a,*} =  0.2037.
\end{eqnarray*}

\begin{table}

\begin{center}
\begin{tabular}{ |l|l|l| }
\hline
$(r_{11}, r_{12}, q_{12})$ & $(\phi_{11},\phi_{21}, \phi_{22})$   & $\mathcal{J}_{ZS,CR,LQG}^{a,*} $  \\ \hline
$(\frac{1}{4}, \frac{1}{4}, \frac{1}{2})$ & $(\frac{1}{2}, 0, 0)$ & $0.4012$  \\ \hline
$(\frac{1}{4}, \frac{1}{2}, \frac{1}{2})$ & $(\frac{1}{2}, 0, 0 )$ & $0.2037 $  \\
  \hline
\end{tabular}
\caption{With Randomization: Comparison of $\mathcal{J}_{ZS,CR,LQG}^{a,*} $  for different values  of $r_{11}$, $r_{12}$, $q_{12}$. }
\label{table:ZS_CR_LQG_a}
\end{center}

\end{table}
\begin{table}

\begin{center}
\begin{tabular}{ |l|l|l| }
\hline
$(r_{11}, r_{12}, q_{12})$ & $(\phi_{11},\phi_{21}, \phi_{22})$   & $\mathcal{J}_{ZS,CR,LQG}^{b,*} $  \\ \hline
$(\frac{1}{4}, \frac{1}{4}, \frac{1}{2})$ & $(0, \frac{1}{2}, \frac{1}{2})$ & $0.1616$  \\ \hline
$(\frac{1}{4}, \frac{1}{2}, \frac{1}{2})$ & $(0, \frac{1}{2}, \frac{1}{2} )$ & $0.2435 $  \\
  \hline
\end{tabular}
\caption{With Randomization: Comparison of $\mathcal{J}_{ZS,CR,LQG}^{b,*} $  for different values  of $r_{11}$, $r_{12}$, $q_{12}$. }
\label{table:ZS_CR_LQG_b}
\end{center}

\end{table}
\begin{table}

\begin{center}
\begin{tabular}{ |l|l|l| }
\hline
$(r_{11}, r_{12}, q_{12})$ & $(\phi_{11},\phi_{21}, \phi_{22})$   & $\mathcal{J}_{ZS,LQG}^{*} $  \\ \hline
$(\frac{1}{4}, \frac{1}{4}, \frac{1}{2})$ & $(0, 0, 0)$ & $0.598$  \\ \hline
$(\frac{1}{4}, \frac{1}{2}, \frac{1}{2})$ & $(0,0, 0 )$ & $1.8991$  \\
  \hline
\end{tabular}
\caption{Without Randomization: Comparison of $\mathcal{J}_{ZS,LQG}^{*} $  for different values  of $r_{11}$, $r_{12}$, $q_{12}$. }
\label{table:ZS_LQG}
\end{center}

\end{table}

\textbf{b)} When a consultant provides an information, $\omega = \phi_{21} s_1 + \phi_{22} s_2 $.
Note that $\mathbb{E}[w] = 0$,

\[\sigma_{w}^{2} = \mathbb{E}[w^{2}] = \phi_{21}^{2} \sigma_{s_1}^{2} + \phi_{22}^{2} \sigma_{s_2}^{2} + 2 \phi_{21} \phi_{22} \sigma_{s_1, s_2}^{2}.\]
\[\sigma_{\mu_1, w}^{2} = \mathbb{E}[\mu_{1} w] = \phi_{21} \sigma_{\mu_1,s_1}^{2} + \phi_{22} \sigma_{\mu_1,s_2}^{2}.\]
\[\sigma_{s_1, w}^{2} = \mathbb{E}[s_{1} w] = \phi_{21} \sigma_{s_1}^{2} + \phi_{22} \sigma_{s_1,s_2}^{2}.\]
\[\sigma_{s_2, w}^{2} = \mathbb{E}[s_{2} w] = \phi_{21} \sigma_{s_1,s_2}^{2} + \phi_{22} \sigma_{s_2}^{2}.\]
We suppose $\phi_{21} =  \frac{1}{2}$, $\phi_{22} = \frac{1}{2}$.

Case 1) $B = \left[  \begin{array}{ccc}
-1 & \frac{1}{4} & \frac{1}{4} \\
\frac{1}{4} & 1 & \frac{1}{2} \\
\frac{1}{4} & \frac{1}{2} & 1 \end{array} \right]$
Solving linear system of eqaution we have
$\alpha_{11}^{*} = 1.0381$, $\alpha_{21}^{*} = 2$, $\alpha_{22}^{*} = 2$, $\beta_{21}^{*} = -1.391$, $\beta_{22}^{*} =  -1.391$ and team optimal cost  $ \mathcal{J}_{ZS,CR,LQG}^{b,*}  = 0.1616$.

Case 2)$B = \left[ \begin{array}{ccc}
-1 & \frac{1}{4} & \frac{1}{2}  \\
\frac{1}{4} & 1 & \frac{1}{2}  \\
 \frac{1}{2} & \frac{1}{2} & 1 \end{array} \right]$.
 Then $\alpha_{11}^{*} =  1.0515$, $\alpha_{21}^{*} = 2$, $\alpha_{22}^{*} = 2$, $\beta_{21}^{*} = -1.3333$, $\beta_{22}^{*} =  -1.5086$ and team optimal cost  $ \mathcal{J}_{ZS,CR,LQG}^{b,*}  = 0.2435$.

From table \ref{table:ZS_CR_LQG_a}, \ref{table:ZS_CR_LQG_b}, \ref{table:ZS_LQG}, it clear that  inequalities in \eqref{eq:ZS_CR_LQG_Mole},\eqref{eq:ZS_CR_LQG_Consult} satisfy numerically.
Observe that common randomness dependent on $\xi$ provided by either a \textit{mole} or \textit{consultant} benefits the team vs team zero-sum game.

\end{document}